\def\dOi{13(1:3)2017}
\subjclass{F.1.1, F.3.2, F.4.3}
\newcommand{\cat}[1]{\mathsf{#1}}
\newcommand{\bb}[1]{\mathbb{#1}}
\newcommand{\dsty}{\displaystyle}
\newcommand{\blue}[1]{\textcolor{blue}{#1}}
\newcommand{\Set}{\cat{Set}}   
\newcommand{\Pow}{\mathcal{P}}
\newcommand{\Id}{\mathrm{Id}}
\newcommand{\id}{\mathrm{id}}
\newcommand{\T}{\mathcal{T}}
\newcommand{\M}{\mathcal{M}} 
\newcommand{\bbN}{\mathbb{N}}
\newcommand{\bbZ}{\mathbb{Z}}
\newcommand{\bbQodd}{\mathbb{Q}_\text{odd}}
\newcommand{\bbR}{\mathbb{R}}
\newcommand{\Mat}{\mathsf{M}}
\newcommand{\pmat}[1]{\begin{pmatrix}#1\end{pmatrix}}
\renewcommand{\to}{\rightarrow}		
\newcommand{\To}{\Longrightarrow}		
\newcommand{\Ra}{\Rightarrow}
\newcommand{\injR}{\hookrightarrow}
\newcommand{\abstractgoes}[2]{%
\setbox0=\hbox{\ ${\scriptstyle#2}$\ }
\ifdim\wd0<12pt\wd0=12pt\fi
\mathrel{\stackrel{#2}{\rule[2.2pt]{\wd0}{0.6pt}}\mkern-16mu{#1}}
}
\newcommand{\sgoes}[1]{\stackrel{#1}{\longrightarrow}}
\newcommand{\sse}{\subseteq}
\newcommand{\restrict}{\!\!\upharpoonright}  
\newcommand{\tup}[1]{\langle #1 \rangle}
\newcommand{\ol}[1]{\overline{#1}}  
\newcommand{\strms}[1]{{#1}^\omega}
\newcommand{\sig}{\sigma} 
\newcommand{\X}{\mathsf{X}} 
\newcommand{\cns}[1]{[{#1}]} 
\DeclareMathOperator{\hd}{\mathit{hd}}		
\DeclareMathOperator{\tl}{\mathit{tl}}		
\newcommand{\beh}[1]{[\![#1]\!]}		
\newcommand{\nm}[1]{\mathsf{#1}}
\newcommand{\ones}{\nm{ones}}
\newcommand{\nats}{\nm{nats}}
\newcommand{\tail}{\mathrm{tail}}
\newcommand{\bbin}{\mathsf{bbin}}
\newcommand{\Sig}{\Sigma} 
\newcommand{\TSig}{T_\Sigma} 
\newcommand{\monTSig}{{\T}_\Sigma} 
\newcommand{\es}{e} 
\newcommand{\V}{\mathcal{V}} 
\newcommand{\emptyword}{\epsilon}
\newcommand{\zero}[1]{#1(0)}
\newcommand{\term}{\mathit{T}}
\newcommand{\coloneqq}{\mathrel{:=}}
\newcommand{\sdefi}{\mathcal{D}}
\newcommand{\osdefi}{o_{\sdefi}}
\newcommand{\dsdefi}{d_{\sdefi}}
\newcommand{\sem}[1]{[\![ #1 ]\!]}
\newcommand{\causal}{\mathsf{C}}
\newcommand{\syn}[1]{\underline{#1}}
\newcommand{\ext}[1]{#1^*}  
\newcommand{\interp}[1]{\overline{#1}}  
\newcommand{\even}{\mathsf{even}}
\newcommand{\odd}{\mathsf{odd}}
\newcommand{\zip}{\mathsf{zip}}
\theoremstyle{plain}\newtheorem{rema}[thm]{Remark}
\title[Stream Differential Equations]{Stream Differential Equations:\\ Specification Formats and Solution Methods}
\author[H.~H.~Hansen]{Helle Hvid Hansen\rsuper a}	
\address{{\lsuper a}Delft University of Technology and Centrum Wiskunde \& Informatica, Amsterdam}	
\email{h.h.hansen@tudelft.nl}  
\thanks{{\lsuper a}Supported by NWO-Veni grant 639.021.231.}	
\author[C.~Kupke]{Clemens Kupke\rsuper b}	
\address{{\lsuper b}University of Strathclyde}	
\email{clemens.kupke@strath.ac.uk}  
\thanks{{\lsuper b}Supported by EPSRC grant EP/N015843/1.}	
\author[J.~Rutten]{Jan Rutten\rsuper c}	
\address{{\lsuper c}Centrum Wiskunde \& Informatica, Amsterdam, and Radboud University Nijmegen}	
\email{jjmmrutten@gmail.com}  
\keywords{streams, behavioural differential equations, coinduction, coalgebra, linear systems, context-free streams, automatic sequences, bialgebra}
\begin{document}

\begin{abstract}
Streams, or infinite sequences, are infinite objects of a very simple type,  
yet they have a rich theory partly due to their ubiquity in mathematics and 
computer science.
Stream differential equations are a coinductive method for specifying streams
and stream operations, and
their theory has been developed in many papers over the past two decades.
In this paper we present a survey of the many results in this area.
Our focus is on the classification of different formats of stream differential equations,
their solution methods, and the classes of streams they can define.
Moreover, we describe in detail the connection between the so-called syntactic solution method
and abstract GSOS.
\end{abstract}

\maketitle


\section{Introduction}

Streams, or infinite sequences, are infinite objects of
a very simple type, yet they
have a rich theory partly due to their ubiquity.
Streams occur as numerical expansions, data sequences, formal power series,
limit sequences, dynamic system behaviour, formal languages,
ongoing computations, and much more.

Defining the \emph{stream derivative} of a stream
$\sig = (\sig(0),\sig(1),\sig(2), \ldots)$ by
\[
\sig' =  (\sig(1),\sig(2), \sig(3), \ldots)
\]
and the \emph{initial value} of $\sig$ by $\sig(0)$,  one can 
develop a \emph{calculus of streams} in close analogy with classical
calculus in mathematical analysis. Notably,
using the notions of stream derivative and initial value,
we can specify streams by means of \emph{stream differential equations}.

For instance,
the stream differential equation $\sig(0) = 1,\; \sig' = \sig$
has the stream $\sig = (1,1,1,\ldots)$ as its unique solution,
and $\sig(0) = 1,\; \sig' = \sig + \sig$,
where $+$ is the elementwise addition of two streams, defines 
the stream $(2^0,2^1,2^2, \ldots)$ .
Similarly, we can specify stream functions. For example,
$f(\sig)(0) = \sig(0),\; f(\sig)' = f(\sig'')$
(this time using second-order derivatives) defines the function
$f(\sig) = (\sig(0), \sig(2), \sig(4) ,\ldots)$.
In these examples, it is easy to see that the stream differential equations
have a unique solution.
But how about $\tau(0) = 0,\; \tau' = f(\tau)$? A moment's thought reveals that
this equation has several solutions, e.g. $\tau = (0,0,0,\ldots)$
and $\tau = (0,0,1,1,1,\ldots)$.
But what is the difference between this equation and the previous ones?
How can we ensure the existence of unique solutions?
Which classes of streams can be defined using a finite amount of information?
These questions have been studied by several authors in many different
contexts in recent years,
and have led to notions such as rational streams,
context-free streams, and new insights into automatic and regular sequences.

In this paper, we present an overview of the current state-of-the-art 
in formats and solution methods for stream differential equations.
The theoretical basis for stream differential equations is given by 
coalgebra~\cite{Rut:univ-coalg}, but our aim is to give an elementary and
self-contained overview. 
We consider our contribution to be a unified and uniform presentation
of results which are collected from many different sources.
As modest new insights, we mention the results on the expressiveness of 
non-standard formats in Section~\ref{sec:non-standard}.
Another contribution which can be considered new, 
is the detailed analysis of the connection
between the syntactic method and abstract GSOS (in Section~\ref{sec:stream-gsos})
This connection is rather obvious to readers familiar with abstract GSOS, 
but probably less so to the uninitiated reader.

\emph{Overview:}
We start by giving an informal introduction to stream differential equations
in Section~\ref{sec:sde-intro}, and in Section~\ref{sec:prelims} we provide
some basic definitions regarding automata and stream calculus.
Next, in Sections~\ref{sec:simple-specs} through \ref{sec:non-standard},
we shall study in more detail various  
types of stream differential equations, each corresponding to a 
specification format. 
We describe 
solution methods for each of these formats, and
characterise the automata and the classes of streams 
that these families of stream differential equations can define.
The following little table contains some representative examples,
corresponding to Sections \ref{sec:simple-specs} through \ref{sec:non-standard}:
\[
\begin{array}{l|l|l|l}
  \textcolor[rgb]{0.00,0.00,0.00}{\mbox{initial value:}} & \textcolor[rgb]{0.00,0.00,0.00}{\mbox{derivative:}} &
  \textcolor[rgb]{0.00,0.00,0.00}{\mbox{solution:}}  &
    \textcolor[rgb]{0.00,0.00,0.00}{\mbox{type of equation:}}
  \\
  \hline
  \sigma(0) = 1 & \sigma' = \sigma & (1,1,1, \ldots ) &
  \textcolor[rgb]{0.00,0.00,0.00}{\mbox{simple}}
  \\
  \sigma(0) = 1 & \sigma' = \sigma + \sigma  & (2^0,2^1,2^2, \ldots ) &
  \textcolor[rgb]{0.00,0.00,0.00}{\mbox{linear}}
  \\
  \sigma(0) = 1 & \sigma' = \sigma \times \sigma \;\;\; &
  \textcolor[rgb]{0.00,0.00,0.00}{\mbox{Catalan numbers}} &
  \textcolor[rgb]{0.00,0.00,0.00}{\mbox{context-free/algebraic}}
  \\
  \sigma(0) = 1 & \frac{d }{dX}(\sigma) =  \sigma \;\;\; &
  \left(\frac{1}{0!}, \frac{1}{1!}, \frac{1}{2!}, \frac{1}{3!}, \frac{1}{4!},\dots \right)&
  \textcolor[rgb]{0.00,0.00,0.00}{\mbox{non-standard}}
\end{array}
\]
(For the definition of the convolution product $\times$ see
(\ref{eq:convprod}) in Section \ref{sec:sde-intro};  the non-standard derivative
$\frac{d}{dX}$ is defined in Example \ref{ex:representations}.)

In Section~\ref{sec:syntactic-method}, we describe a
concrete syntactic solution method 
for a large class of well-formed stream differential equations, including all those that
we discussed in Sections \ref{sec:simple-specs}-\ref{sec:context-free-specs}.
Finally, in Section \ref{sec:stream-gsos}, a more general, categorical perspective on the theory
of stream differential equations is presented.
In particular, this section places streams and automata in a more general context of algebras, coalgebras and 
so-called distributive laws. 
Note that this is the only section that requires
some basic knowledge of category theory. 
In Section~\ref{sec:rel-work}, we briefly discuss connections with other methods 
for representing streams, 
such as recurrence relations, generating functions and so on.

\emph{Section Interdependency}:
Sections \ref{sec:sde-intro}-\ref{sec:prelims} provide the reader with important prerequisites for the remainder of the article.
Sections~\ref{sec:simple-specs}-\ref{sec:non-standard} can be read independently of each other.
Section 8 can, in principle, be read without Sections \ref{sec:simple-specs}-\ref{sec:non-standard}, but it refers back to earlier sections for examples and motivations.
Section \ref{sec:stream-gsos} can be skipped by readers who are mainly interested in concrete specification formats.
Section 10 relies on Sections \ref{sec:sde-intro}-\ref{sec:non-standard}.

\emph{Related work}:
Here we mention the most important origins of the results in this paper.
A more extensive discussion of related work is found in 
Section~\ref{sec:rel-work}.
Stream differential equations \cite{Rut05:MSCS-stream-calc} came about as a special instance of behavioural
differential equations, for formal power series, which were introduced in \cite{Rut03:TCS-bde}.
Motivation came from the coalgebraic perspective on infinite data structures,
in which streams are a canonical example, but also  from work on language dervatives in classical automata theory,
notably \cite{Brz64} and \cite{Con71}. The idea of developing a calculus of streams in close analogy
to analysis was further inspired by the work on classical calculus in coinductive form in \cite{PavlovicE98}.
The classification of stream differential equations into the families of simple, linear and context-free
systems stems from our joint work with Marcello Bonsangue and Joost Winter, in \cite{BRW:CF-pow} and \cite{WBR:CF-LMCS},
on classifications of behavioural differential equations for streams, languages and formal power series.
The results on non-standard stream calculus come from \cite{KR:cocoop,KNR:SDE},
and the examples on automatic and regular sequences from \cite{KR:autseq}
and \cite{HKRW:k-regular}, respectively.

\emph{Acknowledgements}: 
It should be clear 
from the many references to the literature that our paper builds on the work of many
others. We are, in particular, much indebted to Marcello Bonsangue and Joost Winter for
many years of fruitful  collaboration on 
stream differential equations. A large part of the work presented here
was developed in joint work with them. 
We are also grateful to many other colleagues, with whom we have
worked together in different ways on ideas relating to streams, including: 
Henning Basold, Filippo Bonchi, 
J\"org Endrullis, Herman Geuvers, Clemens Grabmayer, Dimitri Hendriks,
Bart Jacobs, Bartek Klin, Jan-Willem Klop, Dorel Lucanu, Larry Moss, Milad Niqui, 
Grigore Rosu,
Jurriaan Rot, Alexandra Silva, Hans Zantema.

\tableofcontents

\section{Stream Differential Equations}
\label{sec:sde-intro}

In this section, we present several examples of
{\em stream differential equations (SDEs)} and their solutions.
For now the purpose is to get familiarised with the notation of SDEs.
Detailed proofs and solution methods are presented later.

We start by introducing notation and basic definitions on
streams.

\subsection{Basic definitions}
\label{ssec:stream-basics}

A stream over a given set $A$ is a function $\sigma \colon \bbN \to A$
from the natural numbers to $A$, which we will sometimes write as
\[ \sigma = (\sigma(0), \sigma(1), \sigma(2), \ldots) \]
The set of all {streams over $A$} is denoted by
\[
A^\omega = \{ \sigma \mid \sigma\colon \bbN \to A\}
\]
Given a stream  $\sig \in A^\omega$,
we define the {\it initial value} of $\sig$
as $\sig(0)$,
and the {\it derivative} of $\sig$ as the stream
$\sig' = (\sig(1), \sig(2), \sig(3), \ldots)$.
For $a \in A$ and $\sig\in\strms{A}$,
we define $a\!:\!\sig = (a,\sig(0),\sig(1),\sig(2),\ldots)$.
Higher order derivatives $\sig^{(n)}$
are defined inductively for all $n \in \bbN$ by:
\[ \sig^{(0)} = \sigma \quad  \quad \sig^{(n+1)} = (\sig^{(n)})'
\]
Initial value and derivative are also known as
{\it head} and {\it tail}, respectively.

\subsection{Simple examples}

Stream differential equations define a stream in terms of its initial value
and its derivative(s).
As a first elementary example, consider
\begin{equation}\label{eq:sde-ones}
\sig(0)=1, \quad \sig'=\sig.
\end{equation}
which has the stream $\ones = (1,1,1,\ldots)$
as the unique solution.

For a slightly more interesting example consisting of two SDEs
over two stream variables, consider
\begin{equation}\label{eq:sde-alt}
\begin{array}[t]{lcl}
\sig(0)=1, &\quad& \sig'=\tau\\
\tau(0)=0, &\quad& \tau'=\sig
\end{array}
\end{equation}
whose solution is $\sig=(1,0,1,0,\ldots)$ and
$\tau= (0,1,0,1,\ldots)$.

In the above SDEs, derivatives given by a stream variable. Such SDEs are called simple,
and in Section~\ref{sec:simple-specs}, we will characterise the class
of streams that can be specified by finite systems of simple equations.
More generally, we will consider SDEs involving not only variables,
but also operations on streams.

\subsection{Stream operations}
\label{ssec:stream-ops}

We illustrate how to define stream operations, and at the same time
introduce a bit of stream calculus.
Stream calculus is usually defined for streams over the real numbers $\bbR$,
but most definitions hold for more general data domains $A$.

Consider the set $A^\omega$ of streams over a ring $(A,+,-,\cdot,0,1)$.
The stream differential equation:
\begin{equation}\label{eq:sde-plus}
\begin{array}{lclclcl}
(\sig+\tau)(0) &=& \sig(0)+\tau(0),
&\quad &
(\sig+\tau)' &=& \sig'+\tau'
\end{array}
\end{equation}
defines the element-wise addition of two streams,
that is, for all $\sig,\tau\in \strms{A}$,
\begin{equation}\label{eq:ind-plus}
(\sig+\tau)(n) = \sig(n)+\tau(n) \qquad \text{ for all } n\in \bbN.
\end{equation}
(Note that we use the same symbol to denote addition in $A$ and
addition of streams. The typing should be clear from the context.)

Similarly, one can define the element-wise multiplication with a
scalar $a \in A$ with the SDE:
\begin{equation}\label{eq:sde-scalar-mult}
(a\cdot\sig)(0) = a \cdot \sig(0), \qquad
(a\cdot\sig)' = a\cdot\sig'.
\end{equation}
where $ a \cdot \sig(0)$ denotes multiplication in $A$.
It follows that,  for all $\sig\in \strms{A}$,
\[
 (a \cdot \sig)(n) = a\cdot\sig(n) \qquad \text{ for all } n\in \bbN.
\]
Clearly, any element-wise operation on streams can be defined
in a similar manner.
An example of a non-element-wise operation is the {\it convolution product}
of streams given explicitly by:
\begin{equation}\label{eq:convprod}
 (\sig\times\tau)(n) \;\;=\;\; \sum_{k=0}^n \sig(k)\cdot\tau(n-k)
  \qquad \text{ for all } n\in \bbN
\end{equation}
which is defined by the SDE
\begin{equation}\label{eq:sde-convprod}
\begin{array}{lcl}
(\sig\times\tau)(0) =\sig(0)\cdot\tau(0),
&\quad &
(\sig\times\tau)' = (\sig'\times\tau) + (\cns{\sig(0)}\times\tau')
\end{array}
\end{equation}
where for $a \in A$,
\begin{equation}\label{eq:sde-constant}
\cns{a}(0) = a,\qquad\cns{a}'= \cns{0}
\end{equation}
Note that the stream $[1] = (1,0,0,0, \ldots)$ is the identity for the
convolution product, that is, $\sigma \times [1] = [1] \times \sigma = \sigma$.

One can show that the convolution product is commutative if and only if the multiplication in $A$ is commutative.

Using the convolution product, and taking $A = \mathbb{Z}$ we can form the following
SDE: 
\begin{equation}\label{eq:catalan}
 \sig(0)=1, \qquad \sig' = \sig\times\sig
\end{equation}
It defines the stream  
$\sig = (1,1,2,5,14,42,132,429,1430,\ldots)$
of Catalan numbers, cf.~\cite{BRW:CF-pow}.

When $A$ is a field such as the reals $\bbR$,
some streams $\sig$ have an inverse $\sig^{-1}$
with respect to convolution product,
that is, $\sig\times\sig^{-1} = \sig^{-1}\times\sig= [1]$.
By taking initial value on both sides we find that the inverse should satisfy
 $(\sig\times\sig^{-1})(0) = 1$ and hence by the definition
of convolution product, $\sig^{-1}(0) = 1/\sig(0)$ which exists only
if $\sig(0) \neq 0$ in $A$.
Similarly, taking derivatives on both sides and rearranging, we find the following SDE:
\begin{equation}\label{eq:sde-conv-inverse}
 \sig^{-1}(0)=1/\sig(0), \qquad (\sig^{-1})' = [-1/\sig(0)]\times\sig'\times\sig^{-1}
\end{equation}

\subsection{Higher-order examples}

Just as with classical differential equations, SDEs can also be higher-order.
For instance, the second-order SDE 
\begin{equation}\label{eq:fib}
\sig(0)=0,\; \sig'(0)=1, \qquad \sig'' = \sig' + \sig
\end{equation}
(with $+$ as defined above) defines the stream of Fibonacci numbers
$\sig=(0,1,1,2,3,5,8,\ldots)$.
An $n$th order SDE can always be represented as a system of $n$
first-order SDEs. For example, the Fibonacci stream is equivalently defined as the solution for $\sig$ in
\begin{equation}\label{eq:fib-flat}
\begin{array}{ll}
\sig(0)=0, \qquad & \sig' = \tau\\
\tau(0)=1, &\tau' = \tau+\sig
\end{array}
\end{equation}
A similar example is given by
\begin{equation}\label{eq:sde-nats}
\begin{array}{lcllcl}
  \sig(0) &=& 1, \quad & \sig' &=& \sig\\
\tau(0) &=& 0, \quad & \tau'   &=& \tau + \sigma
\end{array}
\end{equation}
We know already that $\ones$ is a solution for $\sig$
(cf.~equation \eqref{eq:sde-ones}).
Hence \eqref{eq:sde-nats} is equivalent to
\begin{equation}\label{eq:sde-nats-ones}
\tau(0)=0 , \qquad \tau' =\tau+\ones
\end{equation}
which has $\tau=(0,1,2,3,4,\ldots)$
as its unique solution.

A slightly more involved example is given by 
the stream $\gamma$ of Hamming numbers (or regular numbers)
which consists of natural numbers of the form
$2^i3^j5^k$ for $i,j,k \geq 0$ in increasing order
(cf.~\cite{Dijkstra:Hamming,Yuen:Hamming}).
The first part of $\gamma$ looks like
$(1, 2, 3, 4, 5, 6, 8, 9, 10, 12, 15, 16, \ldots)$.
The stream $\gamma$ can be defined by the following SDE:
\begin{equation}\label{eq:sde-hamming}
\gamma(0)=1, \qquad
\gamma' = (2\cdot\gamma) \;\|\; ((3\cdot\gamma) \;\|\; (5\cdot\gamma))
\end{equation}
where $n\cdot\gamma$, $n \in\bbN$,
is the scalar multiplication defined as in \eqref{eq:sde-scalar-mult}
and $\|$ is the \emph{merge} operator defined by
\begin{equation}\label{eq:sde-merge}
(\sig\|\tau)(0) =
\left\{
 \begin{array}{ll}
 \sig(0) & \text{if } \sig(0) < \tau(0)\\
 \tau(0) & \text{if } \sig(0) \geq \tau(0)
 \end{array} \right.
\quad
(\sig\|\tau)' =
\left\{
 \begin{array}{ll}
 \sig'\|\tau & \text{if } \sig(0) < \tau(0)\\
 \sig'\|\tau' & \text{if } \sig(0) = \tau(0)\\
 \phantom{'}\sig\|\tau' & \text{if } \sig(0) > \tau(0)
 \end{array} \right.
\end{equation}
That is, $\|$ merges two streams into one
by taking smallest initial values first,
and removing duplicates.

Many more examples of stream differential equations inspired by
analysis, arithmetic and combinatorics can be found in
\cite{Rut01:MFPS-stream-calc,Rut03:TCS-bde,Rut05:MSCS-stream-calc,Winter:PhD}.


\section{Stream Automata and Stream Calculus}
\label{sec:prelims}

We will show how one can prove the existence of unique solutions to SDEs by
using the notions of stream automata and coinduction. 

\subsection{Stream Automata and coinduction}
\label{ssec:stream-aut}

Streams can be represented by so-called stream automata.
A {\it stream automaton} (with output in $A$)
is a pair $\tup{X,s}$
where $X$ is a set (called the state space, or the carrier) and
$s = \tup{o,d} \colon X \to A \times X$ is a function that maps
each $x \in X$ to a pair consisting of an output value $o(x) \in A$ and
a unique next state $d(x) \in X$ (corresponding to the derivative).
We will write $x \sgoes{a} y$ when $o(x)=a$ and $d(x)=y$.
A small example of a stream automaton is given in
Figure~\ref{fig:stream-aut}.
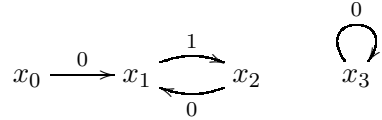
\begin{figure}[!h]
\[\xymatrix{
x_0 \ar[r]^-{0} & x_1 \ar@/^.6pc/[r]^-{1} & x_2 \ar@/^.6pc/[l]^-{0} & x_3 \ar@(ul,ur)^-{0}
}\]
\caption{Stream automaton with $X=\{x_0,x_1,x_2,x_3\}$ and $A = \{0,1\}$.}\label{fig:stream-aut}
\end{figure}
(In categorical terms, a stream automaton is a \emph{coalgebra}
for the functor $F$ on $\Set$ defined by $FX = A \times X$,
cf. Section \ref{sec:stream-gsos}.)

Intuitively, a state $x$ in a stream automaton $\tup{X,\tup{o,d}}$
represents the stream of outputs
that can be observed by following the transitions starting in $x$:
\[
(o(x), o(d(x)), o(d(d(x)),  o(d(d(d(x)))), \ldots )
\]
This stream is called the \emph{(observable) behaviour of $x$}.
For example, the behaviour of the state $x_0$ in Figure~\ref{fig:stream-aut}
is the stream $(0,1,0,1,\ldots)$. We will now characterise behaviour using the 
notion of homomorphism and finality.

A {\it homomorphism of stream automata} is a function between state spaces
that preserves outputs and transitions. Formally,
a function $f\colon X_1 \to X_2$ is a homomorphism from
$\tup{X_1,\tup{o_1,d_1}}$ to $\tup{X_2,\tup{o_2,d_2}}$
if and only if,
for all $x \in X_1$,
\[ o_1(x) = o_2(f(x)) \text{   and   }  f(d_1(x)) = d_2(f(x)),
\]
or equivalently, if and only if,
the following diagram commutes:
\[\xymatrix{
X_1 \ar[d]_-{\tup{o_1,d_1}} \ar[r]^-{f}
& X_2 \ar[d]^-{\tup{o_2,d_2}} \\
A \times X_1 \ar[r]^-{\id_A\times f}
& A \times X_2
}\]
where $\id_A$ denotes the identity map on $A$.

The set of streams $A^\omega$ is itself a
stream automaton under the map
$\zeta\colon \sigma \mapsto \tup{\sig(0),\sig'}$,
and it is moreover {\it final} which means that for any
stream automaton $\tup{o,d} \colon X \to A \times X$
there is a unique stream homomorphism $\beh{-}\colon X \to A^\omega$
(called the {\it final map}) into $\tup{\strms{A},\zeta}$:
\[\xymatrix@C+=5pc{
X \ar[d]_-{\forall\tup{o,d}} \ar@{.>}[r]^-{\exists !\beh{-}}
& A^\omega \ar[d]^-{\zeta} \\
A \times X \ar@{.>}[r]^-{\id_A\times \beh{-}}
& A \times A^\omega
}
\]
By the commutativity of the above diagram, we find that
\[
\beh{x} = (o(x), o(d(x)), o(d(d(x)),  o(d(d(d(x)))), \ldots )
\]
is  indeed the observable behaviour of $x \in X$.
The final map $\beh{-}$ is therefore often referred to as the
{\it behaviour map}.
Note that by uniqueness, the final map from $\tup{\strms{A},\zeta}$
to itself must be the identity homomorphism,
that is, for all $\sig \in \strms{A}$:
\begin{equation}\label{eq:sem-is-identity-in-final}
 \sem{\sig} = \sig.
\end{equation}

It can now easily be verified that for the stream automaton in
Figure~\ref{fig:stream-aut}, the behaviour map is:
\[ \begin{array}{rcl}
\beh{x_0} = \beh{x_2} &=& (0,1,0,1,\ldots)\\
\beh{x_1} &=& (1,0,1,0,\ldots)\\
\beh{x_3} &=& (0,0,0,0,\ldots)\\
\end{array}
\]

The universal property of the final stream automaton yields
a coinductive definition principle and a coinductive proof principle, both are often referred to as \emph{coinduction}.
In this paper we will make extensive use of both.

A map $f\colon X \to \strms{A}$
is said to be {\it defined by coinduction}, if it is obtained
as the unique homomorphism into the final stream automaton.
In practice, such an $f$ is obtained by equipping $X$
with a suitable stream automaton structure and using the finality
of $\tup{\strms{A},\zeta}$.

A \emph{proof by coinduction} is based on the notion of bisimulation.
Let
$\tup{X_1,\tup{o_1,d_1}}$ and $\tup{X_2,\tup{o_2,d_2}}$ be stream automata.
A relation $R \sse X_1\times X_2$ is a {\it stream bisimulation}
if for all $\tup{x_1,x_2} \in R$:
\begin{equation}\label{eq:def-stream-bis}
o_1(x_1) = o_2(x_2) \quad\text{ and }\quad  \tup{d_1(x_1) , d_2(x_2)} \in \, R .
\end{equation}
Two states $x_1$ and $x_2$ are said to be {\it bisimilar},
written $x_1 \sim x_2$,
if they are related by  some stream bisimulation.
We list a few well known facts about stream bisimulations,
cf.~\cite{Rut03:TCS-bde}.

\begin{lem}
\label{lem:bisim-facts}
Let
$\tup{X_1,s_1}$ and $\tup{X_2,s_2}$
be stream automata.
\begin{enumerate}
\item If $R_i \sse X_1 \times X_2$, $i \in I$, are stream bisimulations
then $\bigcup_{i\in I}R_i$ is a bisimulation.
\item The bisimilarity relation $\sim\; \sse X_1 \times X_2$ is the largest
bisimulation between $\tup{X_1,s_1}$ to $\tup{X_2,s_2}$.
\item If $f \colon X_1 \to X_2$ is a stream homomorphism, then
its graph $R=\{\tup{x,f(x)} \mid x \in X_1\}$ is a stream bisimulation.
\end{enumerate}
\end{lem}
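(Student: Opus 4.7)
\medskip
\noindent\textbf{Proof plan.} All three parts are direct unfoldings of the definition of stream bisimulation in \eqref{eq:def-stream-bis}. My approach is to handle (1) by a pointwise chase on the union, then derive (2) as an immediate corollary of (1) combined with the definition of $\sim$, and finally verify (3) by a one-line diagram chase using the homomorphism conditions.

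For part (1), I would pick an arbitrary pair $\langle x_1,x_2\rangle \in \bigcup_{i\in I} R_i$, choose some index $j\in I$ witnessing membership (so $\langle x_1,x_2\rangle \in R_j$), and then apply the bisimulation property of $R_j$ to get $o_1(x_1)=o_2(x_2)$ and $\langle d_1(x_1), d_2(x_2)\rangle \in R_j$. Since $R_j \subseteq \bigcup_{i\in I} R_i$, this last membership lifts to the union, establishing \eqref{eq:def-stream-bis} for $\bigcup_{i\in I} R_i$. No obstacle here; the only subtlety is the (harmless) use of the choice of a witnessing index $j$.

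For part (2), note that by definition $\sim$ is precisely the union of all stream bisimulations between $\langle X_1,s_1\rangle$ and $\langle X_2,s_2\rangle$, and by (1) this union is itself a bisimulation. Any bisimulation $R$ is contained in $\sim$ by the very definition of bisimilarity, so $\sim$ is the largest such relation.

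For part (3), I would take $\langle x, f(x)\rangle \in R$ with $x \in X_1$ and invoke the two homomorphism equations: $o_1(x) = o_2(f(x))$ gives the output condition of \eqref{eq:def-stream-bis}, while $f(d_1(x)) = d_2(f(x))$ rewrites the pair of derivatives as $\langle d_1(x), d_2(f(x))\rangle = \langle d_1(x), f(d_1(x))\rangle$, which is of the form $\langle y, f(y)\rangle$ for $y = d_1(x)\in X_1$ and hence lies in $R$. I expect no real obstacle in any of the three parts; the whole lemma is a bookkeeping exercise on the definitions, and the only ``choice'' to make is whether to derive (2) via (1) (as above) or directly from the definition.
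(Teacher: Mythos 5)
Your proof is correct in all three parts; the paper itself gives no proof of this lemma, merely citing \cite{Rut03:TCS-bde}, and your argument is exactly the standard one from that reference (union chase for (1), $\sim$ as the union of all bisimulations for (2), and the two homomorphism equations for (3)).
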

\noindent The main result regarding bisimilarity is stated in the following theorem.
\begin{thm}\label{thm:coind}
For all stream automata $\tup{X_i,s_i}$ and all $x_i \in X_i$, $i=1,2$,
\[  x_1 \sim x_2 \quad\iff\quad \sem{x_1} = \sem{x_2}
\]
Consequently, by \eqref{eq:sem-is-identity-in-final},
for all streams $\sig,\tau \in \strms{A}$,
\[ \sig \sim \tau \quad\iff\quad \sig = \tau
\]
\end{thm}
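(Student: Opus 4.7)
The plan is to prove the two implications of the equivalence $x_1 \sim x_2 \iff \sem{x_1} = \sem{x_2}$ separately, with the forward direction exploiting finality and the backward direction exploiting Lemma~\ref{lem:bisim-facts}(3). The "consequently" part will then be immediate from \eqref{eq:sem-is-identity-in-final}.

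For the direction ($\Rightarrow$), I would fix a bisimulation $R \sse X_1 \times X_2$ containing $\tup{x_1,x_2}$ and equip $R$ itself with a stream automaton structure $\tup{o_R,d_R}$ defined by $o_R\tup{y_1,y_2} = o_1(y_1) = o_2(y_2)$ and $d_R\tup{y_1,y_2} = \tup{d_1(y_1), d_2(y_2)}$, which is well-typed precisely because $R$ is a bisimulation. The two projections $\pi_i \colon R \to X_i$ are then homomorphisms of stream automata, so $\beh{-}\circ\pi_1$ and $\beh{-}\circ\pi_2$ are both homomorphisms $R \to \tup{\strms{A},\zeta}$. By finality of $\tup{\strms{A},\zeta}$, these two maps coincide, and evaluating at $\tup{x_1,x_2}$ yields $\beh{x_1} = \beh{x_2}$.

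For the direction ($\Leftarrow$), I would simply verify directly that the relation $R = \{\tup{y_1,y_2} \in X_1 \times X_2 \mid \beh{y_1} = \beh{y_2}\}$ is a stream bisimulation. Suppose $\tup{y_1,y_2} \in R$. Since $\beh{-}$ is a stream homomorphism, $o_i(y_i) = \beh{y_i}(0)$ and $\beh{d_i(y_i)} = \beh{y_i}'$ for $i=1,2$. The assumption $\beh{y_1} = \beh{y_2}$ therefore gives both $o_1(y_1) = o_2(y_2)$ and $\beh{d_1(y_1)} = \beh{d_2(y_2)}$, so $\tup{d_1(y_1),d_2(y_2)} \in R$. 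Hence any pair $\tup{x_1,x_2}$ with $\beh{x_1}=\beh{x_2}$ lies in a bisimulation and so $x_1 \sim x_2$. (Alternatively, one could invoke Lemma~\ref{lem:bisim-facts}(3) to obtain bisimulations $G_i = \{\tup{y_i,\beh{y_i}}\} \sse X_i \times \strms{A}$ and compose suitably, but the direct check is shorter.)

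The consequence for streams follows by specialising to $X_1 = X_2 = \strms{A}$ with the final automaton structure $\zeta$, since \eqref{eq:sem-is-identity-in-final} gives $\beh{\sig} = \sig$ and $\beh{\tau} = \tau$. I do not expect any real obstacle here; the only subtlety is remembering that a bisimulation needs to be given as a concrete relation together with a coalgebra structure (or, equivalently, with the closure condition on derivatives), and this is exactly what Lemma~\ref{lem:bisim-facts} packages for us. The main conceptual point to highlight in the write-up is that the argument is essentially the standard "lift the bisimulation to a coalgebra and use finality" pattern, which will be reused throughout the paper.
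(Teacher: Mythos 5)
Your proof is correct, and it is the standard finality/bisimulation argument that the paper itself relies on: the paper states Theorem~\ref{thm:coind} without proof, deferring to the literature (cf.\ \cite{Rut03:TCS-bde}), so there is no in-text proof to diverge from. Both directions check out --- the coalgebra structure you put on $R$ is well-defined exactly by the bisimulation conditions, the projections are homomorphisms, and the behavioural-equivalence relation is readily verified to be a bisimulation --- so nothing needs to be added.
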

\noindent From Theorem~\ref{thm:coind} we get the coinductive proof principle:
to prove that two streams are equal it suffices to show that they are related
by a bisimulation relation.

Finally, we also need the notions of subautomaton and minimal automaton.
A stream automaton $\tup{Y,t}$ is a \emph{subautomaton} of $\tup{X,s}$
if $Y \sse X$ and the inclusion map $\iota\colon Y \to X$ is a 
stream homomorphism, which means that $t = s\restrict_Y$. 
Given a stream automaton $\tup{X,s}$, 
the \emph{subautomaton generated by $Y_0 \sse X$} is the subautomaton
$\tup{Y,t}$ obtained by closing $Y_0$ under transitions.
A stream automaton $\tup{X,s}$ is \emph{minimal} if the behaviour map 
$\sem{-}\colon X \to \strms{A}$ is injective. Note that due to 
Theorem~\ref{thm:coind}, every subautomaton of the final stream automaton is
minimal.

Referring to the automaton in  Figure~\ref{fig:stream-aut}, an example of a stream
bisimulation is given by $\{(x_0,x_2), (x_1,x_1), (x_2,x_2) \}$.
It follows from Theorem~\ref{thm:coind} that $\beh{x_0} = \beh{x_2}$.

\subsection{Stream Calculus}
\label{ssec:stream-calc}

In this short section, we introduce some further preliminaries 
on stream calculus that we will be using in the remainder of the paper.
As we have seen in Section~ \ref{ssec:stream-ops},
any operation on $A$ can be lifted element-wise
to an operation on $\strms{A}$.
In fact, any algebraic structure on $A$
lifts element-wise to $\strms{A}$.
(We show this in a more abstract setting in Section~\ref{ssec:linear-revisited}.)
But we are not only interested in element-wise operations.
We will use that if $A$ is a commutative ring, then also
\[
(A^\omega,+,-,\times,\cns{0},\cns{1})
\]
is a commutative ring, cf.~\cite[Thm.4.1]{Rut03:TCS-bde}.
Similarly, if $A$ is a field, then $\strms{A}$ is a vector space over $A$
with the operations of scalar multiplication and addition.

Table~\ref{tbl:stream-calc} 
summarises the SDEs defining the stream calculus operations
on $\strms{A}$ most of which were already introduced in 
Section~\ref{ssec:stream-ops}.
The fact that these SDEs have unique solutions will follow
from the results in Section~\ref{sec:syntactic-method}.

\begin{table}[ht!]
\begin{center}
\begin{tabular}{|l|l|l|}
\hline
derivative: & initial value: & name:
\\
\hline
$[a]' = [0]$
&
$[a](0) = a$
&
constant $[a], a \in A$
\\
$(\sigma + \tau )' = \sigma' +  \tau'$
&
$(\sigma + \tau )(0) = \sigma(0) + \tau(0)$
&
sum
\\
$(a\cdot\sigma )' = a\cdot (\sigma' )$
&
$(a\cdot\sigma )(0) = a\cdot \sigma(0)$
&
scalar multiplication
\\
$(-\sigma )' = - (\sigma' )$
&
$(-\sigma )(0) = -  \sigma(0)$
&
minus
\\
$(\sigma \times \tau )' = (\sigma' \times  \tau)  +
([\sigma(0)]  \times \tau')$
&
$(\sigma \times \tau )(0) = \sigma(0) \cdot \tau(0)$
&
convolution product
\\
$( \sigma^{-1} )' = - [\sigma(0)^{-1}] \times  \sigma' \times  \sigma^{-1} $
&
$(\sigma^{-1})(0) =  \sigma(0)^{-1} $
&
convolution inverse
\\
\hline
\end{tabular}
\end{center}
\caption{Operations of stream calculus. Recall (from Section~\ref{ssec:stream-ops}) 
that convolution inverse is defined only
if $\sig(0)$ is an invertible element of $A$. In the case $A$ is a field
this is equivalent with $\sig(0) \neq 0$, and
as usual, we will often write 
$\frac{\sigma}{\tau}$ for $\sigma \times \tau^{-1}$.}
\label{tbl:stream-calc}
\end{table}

\noindent We further add to our stream calculus the constant stream
\[\X = (0,1,0,0,0, \ldots) \quad\text{ defined by }\quad
\X(0) = 0, \qquad \X' = [1].
\]
Multiplication by $\X$ acts as ``stream integration'' 
(seen as an inverse to stream derivative) since 
\[
(\X \times \sig)' = \sig
\]
This follows from the fact that,
for all $\sig \in \strms{A}$,
\[ \X \times \sig \;\;=\;\; \sig\times\X \;\;=\;\; (0,\sig(0),\sig(1),\sig(2),\ldots)
\]
This leads to the very useful \textit{fundamental theorem of stream calculus}
\cite[Thm.~5.1]{Rut03:TCS-bde}.
\begin{thm}
\label{fundamental theorem}
For every $\sigma \in A^\omega$,
$
\sigma \;\;=\;\; [\sigma(0)] + (\X \times \sigma ').
$
\end{thm}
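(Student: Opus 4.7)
My plan is to exploit the identity $\X \times \tau = (0,\tau(0),\tau(1),\tau(2),\ldots)$ stated just before the theorem, which reduces the claim to an entirely elementary pointwise check. Concretely, I would split the analysis at index $n = 0$ and $n \geq 1$. At $n = 0$, the right-hand side gives $[\sigma(0)](0) + (\X \times \sigma')(0) = \sigma(0) + 0 = \sigma(0)$, matching $\sigma(0)$. For $n \geq 1$, the constant stream $[\sigma(0)]$ contributes $0$, while by the explicit description of multiplication by $\X$ we have $(\X \times \sigma')(n) = \sigma'(n-1) = \sigma(n)$. Adding elementwise then yields exactly $\sigma(n)$, so the two streams agree at every index and the theorem follows.

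If one prefers a proof that stays entirely inside the coalgebraic framework of Section~\ref{ssec:stream-aut} and does not invoke the explicit shift formula, I would instead proceed by coinduction. Define
\[
R \;=\; \{\,(\sigma,\; [\sigma(0)] + (\X \times \sigma')) \mid \sigma \in A^\omega\,\} \;\cup\; \Delta_{A^\omega},
\]
where $\Delta_{A^\omega}$ is the diagonal relation, and verify that $R$ is a stream bisimulation. The initial-value check is identical to the $n=0$ case above. For the derivative check, one uses the SDEs in Table~\ref{tbl:stream-calc}: the sum rule gives $([\sigma(0)] + (\X \times \sigma'))' = [\sigma(0)]' + (\X \times \sigma')'$, the constant rule collapses $[\sigma(0)]'$ to $[0]$, and the convolution product rule expands $(\X \times \sigma')'$ to $(\X' \times \sigma') + ([\X(0)] \times (\sigma')') = ([1] \times \sigma') + ([0] \times \sigma'')$. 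Using that $[0]$ is neutral for $+$, that $[1]$ is neutral for convolution, and that $[0] \times \tau = [0]$, the derivative of the right-hand side simplifies to $\sigma'$, which is $R$-related (via $\Delta$) to $\sigma'$. Theorem~\ref{thm:coind} then gives the equality.

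The main obstacle lies with the coinductive variant: the simplification of the derivative relies on the small auxiliary identities $[0] + \tau = \tau$, $[1] \times \tau = \tau$, and $[0] \times \tau = [0]$, each of which is itself a routine coinduction but must be proved before the present theorem (or folded into the bisimulation by enlarging $R$). This is precisely what the pointwise proof sidesteps by leaning on the already-stated shift identity for $\X \times \tau$, which is why I would present the pointwise argument as the primary proof and relegate the coinductive version to a remark.
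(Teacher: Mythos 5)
Your primary (pointwise) argument is correct and is essentially the paper's own proof: the paper likewise decomposes $\sigma$ as $(\sigma(0),0,0,\ldots)+(0,\sigma(1),\sigma(2),\ldots)$ and identifies the two summands with $[\sigma(0)]$ and $\X\times\sigma'$ via the shift identity stated just before the theorem. The coinductive variant you sketch is a valid alternative but is not needed; presenting the pointwise check as the main proof matches the paper exactly.
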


\proof
For all $\sigma$, we have:\qquad
$\begin{array}[t]{rcl}
\sigma & = &
(\sig(0),\sig(1),\sig(2),\ldots)
\\
& = & (\sigma(0) , 0,0,0, \ldots ) + (0, \sig(1),\sig(2),\sig(3),\ldots)
\\
& = & 
      [\sig(0)] + (\X \times \sigma') \vspace{-12 pt}
\end{array}$

\qed

\noindent We conclude this section by an enhancement of the bisimulation
proof method. The general result behind the soundness of this method is
described in Section~\ref{ssec:bialgebras}.

\begin{defi}[bisimulation-up-to]\label{def:up-to}
Let $\Sig$ denote a collection of stream operations.
A relation $R \subseteq A^\omega \times A^\omega$
is a \emph{(stream) bisimulation-up-to-$\Sig$} if
for all $(\sigma, \tau) \in R$:
\[
\sigma(0)  = \tau(0) \qquad \text{and}\qquad
(\sigma',\tau') \in \bar{R},\]
where
$\bar{R} \subseteq A^\omega \times A^\omega$
is the smallest relation such that
\begin{enumerate}
\item $R \subseteq \bar{R}$
\item $\{ \, \tup{\sigma,\sigma} \mid \sigma \in A^\omega \, \}  \subseteq \bar{R}$
\item $\bar{R}$ is closed under the (element-wise application of)
operations in $\Sig$.
(For instance, if $\Sig$ contains addition and
$\tup{\alpha,\beta} , \tup{\gamma,\delta} \in \bar{R}$ then
$\tup{\alpha + \gamma , \, \beta + \delta} \in \bar{R}$.)
\end{enumerate}
We write $\sigma \sim_\Sig \tau$ if there exists a
bisimulation-up-to-$\Sig$ containing $\tup{\sigma , \tau}$.
\end{defi}

\begin{thm}[coinduction-up-to]\label{thm:up-to}
Let $\Sig$ be a subset of the stream calculus operations
from Table~\ref{tbl:stream-calc}.
We have:
\begin{equation}
\label{coinduction-up-to}
\sigma \sim_\Sig \tau \quad \Rightarrow \quad \sigma = \tau
\end{equation}
\end{thm}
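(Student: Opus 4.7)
The plan is to show that the closure $\bar R$ is itself a genuine stream bisimulation on $\tup{\strms A,\zeta}$. Once this is done, since $R\subseteq\bar R$ (clause~(1) of Definition~\ref{def:up-to}), Theorem~\ref{thm:coind} combined with \eqref{eq:sem-is-identity-in-final} forces $\sigma=\tau$ for every $(\sigma,\tau)\in R$, which immediately yields \eqref{coinduction-up-to}.

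The core task is verifying that every $(\alpha,\beta)\in\bar R$ satisfies $\alpha(0)=\beta(0)$ and $(\alpha',\beta')\in\bar R$. We argue by induction on the clause of Definition~\ref{def:up-to} that places $(\alpha,\beta)$ in $\bar R$. Clause~(1): this is exactly the bisimulation-up-to hypothesis on $R$. Clause~(2): diagonal pairs trivially satisfy $\alpha(0)=\alpha(0)$ and $(\alpha',\alpha')\in\bar R$ by the same clause. Clause~(3): the pair is of the form $(f(\gamma_1,\dots,\gamma_k),\,f(\delta_1,\dots,\delta_k))$ with $(\gamma_i,\delta_i)\in\bar R$ and $f\in\Sig$. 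Reading off the SDE for $f$ from Table~\ref{tbl:stream-calc}, the initial value $f(\gamma_1,\dots,\gamma_k)(0)$ is a function of the $\gamma_i(0)$, which equal $\delta_i(0)$ by induction hypothesis, so initial values agree. The derivative $f(\gamma_1,\dots,\gamma_k)'$ is, by the SDE, a term built from the $\gamma_i$, their derivatives $\gamma_i'$, and constants $[\gamma_i(0)]$, combined by stream-calculus operations; each component is $\bar R$-related to its $\delta$-side counterpart (by induction hypothesis, or by diagonality for constants using $\gamma_i(0)=\delta_i(0)$), and closure then assembles the whole derivative pair in $\bar R$.

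The one technical nuisance is that the SDEs for convolution product and convolution inverse produce derivatives that mix several operations (e.g.\ $(\sigma'\times\tau)+([\sigma(0)]\times\tau')$ for the product), so closure under $\Sig$ alone need not suffice to place the derivative back into $\bar R$. This is handled by first enlarging $\Sig$ to the full set of operations of Table~\ref{tbl:stream-calc}: enlarging $\Sig$ only enlarges $\bar R$ and a fortiori preserves the bisimulation-up-to hypothesis on $R$, so without loss of generality $\bar R$ is closed under every operation any SDE in the table can introduce. The main work, then, is the careful bookkeeping of which sub-pairs belong to $\bar R$ via induction hypothesis, diagonality, or closure, but each case is a routine unfolding of the corresponding SDE; this is the step I expect to be the most fiddly. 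Once $\bar R$ is confirmed to be a bisimulation, Theorem~\ref{thm:coind} closes the argument.
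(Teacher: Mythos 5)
Your proof takes essentially the same route as the paper's, which is only a two-line sketch: show by structural induction on the definition of $\bar{R}$ that $\bar{R}$ is a genuine stream bisimulation, then conclude by Theorem~\ref{thm:coind}. Your extra observation that $\Sig$ should first be enlarged to the full set of operations of Table~\ref{tbl:stream-calc} --- so that $\bar{R}$ is closed under every operation occurring in the derivative terms of the SDEs (e.g.\ the $+$ and the constants appearing in the derivative of $\times$) --- addresses a real subtlety that the paper's sketch glosses over, and your justification that enlarging $\Sig$ preserves the bisimulation-up-to hypothesis is correct.
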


\proof
If $R$ is a bisimulation-up-to-$\Sig$, then $\bar{R}$ can be shown to be
a bisimulation relation by structural induction on its definition.
The theorem then follows by Theorem~\ref{thm:coind}.
\qed

\section{Simple Specifications}
\label{sec:simple-specs}

In Sections~\ref{sec:simple-specs}-\ref{sec:context-free-specs}, we will
characterise the classes of streams, i.e.\ subsets of $A^\omega$,
that arise as the solutions to finite systems of SDEs over varying algebraic
structures/signatures.

We start by defining the most simple type of systems of SDEs.
Let $A$ be an arbitrary set.
A {\it simple equation system}
over a set (of variables) $X = \{x_i \mid i \in I\}$
is a collection of SDEs, one for each $x_i \in X$,
of the form
\[ \begin{array}{lclclcl}
x_i(0) &=& a_i, & \qquad x_i' &=& y_i;\\
\end{array}\]
where $a_i \in A$ and $y_i \in X$ for all $i \in I$.
We call a simple equation system over $X$ finite,
if $X$ is finite.
The SDEs in equations \eqref{eq:sde-ones} and \eqref{eq:sde-alt}
are examples of finite simple equation systems.
Note that any stream $\sigma$ is the
solution of the \emph{infinite} simple equation system
over $X = \{ x_n \mid n \in \bbN\}$ defined by:
$x_n(0) = \sig(n)$ and $x_n' = x_{n+1}$, for all $n\in\bbN$.

A simple equation system corresponds to a map
$e \colon X \to A \times X$,
i.e., to a stream automaton with state space $X$.
A {\it solution of $\es\colon X \to A \times X$} is an assignment
$h\colon X \to A^\omega$ of variables to streams that preserves the
equations: $h(x_i) = a_i$ and $h(x_i)' = h(y_i)$ for all $i \in I$.
This holds exactly when the following diagram commutes:
\[\xymatrix{
X \ar[d]_-{\es} \ar[r]^-{h} & A^\omega \ar[d]^-{\zeta}\\
A \times X \ar[r]^-{\id_A\times h} & A \times A^\omega
}\]
In other words, solutions are stream homorphisms
from $\tup{X,\es}$ to the final stream automaton.
By coinduction, solutions to simple equation systems exist and are unique.
We will also say that a stream $\sigma \in A^\omega$
is a solution of $\es$ if $\sigma = h(x)$ for some $x \in X$,
in which case we call $\es$ a {\it specification of $\sigma$}.

The solutions of finite simple equation systems
are exactly the behaviours of finite stream automata,
which are precisely the {\it eventually periodic streams}.
This is easy to prove. We state the result
explicitly to make clear the analogue with the results on
linear and context-free specifications that will be discussed
in Sections 
\ref{sec:linear-specs}  and \ref{sec:context-free-specs}.

\begin{prop}\label{prop:class-simple}
The following are equivalent for all streams $\sigma\in A^\omega$.
\begin{enumerate}
\item $\sigma$ is the solution of a finite simple equation system.
\item $\sigma$ generates a finite subautomaton of the final stream automaton.
\item $\sigma$ is eventually periodic, i.e., $\sig^{(k)} = \sig^{(n)}$
for some $k , n \in \bbN$ with $k<n$.
\end{enumerate}
\end{prop}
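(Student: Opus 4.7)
The plan is to prove the cycle of implications $(1) \Rightarrow (2) \Rightarrow (3) \Rightarrow (1)$, each of which is essentially a direct consequence of finality and the pigeonhole principle.

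For $(1) \Rightarrow (2)$, suppose $\sigma = h(x)$ for some $x \in X$, where $h\colon X \to A^\omega$ is the unique solution of a finite simple system $e\colon X \to A \times X$. Since $h$ is a stream homomorphism, its image $h(X) \subseteq A^\omega$ is closed under the derivative operation $\zeta$ and thus forms a subautomaton of the final stream automaton; it contains $\sigma$, and it is finite because $X$ is. The subautomaton generated by $\sigma$ is contained in $h(X)$, hence also finite.

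For $(2) \Rightarrow (3)$, note that the carrier of the subautomaton generated by $\sigma$ is exactly $\{ \sigma^{(n)} \mid n \in \bbN \}$, i.e., the orbit of $\sigma$ under iterated derivation. If this set is finite, then by pigeonhole there exist $k < n$ in $\bbN$ with $\sigma^{(k)} = \sigma^{(n)}$.

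For $(3) \Rightarrow (1)$, given $\sigma^{(k)} = \sigma^{(n)}$ with $k < n$, I would exhibit a finite simple system on the variable set $X = \{x_0, x_1, \ldots, x_{n-1}\}$ by setting
\[
x_i(0) = \sigma(i) \text{ for } 0 \le i < n, \quad x_i' = x_{i+1} \text{ for } 0 \le i < n-1, \quad x_{n-1}' = x_k.
\]
The assignment $h(x_i) = \sigma^{(i)}$ preserves initial values and derivatives (using $\sigma^{(n)} = \sigma^{(k)}$ for the wrap-around case), so it is a solution, and $\sigma = h(x_0)$.

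I do not expect any real obstacle here; the only subtlety is noting that the solutions-as-homomorphisms viewpoint from the paragraph before the proposition makes $(1) \Rightarrow (2)$ immediate (since homomorphic images of finite automata are finite subautomata), and that $(3) \Rightarrow (1)$ requires only a careful choice of system with a single ``back edge'' from $x_{n-1}$ to $x_k$ to realize the eventual periodicity. Uniqueness of the solution in $(3) \Rightarrow (1)$ is automatic from coinduction, as stated just before the proposition.
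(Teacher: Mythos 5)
Your proposal is correct and follows exactly the same route as the paper: the implication cycle $(1)\Rightarrow(2)\Rightarrow(3)\Rightarrow(1)$, with the image $h(X)$ bounding the generated subautomaton, the pigeonhole argument on the orbit $\{\sigma^{(n)}\mid n\in\bbN\}$, and the same $n$-state system with a single back edge from $x_{n-1}$ to $x_k$. No gaps; your explicit verification that $h(x_i)=\sigma^{(i)}$ is a solution is a small addition the paper leaves implicit.
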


\proof
$1 \Ra 2$: Let $h\colon X \to A^\omega$ be a solution of the finite
$\es \colon X \to A \times X$ and $\sigma = h(x)$ for some $x \in X$.
The subautomaton generated by $\sigma$
is contained in the image $h(X)$ which is finite, since $X$ is finite.

$2 \Ra 3$: The subautomaton generated by $\sigma$ has as its state set
$\{\sig^{(k)} \mid k \in \bbN \}$ which is finite by assumption.
Consequently, there are $k,n\in\bbN$ such that $\sig^{(k)} = \sig^{(n)}$
and $k < n$.

$3 \Ra 1$: Assume that $\sig^{(k)} = \sig^{(n)}$ for $k < n \in \bbN$.
Let $X = \{x_0,\ldots,x_{n-1}\}$ and define $\es\colon X \to A \times X$,
for all $i = 0, \ldots , n-2$, by
\[
\es(x_i) = \tup{\sigma(i), x_{i+1}}
\]
and by $\es(x_{n-1}) = \tup{\sigma(n-1), x_{k}}$.
Now $\sigma = h(x_0)$ where $h$ is the unique solution of $\es$.
\qed

Eventually periodic streams constitute some of the simplest
infinite objects that have a finite representation.
Such finite representations make it possible to compute with and reason
about infinite objects.
We provide a couple of examples.

\begin{exa}[Rational numbers in binary]
Let $2 = \{0,1\}$ denote the set of bits.
Rational numbers with odd denominator,
that is, elements of $\bbQodd = \{q = \frac{n}{2m+1} \mid n,m \in \bbZ\}$,
can be represented as eventually periodic bitstreams.
The representation $B\colon \bbQodd \to 2^\omega$ is obtained
by coinduction via the following stream automaton structure on
$\bbQodd$:
\[
o(q) = n \text{ mod }2, \qquad
d(q) = (q - o(q))/2
\]
For example, the finitary representation of the number $\frac{17}{5}$
can be found by computing output and derivatives leading to
the following stream automaton:
\[\xymatrix{
\frac{17}{5} \ar[r]^{1} &
\frac{6}{5} \ar[r]^{0} &
\frac{3}{5} \ar[r]^{1} &
\frac{-1}{5} \ar[r]^{1} &
\frac{-3}{5} \ar[r]^{1} &
\frac{-4}{5} \ar[r]^{0} &
\frac{-2}{5} \ar@/^1.5pc/[lll]^{0}
}\]
Hence, $B({\frac{17}{5}}) = 101(1100)^\omega$.
Such base 2 expansions allow for efficient implementations of arithmetic
operations, cf.~\cite{Hehner:siam}.
\end{exa}

\begin{exa}[Regular languages over one-letter alphabet]
A bitstream $\sig \in \strms{2}$ corresponds to a language
$L \sse \Pow(A^*)$ over a one-letter alphabet $A = \{a\}$ via:
\[ a^n \in L \quad\iff\quad \sig(n) = 1, \qquad \text{ for all } n \in \bbN.
\]
For example, the language $L = \{ a^n \mid n = 1 + 3k, k \in \bbN\}$ is
represented by the state 0 in the stream automaton:
\[\xymatrix{
0 \ar[r]^-{0} & 1 \ar[r]^-{1} & 2 \ar@/^1.2pc/[ll]^-{0} 
}
\]
\end{exa}


\section{Linear Specifications}
\label{sec:linear-specs}

Equations \eqref{eq:fib-flat} and \eqref{eq:sde-nats}
in Section~\ref{sec:sde-intro} are 
examples of \emph{linear equation systems}:
the righthand side of each SDE is a linear combination of
the variables on the left.
We will now
study this type of systems in more detail.

Throughout this section we assume $A$ is a field.
The set $A^\omega$ then becomes
a vector space over $A$ by defining scalar multiplication and
vector addition pointwise, as in Table~\ref{tbl:stream-calc}.
We denote by $\V(X)$
the set of all formal linear combinations over $X$,
i.e.,
\[\V(X) =
\{a_1x_1 + \ldots + a_nx_n \mid a_i \in A, x_i\in X, \;\forall i: 1\leq i\leq n\}\]
or equivalently,
$\V(X)$ is the set of all functions from $X$ to $A$ with
finite support.
In fact, $\V(X)$ is itself a vector space over $A$
by element-wise scalar multiplication and sum,
and it is freely generated by $X$. That means $X$ is a
basis for $\V(X)$, and hence every linear map from $\V(X)$ to a vector space $W$
is determined by its action on $X$. 
More precisely,
for every function $f\colon X \to W$ there is a unique linear map
$f^\sharp\colon \V(X) \to W$ extending $f$, which is
defined by: 
\[
 f^\sharp(a_1x_1 + \ldots + a_nx_n) = a_1f(x_1) + \ldots + a_nf(x_n).
\]
We note that the linear extension 
$\id^\sharp_{\strms{A}}\colon \V(A^\omega) \to A^\omega$
of the identity map $\id : A^\omega \to A^\omega$ 
gives the evaluation of formal linear combinations in the vector space $A^\omega$.

\subsection{Linear equation systems}
\label{ssec:linear-equation-systems}

A {\it linear equation system}
over a set $X = \{x_i \mid i \in I\}$
is a collection of SDEs, one for each $x_i \in X$,
of the form
\[ \begin{array}{lclclcl}
x_i(0) &=& a_i, & \qquad x_i' &=& y_i;\\
\end{array}\]
where $a_i \in A$ and $y_i \in \V(X)$ for all $i \in I$.
In other words,
a linear equation system is a map 
\[
\es=\tup{o,d} \colon X \to A \times \V(X).
\]
Again, we say that $e$ is finite, if $X$ is finite.
A {\it solution of $\es$} is an assignment $h\colon X \to A^\omega$
that preserves the equations, that is, for all $x_i \in X = \{x_1, \ldots, x_n\}$,
if $d(x_i) = a_1x_1 + \ldots + a_nx_n$, then
\[
   h(x_i)(0) = o(x_i)  \quad\text{ and }\quad
   h(x_i)' = a_1 \cdot h(x_1) + \cdots + a_n\cdot h(x_n)
\]

In the remainder of this
section, we give two ways of solving finite linear equation systems
and characterise their solutions.
The first uses coinduction for automata over vector spaces
--- here we will 
see that any linear equation system has a unique solution,
and solutions to finite linear equation systems are exactly the streams
that generate a finite-dimensional subspace.
The second uses stream calculus and yields a matrix solution method
which in turn shows that solutions to finite linear equation systems are
exactly the rational streams.

\subsection{Linear stream automata}
\label{ssec:linear-aut}

In this subsection, we first show how to solve 
linear equation systems by viewing them as
stream automata over vector spaces.
A linear equation system $\tup{o,d} \colon X \to A \times \V(X)$
can be seen as a (specification of a)
{\it weighted stream automaton},
cf.~\cite{BBBRS:lwa,Rut05:MSCS-stream-calc,Rut03:TCS-bde}.
In this view, the first component $o$
assigns output weights to states, and
the second component $d$ defines an $A$-weighted transition structure in
which state $x$ goes to state $y$ with weight $a \in A$ iff $d(x)(y)=a$.
(Recall that $d(x)$
is a function from $X$ to $A$ with finite support.)

To illustrate the construction, consider the two-dimensional
linear equation system from \eqref{eq:sde-nats}, repeated here:
\begin{equation}\label{eq:sde-nats-again}
\begin{array}{lcllcl}
x_1(0) &=& 1, \quad & x_1' &=& x_1\\
x_2(0) &=& 0, \quad & x_2'   &=& x_1+x_2
\end{array}
\end{equation}
It corresponds to the following weighted stream automaton 
(where a state is underlined if the output is 1, otherwise the output is 0):
\[\xymatrix@R=1pc{
&&
\\
\underline{x_1} \ar@(dl,ul)^-{1}
&
x_2 \ar[l]_-{1} \ar@(ur,dr)^-{1}
}\]
Let us try to construct a stream automaton for the solution
of $x_2$ by inductively applying \eqref{eq:sde-nats-again} and the
definition of $+$ (cf.~\eqref{eq:sde-plus}):
\[\begin{array}{lcll}
x_2 &\sgoes{0}& x_1+x_2 \\
     &\sgoes{1}& x_1+ (x_1 + x_2) & = 2\cdot x_1 + x_2 \\
     &\sgoes{2}& x_1+(x_1 + (x_1 +x_2)) & = 3\cdot x_1 +x_2\\
     &\sgoes{3}& \ldots
\end{array}\]
We notice two things: First,
the stream behaviour of $x_2$ indeed consists of the sequence
of natural numbers $\nats = (0,1,2,3,4,5,\ldots)$.
Second, the states of this stream automaton are not
stream variables, but linear combinations of the stream variables
$x_1$ and $x_2$.

\begin{rema}
The above example also shows that for streams over the field
$A=\bb{R}$, if the coefficients of the linear system are integers,
then the solutions will be streams of integers, since all initial
values will be computed using only multiplication and addition of integers.
\end{rema}

The above example motivates the following definition.

\begin{defi}
A {\it linear stream automaton} is a stream automaton over vector spaces,
i.e., it is a pair of maps $\tup{o,d} \colon V \to A \times V$
where $V$ is a vector space over $A$,
and $o \colon V \to A$ and $d \colon V \to V$
are linear maps. Note that the pairing $\tup{o,d}$ is again linear.
A {\it homomorphism of linear stream automata} is a map between the
state vector spaces which is both linear and a homomorphism
of stream automata.
\end{defi}

Solutions to a linear equation system
will now be obtained by coinduction,
for linear stream automata, using the following lemma.

\begin{lem}\label{lem:final-lin}
We have:
\begin{enumerate}
\item A linear equation system $\es \colon X \to A \times \V(X)$
  corresponds to a 
  linear stream automaton $\es^\sharp \colon \V(X) \to A \times \V(X)$
\item The final stream automaton is also a final linear stream automaton.
\end{enumerate}
\end{lem}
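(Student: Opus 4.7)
The plan is to prove the two parts independently, each relying on a standard universal property.

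For part (1), I would invoke exactly the free vector space property stated just above the lemma: for any vector space $W$ and function $f\colon X\to W$, there is a unique linear extension $f^\sharp\colon \V(X)\to W$. Taking $W = A\times\V(X)$ with its product vector space structure and $f = \es$, this immediately yields a unique linear map $\es^\sharp\colon \V(X)\to A\times\V(X)$ extending $\es$. To check that $\es^\sharp$ is a linear stream automaton in the sense of the preceding definition, I would observe that its output component $\pi_1\circ \es^\sharp$ and derivative component $\pi_2\circ \es^\sharp$ are linear as composites of linear maps, which is what the definition requires.

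For part (2), I first need to check that $\tup{\strms{A},\zeta}$ is a linear stream automaton when $\strms{A}$ is equipped with its pointwise vector space structure. This is immediate from equations \eqref{eq:sde-plus} and \eqref{eq:sde-scalar-mult}: the maps $\sigma\mapsto \sigma(0)$ and $\sigma\mapsto\sigma'$ are both linear. Next, given any linear stream automaton $\tup{V,\tup{o,d}}$, finality of $\tup{\strms{A},\zeta}$ in $\Set$ already supplies a unique stream homomorphism $\beh{-}\colon V\to\strms{A}$. The substantive step is to show that this map is automatically linear; once that is established, uniqueness in the linear category is immediate, because any linear stream homomorphism is in particular a stream homomorphism and so must coincide with $\beh{-}$.

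To show linearity of $\beh{-}$, I would apply the coinduction principle of Theorem~\ref{thm:coind} to the relation
\[
R = \{\,(\beh{av+bw},\,a\beh{v}+b\beh{w}) \mid v,w\in V,\; a,b\in A\,\} \;\subseteq\; \strms{A}\times\strms{A}.
\]
For the initial-value condition, $(\beh{av+bw})(0) = o(av+bw) = ao(v)+bo(w) = a\beh{v}(0)+b\beh{w}(0) = (a\beh{v}+b\beh{w})(0)$, using linearity of $o$ and the pointwise definition of the operations on $\strms{A}$. For the derivative condition, on one side we have $(\beh{av+bw})' = \beh{d(av+bw)} = \beh{ad(v)+bd(w)}$ using linearity of $d$, and on the other $(a\beh{v}+b\beh{w})' = a\beh{v}'+b\beh{w}' = a\beh{d(v)}+b\beh{d(w)}$, and this new pair again lies in $R$. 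Hence $R$ is a bisimulation and the two sides are equal, proving linearity.

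The main obstacle is really just keeping the roles of $V$ and $\strms{A}$ straight in the bisimulation argument — there is no deep difficulty, since both claims reduce to the routine interaction between free-algebra universal properties and coinduction; the content of the lemma is that the free construction $\V(-)$ and the cofree construction $\strms{(-)}$ are compatible enough that solutions of linear equation systems live in the right world.
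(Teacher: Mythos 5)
Your proof is correct, and its overall decomposition matches the paper's exactly: part (1) via the universal property of the free vector space $\V(X)$ applied to the product vector space $A\times\V(X)$, and part (2) by first checking that $\zeta$ is linear (using the pointwise SDEs for sum and scalar multiplication) and then showing that the set-based final map $\beh{-}$ is automatically linear, whence uniqueness in the linear setting is inherited from uniqueness in $\Set$. The one place where you diverge is the argument that $\beh{-}$ is linear: you run a coinduction, showing that $R = \{(\beh{av+bw},\,a\beh{v}+b\beh{w})\}$ is a bisimulation and invoking Theorem~\ref{thm:coind}, whereas the paper computes directly that $\beh{a\cdot v + b\cdot w}(n) = o(d^n(a\cdot v + b\cdot w)) = a\cdot o(d^n(v)) + b\cdot o(d^n(w)) = a\cdot\beh{v}(n) + b\cdot\beh{w}(n)$ for every $n$, using the explicit formula for the behaviour map together with linearity of $o$ and $d$. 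The two are equally rigorous; the paper's pointwise calculation is marginally more elementary (no appeal to the coinduction proof principle), while your bisimulation argument is the one that generalises to functors where the final coalgebra has no such concrete entrywise description, so it is arguably the more portable of the two. Both correctly reduce the whole lemma to the compatibility of the free construction $\V(-)$ with the coalgebraic structure, which is the point.
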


\proof
(1): 
Since $A$ is a vector space over itself, $A \times \V(X)$ is a 
(product) vector space, and we obtain 
$\es^\sharp \colon \V(X) \to A \times \V(X)$ as the linear extension of $\es$.
Note that
$\es^\sharp = \tup{o^\sharp,d^\sharp}$. 

(2): 
The initial value and derivative maps are linear:
\[ \begin{array}{rcl}
(a \cdot \sigma + b\cdot \tau)(0) &=& a\cdot\sig(0) + b \cdot \tau(0)\\
(a \cdot \sigma + b\cdot \tau)' &=& a\cdot\sig' + b \cdot \tau'\\
\end{array}\]
Hence $\tup{A^\omega,\zeta}$ is a linear stream automaton.
Moreover, for any linear stream automaton
$\tup{o,d} \colon V \to A \times V$, the final map $\beh{-}$
of the underlying (set-based) stream automata is linear, since for
all $v, w \in V$, $a,b \in A$ and $n \in \bbN$,
\[ \begin{array}{rcl}
\beh{a\cdot v + b\cdot w}(n) &=&
  o(d^n(a\cdot v + b\cdot w)) \\
&=&  a\cdot o(d^n(v)) + b\cdot o(d^n(w)) \quad (\text{\small by linearity of $o$ and $d$})\\
&=&  a\cdot \beh{v}(n) + b\cdot \beh{w}(n)
\end{array}
\]
Hence $\beh{-}$ is also the unique homomorphism of linear stream automata
into $\tup{A^\omega,\zeta}$.
\qed

\begin{prop}\label{prop:linear-solutions}
Every linear equation system has a unique solution.
\end{prop}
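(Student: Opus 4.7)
The plan is to leverage Lemma~\ref{lem:final-lin} in full: part (1) converts the equation system into a linear stream automaton on $\V(X)$, and part (2) tells us that $\strms{A}$ is final in that category, so the unique final map restricted to $X$ will be the unique solution. The work is to translate between the ``solution'' condition phrased on generators $X$ and the ``homomorphism'' condition phrased on the free vector space $\V(X)$.

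Concretely, given $\es = \langle o,d\rangle\colon X \to A \times \V(X)$, Lemma~\ref{lem:final-lin}(1) yields the linear stream automaton $\es^\sharp = \langle o^\sharp, d^\sharp\rangle\colon \V(X) \to A \times \V(X)$. By Lemma~\ref{lem:final-lin}(2) there is a unique homomorphism of linear stream automata $\beh{-}\colon \V(X) \to \strms{A}$. Define $h \colon X \to \strms{A}$ to be the restriction $h = \beh{-}\circ \iota$ along the inclusion $\iota\colon X \hookrightarrow \V(X)$. To see that $h$ is a solution, note that for each $x_i \in X$ with $d(x_i) = a_1 x_1 + \cdots + a_n x_n$, the commutativity of the defining diagram of $\beh{-}$ and the fact that $\beh{-}$ agrees with its own linear extension give
\[
h(x_i)(0) = o^\sharp(x_i) = o(x_i), \qquad h(x_i)' = \beh{d(x_i)} = a_1 \beh{x_1} + \cdots + a_n \beh{x_n} = a_1 h(x_1) + \cdots + a_n h(x_n),
\]
which is exactly the solution condition.

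For uniqueness, suppose $h\colon X \to \strms{A}$ is any solution of $\es$. Consider its linear extension $h^\sharp \colon \V(X) \to \strms{A}$. The plan is to show that $h^\sharp$ is a homomorphism of linear stream automata from $\es^\sharp$ to $\zeta$, from which finality forces $h^\sharp = \beh{-}$, and hence $h$ is uniquely determined by restriction. Linearity of $h^\sharp$ holds by construction. For the homomorphism square, both $\zeta \circ h^\sharp$ and $(\id_A \times h^\sharp)\circ \es^\sharp$ are linear maps $\V(X) \to A \times \strms{A}$, so by freeness of $\V(X)$ it suffices to check agreement on the basis $X$; but on each $x_i \in X$ this agreement is precisely the solution condition on $h$ written out componentwise.

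The only genuinely delicate point is this last reduction from $\V(X)$ to $X$: one must observe that both sides of the diagram are linear (using that $\zeta$, $o^\sharp$, $d^\sharp$ and $h^\sharp$ are all linear, as established in the proof of Lemma~\ref{lem:final-lin}) before appealing to universality of the basis. Everything else is bookkeeping: unfolding the definition of linear extension and invoking the already-proven finality statement. No new coinductive argument is needed beyond what Lemma~\ref{lem:final-lin} already packages.
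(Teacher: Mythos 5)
Your proof is correct and follows the same route as the paper: apply Lemma~\ref{lem:final-lin} to get the linear stream automaton $\es^\sharp$ on $\V(X)$, use finality to obtain the unique linear homomorphism into $\strms{A}$, and check that its restriction to $X$ satisfies the solution condition via linearity. You spell out the uniqueness direction (extending an arbitrary solution $h$ to $h^\sharp$ and checking the homomorphism square on the basis) more explicitly than the paper does, which is a welcome addition but not a different method.
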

\proof
Applying Lemma~\ref{lem:final-lin} and the coinduction principle
for linear stream automata,
we obtain for each linear equation system
$\tup{o,d}\colon X \to A \times \V(X)$ a unique linear stream homomorphism
$g\colon \V(X) \to \strms{A}$, as shown in the following picture
where $\eta_X\colon X \to \V(X)$ denotes the inclusion of the basis vectors
into $\V(X)$:
\begin{equation}\label{eq:linear-solutions}
\xymatrix@R=+10mm{
X \ar[d]_-{\tup{o,d}} \ar[r]^-{\eta_X}  
  & \V(X) \ar[dl]^<<{\tup{o^\sharp,d^\sharp}} \ar[r]^-{g} 
  & A^\omega \ar[d]^-{\zeta}\\
A \times \V(X) \ar[rr]^-{\id_A\times g}
  &
  & A \times A^\omega
}\end{equation}
The composition $g \circ\eta_X = g\restrict_X \colon X \to \strms{A}$ is a solution
of $\tup{o,d}$ by the linearity of $g$. To see this, suppose that
$d(x_i) = a_1x_1 + \ldots + a_nx_n$
for $x_i \in X$.
We then have:
\[ \begin{array}{rcl}
 g\restrict_X\!(x_i)' &=& g(d(x_i)) = g(a_1x_1 + \ldots + a_nx_n)\\
 &=& a_1g\restrict_X\!\!(x_1) + \ldots + a_ng\restrict_X\!\!(x_n)
\end{array}\]

We note that for finite $X$, the linear homomorphism $g\colon \V(X) \to \strms{A}$ can be represented
by a finite dimensional matrix with rational streams as entries, similar to the one in
\eqref{eq:lin-solution} of the next subsection; see \cite{Rut:rational} or \cite{BBBRS:lwa} for details.
\qed

We can now state the first characterisation of the solutions to finite linear equation systems.

\begin{prop}\label{prop:class-linear}
The following are equivalent for all streams $\sigma\in A^\omega$:
\begin{enumerate}
\item $\sigma$ is the solution of a finite linear equation system.
\item $\sigma$ generates a finite-dimensional subautomaton of the
 final linear stream automaton.
\end{enumerate}
\end{prop}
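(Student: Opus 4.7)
My plan is to mirror the proof of Proposition~\ref{prop:class-simple}, with ``finite subautomaton'' replaced by ``finite-dimensional linear subautomaton''. The key fact I will use is that, in the linear setting, the subautomaton of $\strms{A}$ generated by $\sig$ is precisely the linear span $\mathrm{span}\{\sig^{(k)} \mid k \in \bbN\}$, since this is the smallest vector subspace of $\strms{A}$ containing $\sig$ and closed under the derivative. Both directions then become short arguments that trade between a generating set of variables and a basis of the generated subautomaton.

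For the direction $(1) \Ra (2)$, I would start from a finite linear equation system $\es \colon X \to A \times \V(X)$ with solution $h\colon X \to \strms{A}$ and a variable $x_0 \in X$ with $\sig = h(x_0)$. By Lemma~\ref{lem:final-lin} together with the proof of Proposition~\ref{prop:linear-solutions}, $h$ extends to a homomorphism $g\colon \V(X) \to \strms{A}$ of linear stream automata. Its image $W = g(\V(X))$ is a linear subspace of $\strms{A}$, and $W$ is closed under the stream derivative because $g$ commutes with derivatives; hence $W$ is a linear subautomaton. Since $X$ is finite, $\V(X)$ is finite-dimensional and so is $W$. Because $\sig \in W$, the linear subautomaton generated by $\sig$ is contained in $W$, and therefore also finite-dimensional.

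For the direction $(2) \Ra (1)$, I would let $V \sse \strms{A}$ be the finite-dimensional linear subautomaton generated by $\sig$. The trivial case $\sig = \cns{0}$ is handled by the system $x(0)=0,\ x' = x$, so assume $\sig \neq \cns{0}$ and extend $\{\sig\}$ to a basis $\{x_1, \ldots, x_n\}$ of $V$ with $x_1 = \sig$. Since $V$ is closed under the derivative, each $x_i' \in V$ admits a unique expression $x_i' = \sum_{j=1}^n a_{ij}\,x_j$ with $a_{ij} \in A$. Defining $\es(x_i) = \tup{x_i(0),\, \sum_{j=1}^n a_{ij}\,x_j}$ yields a finite linear equation system whose unique solution, by Proposition~\ref{prop:linear-solutions}, assigns the stream $x_i$ to the variable $x_i$; in particular $\sig = h(x_1)$, so $\sig$ is a solution of $\es$.

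Once Proposition~\ref{prop:linear-solutions} is in place, the argument is essentially bookkeeping, so I do not anticipate a serious obstacle. The main conceptual points to get right are the correct interpretation of ``linear subautomaton generated by $\sig$'' (closure under both derivatives and linear combinations, making it a span) and, in $(2) \Ra (1)$, the choice of a basis of $V$ actually containing $\sig$ so that $\sig$ is realised as the solution for one of the variables rather than merely as a linear combination of them.
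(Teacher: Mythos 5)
Your proof is correct. For the direction $(1)\Ra(2)$ it is essentially identical to the paper's: both pass to the induced linear homomorphism $g\colon\V(X)\to\strms{A}$, observe that its image is a finite-dimensional subspace closed under derivatives, and note that the subautomaton generated by $\sig$ (the span of its derivatives) sits inside that image. For $(2)\Ra(1)$ the paper only sketches the argument, pointing to the construction in the proof of Proposition~\ref{prop:rational-linear-char}: there one uses that some derivative $\sig^{(d)}$ is a linear combination of $\sig^{(0)},\ldots,\sig^{(d-1)}$ and writes down a system in ``companion form'' with $x_i'=x_{i+1}$. You instead extend $\{\sig\}$ to an arbitrary basis of the generated subspace $V$ and express the derivative of each basis vector in that basis. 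Both constructions work; the paper's ties the variables directly to consecutive derivatives of $\sig$ (which is what also yields the rational-expression characterisation), whereas yours is marginally more general and correctly isolates the two points that need care --- that the generated subautomaton is a span closed under the (linear) derivative map, and that $\sig$ itself must appear as one of the variables, with the degenerate case $\sig=\cns{0}$ handled separately since $\{\cns{0}\}$ cannot be extended to a basis.
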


\proof
For the direction $1 \Ra 2$: Let $\sig$ be a solution to a
finite $\es\colon X \to A \times \V(X)$.
Let $\tup{Z_\sigma,\zeta_\sigma}$ be the linear subautomaton
generated by $\sigma$ in the final linear automaton,
i.e., the state space $Z_\sig$ is the subspace
generated by the derivatives of $\sig$.
Since $\V(X)$ is finite-dimensional, so is its final image $g(\V(X))$,
and since $Z_\sigma$ is a subspace of $g(\V(X))$,
also  $Z_\sigma$ is finite-dimensional.

The direction $2 \Ra 1$ follows by constructing a linear equation system
using a similar argument as the one given in
the proof of Proposition~\ref{prop:rational-linear-char} below.
A detailed proof can be found in \cite[section 5, Thm.5.4]{Rut:rational}.
\qed\enlargethispage{\baselineskip}

\subsection{Matrix solution method}
\label{ssec:matrix-solution-method}

In this section we will provide an algebraic characterisation of 
solutions of finite linear equation systems.
We will show that solutions of such 
systems are rational streams, and give a matrix-based method
for computing these solutions.
Recall (from \cite{Rut03:TCS-bde}) that a stream $\sigma \in A^\omega$
is \emph{rational} if it is of the form
\[
\sigma = \, \frac{a_0 + (a_1 \times \X) + (a_2 \times \X^2) + \cdots + (a_n \times \X^n)}
{b_0 + (b_1 \times \X) + (b_2 \times \X^2) + \cdots + (b_m \times \X^m)}
\]
for $n, m \in \bbN$ and $a_i, b_j \in A$ 
and with $b_0 \neq 0$. (The operations of sum, product and inverse
were all defined in Section~\ref{ssec:stream-calc}.)

First, we will identify the relevant algebraic structure in which we
can do matrix manipulations.
As mentioned in Section~\ref{ssec:stream-calc}, 
when $A$ is a commutative ring (so, in particular, when $A$ is a field),
the stream calculus operations turn $\strms{A}$ into a
commutative ring. 
For any ring $R$, the set $\Mat_n(R)$ of
$n$-by-$n$ matrices over $R$ is again a ring under matrix addition and
matrix multiplication. When $R$ is commutative then $\Mat_n(R)$
is an \emph{associative $R$-algebra}, which means that it
also has a scalar multiplication (with elements from $R$) which is
compatible with the ring structure, that is, for all $r \in R$
and $M, N\in \Mat_n(R)$, $r\cdot(MN) = (r\cdot M)N = M (r\cdot N)$.
This scalar multiplication
$r \cdot M$ is defined by multiplying each entry of $M$ by $r$, that is,
$(r\cdot M)_{i,j} = r\times M_{i,j}$.
We refer to \cite{Lang} for further results on matrix rings.

For a linear equation system with $n$ variables,
we will consider the associative $\strms{A}$-algebra $\Mat_n(\strms{A})$,
and we will denote both matrix multiplication and scalar
multiplication by $\cdot$. The context should make clear which
operation is intended. The $\cdot$ notation is used to distinguish
the operations from the multiplication in the underlying ring of
stream calculus.
In order to keep notation simple,
we describe the matrix solution method for two variables, but it is
straightforward to generalise it to $n$ variables.

A linear equation system with two variables
\begin{equation}\label{eq:lin-example-2dim}
\begin{array}{ccc}
x_1'= m_{11}x_1 + m_{12}x_2 & \qquad & x_1(0) = n_1\\[2mm]
x_2'= m_{21}x_1 + m_{22}x_2 & \qquad & x_2(0) = n_2
\end{array}
\end{equation}
can be written in matrix form as
\[
\pmat{x_1\\x_2}' = M \cdot \pmat{x_1\\x_2} \qquad \pmat{x_1\\x_2}(0) = N
\]
where derivative and initial value are taken element-wise, and where
$M$ and $N$ are matrices over $\strms{A}$ given by
\[
M = \pmat{\cns{m_{11}} & \cns{m_{12}}\\[1mm] \cns{m_{21}} & \cns{m_{22}}}
\qquad
N = \pmat{\cns{n_1}\\[1mm] \cns{n_2}}
\]
By applying the fundamental theorem of stream calculus to both
stream variables, we find that
\[\begin{array}{rcl}
\pmat{x_1\\x_2} &=& \pmat{x_1\\x_2}(0) + \X\cdot\pmat{x_1\\x_2}'
\\[.8em]
 &=& N + \X\cdot M\cdot\pmat{x_1\\x_2}
\end{array}\]
(Note that $\X = (0,1,0,0,0, \ldots)$ 
is a scalar stream.)
This is in $\Mat_n(\strms{A})$
equivalent to
\[
(I - (\X \cdot M))\cdot \pmat{x_1\\x_2} = N
\]
where $I$ is the identity matrix.
The solution to \eqref{eq:lin-example-2dim} can now be obtained
as:
\begin{equation}\label{eq:lin-solution}
\pmat{x_1\\x_2} = (I - (\X \cdot M))^{-1}\cdot N
\end{equation}
We should, of course, first convince ourselves that
the inverse of the matrix $I - (\X \cdot M)$ always exists.
In general, an element of a matrix ring $\Mat_n(R)$ (over a commutative
ring $R$) is invertible if its determinant has a multiplicative inverse
in $R$. Hence $I - (\X \cdot M)$ has an inverse in $\Mat_2(\strms{A})$
if its determinant is a stream whose initial value is non-zero.
The matrix $I - (\X \cdot M)$ looks as follows
\[
I - (\X \cdot M) =
\pmat{\cns{1} -(\X\times \cns{m_{11}}) & \cns{0}-(\X \times \cns{m_{12}})
\\[1mm]
\cns{0}-(\X\times \cns{m_{21}}) & \cns{1}-(\X \times \cns{m_{22}})
}
\]
From the definitions of sum and convolution product it follows that
the initial value of the determinant equals the determinant of the
matrix of initial values:
\[
\det(I - (\X \cdot M))(0)
\quad = \quad
\det\pmat{
1 -(0\cdot m_{11}) & 0-(0\cdot m_{12})
\\[1mm]
0-(0\cdot m_{21}) & 1 - (0\cdot m_{22})}
\quad = \quad
\det\pmat{1 & 0\\0&1}
\quad = \quad
1.
\]
Hence the determinant of $I - (\X \cdot M)$ will always have initial
value equal to 1, and consequently $(I - (\X \cdot M))^{-1}$ exists
and can be computed using the standard linear algebra technique
by performing elementary row operations on the identity matrix.
These row operations consist of multiplying or
dividing by a rational stream, and adding rows, hence
if an invertible matrix has rational streams as entries,
then so does its inverse. 
(Alternatively, this also follows from
Cramer's rule.)
It is easy to see that this argument carries over to
higher dimensions.
We have proved one direction of the second characterisation result.

\begin{prop}\label{prop:rational-linear-char}
The following are equivalent for all streams $\sigma\in A^\omega$:
\begin{enumerate}
\item $\sigma$ is the solution of a finite linear equation system.
\item $\sigma$ is rational.
\end{enumerate}
\end{prop}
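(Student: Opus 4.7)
The direction $1 \Rightarrow 2$ has essentially just been shown: by~\eqref{eq:lin-solution}, each component of the solution is an entry of $(I - \X \cdot M)^{-1}\cdot N$, and the row-reduction procedure that inverts $I - \X \cdot M$ manipulates only sums, products and inverses of rational streams, so every entry of the inverse, and hence of the solution vector, is rational.

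For the converse $2 \Rightarrow 1$, my plan is to appeal to Proposition~\ref{prop:class-linear}: it suffices to exhibit, for each rational $\sigma$, a finite-dimensional subautomaton of the final linear stream automaton containing $\sigma$; equivalently, to show that $\{\sigma^{(k)} \mid k \in \bbN\}$ spans a finite-dimensional subspace of $\strms{A}$.

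Write $\sigma = p/q$ with $p = \sum_{i=0}^{n} a_i\, \X^i$ and $q = \sum_{j=0}^{m} b_j\, \X^j$, where $b_0 \neq 0$, so that $\sigma \times q = p$. The key step is to read this identity pointwise via the convolution formula~\eqref{eq:convprod}: for $k \geq m$ one gets $(\sigma \times q)(k) = \sum_{j=0}^{m} b_j\, \sigma(k-j)$, while for $k > n$ the right-hand side $p(k)$ vanishes. Setting $K = \max(m,\, n+1)$, this becomes a homogeneous $m$-step recurrence
\[
b_0\,\sigma(k) + b_1\,\sigma(k-1) + \cdots + b_m\,\sigma(k-m) \;=\; 0 \qquad \text{for every } k \geq K,
\]
and applying it at $k = K + j$ for all $j \geq 0$ lifts it to the stream identity
\[
b_0\,\sigma^{(K)} + b_1\,\sigma^{(K-1)} + \cdots + b_m\,\sigma^{(K-m)} \;=\; 0.
\]
Since $b_0 \neq 0$, this expresses $\sigma^{(K)}$ as an $A$-linear combination of $\sigma^{(K-1)}, \ldots, \sigma^{(K-m)}$.

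Let $W \sse \strms{A}$ be the subspace spanned by $\sigma, \sigma', \ldots, \sigma^{(K-1)}$; it is finite-dimensional by construction. The derivative of $\sigma^{(i)}$ is $\sigma^{(i+1)}$, which already belongs to the generating list when $i < K-1$, and for $i = K-1$ belongs to $W$ by the identity just derived. Thus $W$ is closed under derivative and contains $\sigma$, forming a finite-dimensional subautomaton of $\strms{A}$ generated by $\sigma$, and Proposition~\ref{prop:class-linear} now delivers a finite linear equation system whose unique solution is $\sigma$. The only real obstacle, and a modest one, is the bookkeeping around the threshold $K$: one must check that for all $k \geq K$ both the convolution window fully covers the coefficients $b_0, \ldots, b_m$ \emph{and} $p(k) = 0$, so that the resulting recurrence is genuinely homogeneous with leading coefficient $b_0 \neq 0$; with $K = \max(m,\, n+1)$ this is immediate from~\eqref{eq:convprod}.
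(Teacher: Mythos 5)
Your proof is correct, and for $1 \Rightarrow 2$ it is exactly the paper's argument (the matrix inversion via row operations preserves rationality). For $2 \Rightarrow 1$ you follow the same underlying idea as the paper but with two differences worth noting. First, the paper merely \emph{asserts} that some derivative $\sigma^{(d)}$ is a linear combination of $\sigma^{(0)},\ldots,\sigma^{(d-1)}$, with only a parenthetical remark that $d$ is bounded by the degrees of numerator and denominator; your derivation of the homogeneous recurrence from the pointwise reading of $\sigma \times q = p$ (with the threshold $K = \max(m, n+1)$ and the observation that $b_0 \neq 0$ lets you solve for $\sigma^{(K)}$) supplies precisely the detail the paper omits, and it is carried out correctly. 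Second, you finish by invoking Proposition~\ref{prop:class-linear}(2 $\Rightarrow$ 1), whereas the paper writes down the explicit $d$-dimensional companion-style system $x_0' = x_1, \ldots, x_{d-1}' = a_0x_0 + \cdots + a_{d-1}x_{d-1}$ directly. Be aware that within this paper the direction $2 \Rightarrow 1$ of Proposition~\ref{prop:class-linear} is itself justified by pointing forward to the proof of the present proposition (and otherwise to an external reference), so your appeal to it is mildly circular in the paper's internal logic. This is easily repaired: having shown that $W = \mathrm{span}(\sigma,\ldots,\sigma^{(K-1)})$ is closed under derivative, you can take $x_0,\ldots,x_{K-1}$ as variables for these derivatives and read off the finite linear system directly, which is exactly what the paper does. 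With that one-sentence patch your argument is fully self-contained and in fact slightly more complete than the printed proof.
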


\proof
If $\sigma$ is a solution to a finite linear equation system,
then by the argument above this proposition, we find that $\sigma$ is
a linear combination of rational streams, hence itself rational.
For the converse direction,
if $\sig \in \strms{A}$ is rational,
there exists a $d \in \bbN$ such that the $d$-th derivative
$\sig^{(d)}$ is a linear combination of $\sig^{(0)},\ldots,\sig^{(d-1)}$.
(The value $d$ is bounded in terms of the degree of $\rho$ and $\tau$
where $\sig = \rho/\tau$.)
Hence $\sig^{(d)} = \sum_{i=0}^{d-1} a_i\cdot\sig^{(i)}$
for some $a_i \in A$, $i < d$. It follows that $\sig$ is the solution for $x_0$
in the following $d$-dimensional linear equation system:
\[ \begin{array}{lclrcl}
x_0' &=& x_1 & x_0(0) &=& \sig(0)\\
x_1' &=& x_2 & x_1(0) &=& \sig(1)\\
\;\;\vdots &&\;\vdots & \;\vdots\;\;\;\; && \;\vdots\\
x_{d-2}' &=& x_{d-1} & x_{d-2}(0) &=& \sig(d-2)\\
x_{d-1}' &=&  a_0x_0 + \cdots + a_{d-1}x_{d-1} \qquad & x_{d-1}(0)&=& \sig(d-1)\\
\end{array}\]
See also \cite[Thm.5.3, Thm.5.4]{Rut:rational} for a more general proof
using the vector space structure of $\strms{A}$.
\qed

We illustrate the matrix solution method with an example.

\begin{exa}
The Fibonacci example from \eqref{eq:fib}
\[
\sig(0)=0, \;\; \sig'(0)=1, \qquad \sig'' = \sig' +\sig
\]
corresponds to the linear equation system (with $x_1 = \sig, x_2=\sig'$)
\[
\pmat{x_1\\x_2}' = \pmat{0 & 1\\1 & 1}\cdot\pmat{x_1\\x_2} \qquad
\pmat{x_1\\x_2}(0) = \pmat{0\\1}
\]
whose solution is given by instantiating \eqref{eq:lin-solution}:
\[ \begin{array}{rcl}
\pmat{x_1\\x_2} &=& \pmat{1 & -\X \\ -\X & 1 - \X}^{-1}\cdot\pmat{0\\1}
\\[1.5em]
&=& \pmat{\frac{1-\X}{1-\X-\X^2} & \frac{\X}{1-\X-\X^2} \\[.3em]
      \frac{\X}{1-\X-\X^2} & \frac{1}{1-\X-\X^2}}\cdot\pmat{0\\1}
\\[1.5em]
&=& \pmat{\frac{\X}{1-\X-\X^2}\\[.3em]\frac{1}{1-\X-\X^2}}
\end{array}\]
Hence the solution for $\sigma \;(=x_1)$ is the rational stream
\begin{equation}
\label{rational expression for Fibonacci}
\sigma = \frac{\X}{1-\X-\X^2}
\end{equation}
By computing successive initial value and derivatives
using the rational expression for $\sig$,
we find again the Fibonacci sequence:
\[
\sigma = (0,1,1,2,3,5,8,13,\ldots)
\]
\end{exa}

Here are some further examples of linear equation systems
that define some more and some less familiar rational streams.

\begin{exa}[Naturals]
Take $A = \bbR$.
The solution for $\sigma$ in
the following linear equation system
is the stream of natural numbers
$\sigma = \nats= (1,2,3,4, \ldots)$:
\[\begin{array}{rcrcrcl}
\sig(0) &=& 1,
& \quad&
\sigma' &=& \sig + \tau
\\
\tau(0) &=& 1, & \quad & \tau' &=&  \tau
\end{array}\]
Applying the matrix solution method, we find the rational expression
\[ \sigma = \frac{1}{(1 - \X)^2}
\]
\end{exa}

\begin{exa}[Powers]
Take $A = \bbR$.
For any $a \in \bbR$, the linear equation
\[\sig(0)=1, \quad \sigma' = a\cdot\sig
\]
has as its solution $\sigma = (1,a,a^2,a^3,a^4, \ldots)$
with rational expression
\[ \sigma = \frac{1}{1-(a\times\X)}
\]
\end{exa}

\begin{exa}[Alternating]
The second-order stream differential equation
\[ \sig(0)=0,\; \sig'(0)=1, \quad\sigma'' = -\sig
\]
can be written as a linear equation system
\[\begin{array}{rcrcrcl}
\sig(0) &=& 0, & \quad&  \sigma' &=& \tau
\\
\tau(0) &=& 1, & \quad & \tau' &=&  -\sig
\end{array}\]
The solution for $\sigma$ is $\sigma = (0,1,0,-1,0,1,0,-1,\ldots)$
with rational expression
\[\sig = \frac{\X}{1+\X^2}
\]
Note that $\sigma$ is actually eventually periodic,
and could also be defined by a simple equation system with four variables.
\end{exa}

\begin{exa}[$n$th powers]
For $n \in \bbN$,
consider the stream $\nats^{\tup{n}}=(1,2^n,3^n,4^n, \ldots)$
of $n$-th powers of the naturals.
Inspecting the derivatives, we find that
\[\begin{array}{lcl}
(\nats^{\tup{n}})' &=& (\; (1+1)^n, (1+2)^n, (1+3)^n, \ldots\;)
\\[.8em]
&=& \left(\; \sum_{k=0}^n \binom{n}{k}1^k,\; \sum_{k=0}^n \binom{n}{k}2^k, \;\sum_{k=0}^n \binom{n}{k}3^k, \ldots \;\right)
\\[.8em]
&=& \sum_{k=0}^n \binom{n}{k} \nats^{\tup{k}}
\end{array}\]
This shows that $\nats^{\tup{0}},\ldots,\nats^{\tup{n}}$ can be defined by a linear equation system in $n+1$ variables.
A rational expression for $\nats^{\tup{n}}$ can be computed using the
fundamental theorem (Theorem~\ref{fundamental theorem}).
We show here the expressions for $n \leq 3$:
\[\begin{array}{rclclcl}
\nats^{\tup{0}} &=& 1 + X \times \nats^{\tup{0}} & \Ra & \multicolumn{3}{l}{\nats^{\tup{0}} = \dsty\frac{1}{1-\X} \;\;\;\;= \ones}
\\[1em]
\nats^{\tup{1}} &=& 1 + X \times (\nats^{\tup{0}} + \nats^{\tup{1}})
\\
  &=&  1 + X \times (\frac{1}{1-\X}+\nats^{\tup{1}})  & \Ra &
 \multicolumn{3}{l}{\nats^{\tup{1}} = \dsty\frac{1}{(1-\X)^2} = \nats}
 \\[1em]
\nats^{\tup{2}} &=& \multicolumn{3}{l}{1 + X \times (\nats^{\tup{0}} + 2\nats^{\tup{1}} + \nats^{\tup{2}})}
\\
 &=&  \multicolumn{3}{l}{1 + X \times (\frac{1}{1-\X} + \frac{2}{(1-\X)^2} + \nats^{\tup{2}})}
  & \Ra  
  & \nats^{\tup{2}} = \dsty\frac{1+\X}{(1-\X)^3}
  \\[1em]
\nats^{\tup{3}} &=& \multicolumn{3}{l}{1 + X \times (\nats^{\tup{0}} + 3\nats^{\tup{1}} + 3\nats^{\tup{2}} + \nats^{\tup{3}})}
\\
 &=&  \multicolumn{3}{l}{1 + X \times (\frac{1}{1-\X} + \frac{3}{(1-\X)^2} + \frac{3(1+\X)}{(1-\X)^3} + \nats^{\tup{3}})}
  & \Ra 
  & \nats^{\tup{3}} = \dsty\frac{1+4\X +\X^2}{(1-\X)^4}
\end{array}\]
A recurrence relation for these rational expressions is given in
section 6.2 of \cite{NR:moessner}. In section 6.3 of \emph{loc.cit.}, it is also
noted that
\[ \nats^{\tup{n}} = \frac{A_n}{(1-\X)^n}
\]
where $A_n$ is the $n$th Eulerian polynomial\footnote{The $n$th Eulerian polynomial is $A_n(x) = \sum_{k=0}^m A(n,k)x^k$ where the $A(n,m)$ are the Eulerian numbers, see e.g.~\cite[Sec.~6.2]{GKP94} or the Wikipedia entry on Eulerian Numbers.}.
\end{exa}


\begin{rema}
In much of this section, we could have weakened our assumptions on $A$.
As mentioned already, 
the matrix solution method only requires $A$ to be a commutative ring.
For the notion of linear automata, we only need  $A$ to be a semiring, see the next section for a definition. A linear automaton would then be an automaton whose state space is a semimodule over $A$, rather than a vector space. Lemma~\ref{lem:final-lin} and Proposition~\ref{prop:linear-solutions} would still hold, i.e., coinduction for automata over semimodules can be used as a solution method. An analogue of Proposition~\ref{prop:class-linear} does not hold for arbitrary semirings, but we would have the following version of 1 $\Ra$ 2:
If $A$ is a so-called Noetherian semiring (cf.~\cite{EzikMal2011,BMS12})
and $\sigma$ is a solution to a finite linear equation system, then the sub-semimodule generated by $\sigma$ is finitely generated.
\end{rema}

\section{Context-free Specifications}
\label{sec:context-free-specs}

We recall equation  \eqref{eq:catalan} (on page \pageref{eq:catalan}):
\[
\sig(0)=1, \qquad \sigma' = \sig\times\sig
\]
which defines the stream of Catalan numbers.
It is neither simple nor linear, as the righthand side of
the equation uses the convolution product. 
In the present section, we will study the class of \emph{context-free} SDEs
to which this example belongs.

In this section, we assume that $A$ is a {\it commutative semiring}.
A \emph{semiring} is an algebraic structure $(A,+,\cdot,0,1)$ where $(A,+,0)$  
is a commutative monoid, $(A,\cdot,1)$ is a monoid, 
multiplication distributes over addition, and
$0$ annihilates. A semiring  $(A,+,\cdot,0,1)$ is \emph{commutative}, if also $(A,\cdot,1)$ is a commutative monoid.
The full axioms for commutative semirings are, for $a,b,c \in A$:
\begin{equation}\label{eq:semiring-axioms}
\begin{array}{lll}
(a+b)+c = a+(b+c)\quad
& 0+a = a
& a+b = b+a\\
(a\cdot b)\cdot c = a\cdot(b\cdot c)
& 1 \cdot a = a 
& a \cdot b = b \cdot a\\
a\cdot (b + c) = a \cdot b + a\cdot c
& (a+b) \cdot c = a\cdot c + b\cdot c \quad
& 0 \cdot a = a
\end{array}
\end{equation}

Examples of commutative semirings include the natural numbers $\bbN$ with
the usual operations, and more generally any commutative ring such as the integers $\bbZ$.
An important finite commutative semiring is the Boolean semiring
$(2, \lor,\land,\bot,\top)$. More exotic examples include the
tropical (min-plus) semiring
$(\bbR\cup\{\infty\}, \mathrm{min}, +, \infty, 0)$ and the max-plus semiring
$(\bbR\cup\{-\infty\}, \mathrm{max}, +, -\infty, 0)$.
The semiring of languages over an alphabet $K$
(with language concatenation as product)
$(\Pow(K^*), \cup,\cdot,\emptyset,\{\emptyword\})$
is an example of a non-commutative semiring, i.e., one in which the product 
is not commutative.

For any semiring $A$, we can define stream constants $[a]$ for $a\in A$,
elementwise addition $+$ and convolution product $\times$
on $\strms{A}$ using the SDEs in Section~\ref{ssec:stream-calc}.
The algebraic structure $(A^\omega,+,\times,\cns{0},\cns{1})$
is again a semiring (cf.~\cite[Thm.4.1]{Rut03:TCS-bde}) and the inclusion
$a \mapsto [a]$ is a homomorphism of semirings. We will therefore simply
write $a$ to denote the stream $[a]$.
Note that the convolution product is
commutative if and only if the underlying semiring multiplication $\cdot$
is commutative.
For notational convenience, we will write $\tau\sigma$ instead of
$\tau\times\sigma$ for all $\tau,\sigma\in\strms{A}$.

\subsection{Context-free equation systems}

Let
$\M(X^*) = \{a_0w_0 + \cdots + a_n w_n \mid a_i \in A, w_i \in X^*\}$
denote the set of formal linear combinations over the set $X^*$
of finite words over $X$, or equivalently,
the set of polynomials over (non-commuting) variables in $X$ 
with coefficients in $A$.
$\M(X^*)$ is again a semiring with the usual addition and multiplication
of polynomials.
If we take $A$ to be
the Boolean semiring then $\M(X^*)$ is the semiring of languages over
alphabet $X$.
We also note that $\M(X^*)$ contains $A$ as a subsemiring via the inclusion
$a \mapsto a \emptyword$, where $\emptyword$ denotes the empty word.
Since we assume $A$ is commutative, $\M(X^*)$ 
is a semiring generalisation of the notion of a unital associative algebra
over a commutative ring.

A \emph{context-free equation system over set $X = \{x_i \mid i \in I\}$}
is a collection of SDEs, one for each $x_i \in X$,
of the form
\[ \begin{array}{lclclcl}
x_i(0) &=& a_i, & \qquad x_i' &=& y_i;\\
\end{array}\]
where $a_i \in A$ and $y_i \in \M(X^*)$ for all $i \in I$.
In other words,
a context-free equation system is a map
$\es=\tup{o,d} \colon X \to A \times \M(X^*)$.

As in the linear case,
a {\it solution of $\es$} is an assignment $h\colon X \to A^\omega$
that preserves the equations, that is, for all $x \in X$,
if $d(x) = a_1w_1 + \ldots + a_nw_n$, then
\[
   h(x)(0) = o(x)  \quad\text{ and }\quad
   h(x)' = a_1 h^*(w_1) + \cdots + a_n h^*(w_n)
\]
where $h^*(x_1 \cdots x_n) = h(x_1) \times \cdots \times h(x_n)$.
We call a stream \emph{$\sig$ context-free} if $\sig$ is the solution of some finite
context-free equation system.

The name \emph{context-free} comes from the fact that
a finite context-free equation system
$\es=\tup{o,d}\colon X \to A \times \M(X^*)$
corresponds to an $A$-weighted context-free grammar in Greibach normal form
with non-terminals in $X$ for a one-letter alphabet $L=\{\lambda\}$
as follows:
\[\begin{array}{lcl}
\text{equation system} && \text{grammar rules}\\
\hline
o(x) = a & \quad\text{ iff }\quad & x \to_a \emptyword\\
d(x)(w)=a & \quad\text{ iff }\quad & x \to_a \lambda w, \quad w \in X^*
\end{array}\]
where $x \to_a \lambda w$ denotes that $x$ can produce $\lambda w$
with weight $a$.
By taking $A$ to be the Boolean semiring $\mathit{2}$
and allowing an arbitrary alphabet $L$, a
context-free grammar in Greibach normal form is a system
of type $X \to \mathit{2}\times \M(X^*)^L$.

\subsection{Solutions and characterisations}
\label{sec:context-free-solutions}

\begin{prop}\label{prop:context-free-solutions}
Every context-free equation system has a unique solution.
\end{prop}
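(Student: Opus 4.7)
The plan is to mimic the argument of Proposition~\ref{prop:linear-solutions}, replacing the free vector space $\V(X)$ by the free unital associative $A$-algebra $\M(X^*)$ and the vector-space structure of $A^\omega$ by its semiring structure $(A^\omega, +, \times, [0], [1])$. Recall from Table~\ref{tbl:stream-calc} that $(\sigma\tau)(0) = \sigma(0)\cdot\tau(0)$ and $(\sigma\tau)' = \sigma'\tau + [\sigma(0)]\tau'$. These two identities, together with the rules for sums and constants, tell us exactly how to lift $e = \tup{o,d}\colon X \to A \times \M(X^*)$ to a stream automaton structure on $\M(X^*)$ so that any $A$-algebra homomorphism $\M(X^*) \to A^\omega$ extending a solution $h\colon X \to A^\omega$ will automatically be a stream homomorphism.

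Concretely, I would first define $o^\sharp\colon \M(X^*) \to A$ by $o^\sharp(\emptyword) = 1$, $o^\sharp(x w) = o(x) \cdot o^\sharp(w)$ for $x \in X$, $w \in X^*$, and $A$-linear extension; this is the unique semiring homomorphism $\M(X^*) \to A$ that extends $o$. Then I would define $d^\sharp\colon \M(X^*) \to \M(X^*)$ on words by $d^\sharp(\emptyword) = 0$ and $d^\sharp(x w) = d(x)\, w + o(x)\, d^\sharp(w)$, and extend $A$-linearly. A routine induction on word length then verifies the Leibniz identity
\[
d^\sharp(p q) \;=\; d^\sharp(p)\, q + o^\sharp(p)\, d^\sharp(q) \qquad \text{for all } p, q \in \M(X^*),
\]
so that $\tup{o^\sharp, d^\sharp}\colon \M(X^*) \to A \times \M(X^*)$ is a well-defined stream automaton refining $e$ via the inclusion $\eta_X\colon X \hookrightarrow \M(X^*)$.

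Finality of $\tup{A^\omega, \zeta}$ then yields a unique stream homomorphism $g\colon \M(X^*) \to A^\omega$, and I would set $h \coloneqq g \circ \eta_X$. To see that $h$ is a solution, observe that for $x \in X$ with $d(x) = \sum_i a_i w_i$, the commutativity of the finality diagram gives $h(x)(0) = o^\sharp(x) = o(x)$ and $h(x)' = g(d^\sharp(x)) = g(d(x)) = \sum_i a_i\, g(w_i)$; the final step is to check that $g(w) = h^*(w)$ for every word $w$, which follows by induction on $|w|$ using the Leibniz rule for $d^\sharp$ and for $\times$ on $A^\omega$ — indeed, this induction shows that $g$ is a semiring homomorphism, and hence agrees with $h^*$ on words.

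For uniqueness, given any solution $h\colon X \to A^\omega$, extend it to the unique semiring homomorphism $h^\sharp\colon \M(X^*) \to A^\omega$. The key claim, and the main obstacle, is that $h^\sharp$ is a homomorphism of stream automata from $\tup{\M(X^*), \tup{o^\sharp, d^\sharp}}$ to $\tup{A^\omega, \zeta}$. Preservation of outputs follows because both $o^\sharp$ and $\sigma \mapsto \sigma(0)$ are semiring homomorphisms agreeing on generators. Preservation of derivatives is checked by induction on words: for $w = x v$,
\[
h^\sharp(d^\sharp(x v)) = h^\sharp(d(x))\, h^\sharp(v) + o(x)\, h^\sharp(d^\sharp(v)) = h(x)'\, h^\sharp(v) + [h(x)(0)]\, h^\sharp(v)' = (h(x)\, h^\sharp(v))' = h^\sharp(xv)',
\]
using the SDE satisfied by $h$ at $x$, the induction hypothesis, and the convolution-derivative rule. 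By finality $h^\sharp = g$, so $h = h^\sharp \circ \eta_X = g \circ \eta_X$, proving uniqueness.
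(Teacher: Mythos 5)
Your overall strategy is the same as the paper's: the paper's proof of Proposition~\ref{prop:context-free-solutions} is only a two-line sketch (build a stream automaton structure $\es^\flat$ on $\M(X^*)$, apply finality, compose with $\eta_X$, with details deferred to the cited references), and your definitions of $o^\sharp$ and $d^\sharp$ via the Leibniz rule are exactly the intended construction of $\es^\flat$. Your uniqueness half is sound: $h^\sharp$ is defined by structural recursion on polynomials, so the induction on words showing $h^\sharp\circ d^\sharp = (-)'\circ h^\sharp$ is legitimate, and finality then forces $h^\sharp = g$.

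There is, however, a genuine gap in the existence half, at the step you flag as ``the final step'': the claim that $g(w) = h^*(w)$ \emph{follows by induction on $|w|$} does not go through. First, the induction does not close: already for $w = xv$ the derivative $d^\sharp(xv) = d(x)v + o(x)\,d^\sharp(v)$ contains the words $w_i v$ for $d(x) = \sum_i a_i w_i$, which may be arbitrarily long, so the inductive hypothesis on shorter words is of no use. Second, and more fundamentally, $g$ is characterised only by finality, not by any recursion on terms, so an identity of streams such as $g(pq) = g(p)\times g(q)$ (or even $g(p+q) = g(p)+g(q)$, which you use silently when writing $g(d(x)) = \sum_i a_i\,g(w_i)$) cannot be verified pointwise by a finite computation; it must be established coinductively. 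The standard repair, and the content the paper defers to \cite{BRW:CF-pow,WBR:CF-LMCS}, is to show that the relations $\{\tup{g(pq),\,g(p)\times g(q)}\}$ and $\{\tup{g(p+q),\,g(p)+g(q)}\}$ (for $p,q\in\M(X^*)$) form a bisimulation-up-to sum and scalar multiplication, using precisely your two Leibniz identities on the two sides, and then invoke Theorem~\ref{thm:up-to}; alternatively one invokes the $\lambda$-bialgebra machinery of Section~\ref{sec:stream-gsos}, under which $g$ is automatically an algebra homomorphism for the monad $\M((-)^*)$. With that replacement your argument is complete; note that existence cannot instead be extracted from your (correct) uniqueness computation, since that computation assumes $h$ already satisfies the equations.
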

\proof
Similar to the linear case, we can construct from
$\es\colon X \to A \times \M(X^*)$
a stream automaton 
$\es^\flat\colon\M(X^*) \to A \times\M(X^*)$,
and apply coinduction to obtain a solution
$X \sgoes{\eta_X} \M(X) \sgoes{g} \strms{A}$ as shown in this diagram:
\begin{equation}\label{eq:context-free-solutions}
\xymatrix@R=+10mm{
X \ar[d]_-{\es} \ar[r]^-{\eta_X}  
  & \M(X^*) \ar[dl]^<<<<<{\es^\flat} \ar[r]^-{g} 
  & A^\omega \ar[d]^-{\zeta}\\
A \times \M(X^*) \ar[rr]^-{\id_A\times g}
  &
  & A \times A^\omega
}
\end{equation}
where this time $\eta_X\colon X \to \M(X^*)$ denotes the inclusion
of variables as polynomials.
We refer to \cite{BRW:CF-pow,WBR:CF-LMCS} for details.
\qed

At present, there are no analogues of Propositions \ref{prop:class-linear} and \ref{prop:rational-linear-char} for context-free streams, but it follows from \cite[Theorem 23]{BRW:CF-coalg} that context-free streams over $A$ are exactly the constructively $A$-algebraic power series over a one-letter alphabet, since streams over $A$ can be viewed as formal power series over a one-letter alphabet with coefficients in $A$.

In Section~\ref{ssec:matrix-solution-method}, we saw that
solutions to linear equation systems
are definable in stream calculus as the
rational streams.
For context-free streams, no such closed form is known, in general.

We end this section with some more examples of context-free streams.
\begin{exa}[Catalan numbers]
Let $A = \bbN$ be the semiring of natural numbers.
The context-free SDE from equation \eqref{eq:catalan}
\[
\gamma(0)=1, \qquad \gamma' = \gamma\times\gamma
\]
defines the sequence $\gamma = (1, 1, 2, 5, 14, 42, 132, 429, 1430, \ldots)$
of \emph{Catalan numbers}, cf.~\cite{BRW:CF-pow}.
In \cite[p.~117-118]{Rut05:MSCS-stream-calc},
it is shown that the Catalan numbers satisfy
\[
\gamma = \; \frac{2}{1 + \sqrt{1-4\X}}
\]
where the square root of a stream $\sigma$
is defined by the following SDE (cf.~\cite[section 7]{Rut05:MSCS-stream-calc}):
\begin{equation}\label{eq:sde-sqrt}
\begin{array}{ll}
\sqrt\sig(0)  = \sqrt{\sig(0)} \qquad\quad
&
(\sqrt\sig)' = \frac{\sig'}{\sqrt{\sig(0)} + \sqrt{\sig}}
\end{array}
\end{equation}
\end{exa}

\begin{exa}[Schr\"{o}der numbers]
The solution for the stream differential equation
\[
\sig(0) = 1, \quad \sig' = \sig + (\sig\times\sig)
\]
is the sequence
$\sig = (1, 2, 6, 22, 90, 394, 1806, 8558, 41586, \ldots)$
of \emph{(large) Schr\"{o}der numbers}
(sequence A006318 in \cite{OEIS}), see also \cite{WBR:CF-LMCS}.
For $n\in \bbN$, $\sig(n)$ is the number of paths in the
$n\times n$ grid from $(0,0)$ to $(n,n)$
that use only single  steps going right, up or diagonally right-up,
and which do not go above the diagonal.
In contrast with the Catalan numbers, we do not know of any
stream calculus expression that defines the stream of Schr\"{o}der numbers.
\end{exa}

\begin{exa}[Thue-Morse]\label{exam:Thue-Morse}
This example is a variation on a similar example in~\cite{BRW:CF-pow}.
Let $A = \mathbb{F}_2$, the finite field
$\{0,1\}$ where $1+1=0$.
The following context-free system of equations
\[ \begin{array}{rclcrcl}
\tau(0) &=& 0, & \qquad & \tau' &=& (\mu \times \mu) +
(\X \times \sigma \times \sigma),\\
\sigma(0) &=& 1, & \qquad & \sigma' &=& (\sigma \times \sigma) +
(\X \times \nu \times \nu),\\
\mu(0) &=& 1, & \qquad & \mu' &=& (\tau \times \tau) +
(\X \times \nu \times \nu),\\
\nu(0) &=& 0, & \qquad & \nu' &=& (\nu \times \nu) +
(\X \times \sigma \times \sigma).\\
\end{array}\]
defines the so-called Thue-Morse sequence
\[
\tau = \, (0,1,1,0,1,0,0,1, \ldots)
\]
which, in the world of automatic sequences \cite{AS03},
is typically defined by means of a finite (Moore)
automaton. We return to automatic sequences
in Section~\ref{sec:non-standard}.
Note that we could include the definition of $\X$
in the system above by adding the equations:
\[ \begin{array}{rclcrcl}
\X(0) &=& 0, & \qquad & \X' &=& [1]\\
{[1]}(0) &=& 1, & \qquad & [1]' &=& [0]\\
{[0]}(0) &=& 0, & \qquad & [0]' &=& [0]\\
\end{array}\]
\end{exa}

\begin{exa}
The following example is taken from \cite{Rut:counting-waut}, and
is not actually context-free since it uses the shuffle product $\otimes$ -- rather than the convolution product --
which is defined by the following SDE:
\begin{equation}
\label{SDE for shuffle}
(\sigma \otimes \tau )(0) = \sigma(0) \cdot \tau(0), \;\;\;\;
(\sigma \otimes \tau )' = (\sigma' \otimes  \tau)  +  (\sigma \otimes \tau')
\end{equation}
But observing that 
$(\strms{A},+,\otimes,\cns{0},\cns{1})$ also forms a semiring,
it can be viewed as context-free with respect to this structure.
Let $A = \bbN$, and consider the SDE
\[\sigma' = 1 + (\sig\otimes\sig), \quad\sig(0)=1
\]
Its solution is the stream
\[ \sig = (1, 2, 4, 16, 80, 512, 3904, 34816, 354560, \ldots)
\]
which is the sequence A000831 in \cite{OEIS}.
The stream $\sig$ can be
described in stream calculus by a so-called
continued fraction (cf.~\cite[section 17]{Rut01:MFPS-stream-calc}), as follows:\enlargethispage{2\baselineskip}
\[
\frac{\X}{1 - \dsty\frac{1\cdot 2\cdot\X^2}{1-\dsty\frac{2\cdot 3\cdot\X^2}{1-\dsty\frac{3\cdot 4\cdot\X^2}{\ddots}}}}
\]
Again, we do not know of any closed stream calculus expression that defines this stream.
\end{exa}


\section{Non-standard Specifications}
\label{sec:non-standard}

All stream definitions that we discussed so far make use of the same concrete, ``canonical'' representation of streams: a stream
of elements of $A$ consists of a first element $\sigma(0) \in A$ (the ``head'') followed by another stream $\sigma' \in A^\omega$ (the ``tail'').
There are, however, many other possible stream representations and each of these different, ``non-standard'' representations yield new ways of defining streams and stream functions. We are now going to discuss a few of these alternative stream  representations and the resulting non-standard stream specifications.

\subsection{Stream representations}

Let us start by explaining what we mean by a stream representation:
A representation for streams over some set $A$  is a collection of functions that can be combined in order to turn
the set $A^\omega$ into a final stream automaton (possibly of a ``non-standard'' type; for example, we are going to encounter stream representations
that require automata in which states have two instead of one successor). This intuition has been made more precise in~\cite{KR:cocoop} where the corresponding, slightly more general notion is called a complete set of cooperations. Here we confine
ourselves to listing a few examples.

\begin{exa}\label{ex:representations}\hfill
\begin{enumerate}
\item\label{ex:h-delta} We can supply the set $A^\omega$ of streams over a field  $A$ with the following
structure.
For $\sigma \in A^\omega$ we define
\[
\Delta \sigma = \;
(\sigma(1) - \sigma(0) , \,
\sigma(2) - \sigma(1) , \,
\sigma(3) - \sigma(2) , \, \ldots )
\]
(cf. \cite{SloaneP95,PavlovicE98,Rut05:MSCS-stream-calc}).
The $\Delta$-operator plays a central role in the area of Finite Difference Calculus~\cite{Boole} and is often referred to as the forward
difference operator.  It can be seen as a discrete derivative operator
for integer functions and provides a tool for finding recurrence relations in integer sequences
(cf.~e.g.~\cite[Section~2.5]{SloaneP95}).
 It is not difficult to see that $A^\omega$ together with the map
\[
\tup{\zero{(\_)}, \, \Delta} : A^\omega \to A \times A^\omega
\;\;\;\;\;\;
\sigma \mapsto \, \tup{\sigma(0) , \, \Delta \sigma}
\]
is a final stream automaton.
\item\label{ex:h-diff} Another structure on $A^\omega$ is obtained
by defining
\[
\frac{d}{dX} \sigma = \,
( \sigma(1) , \, 2 \cdot \sigma(2) , \, 3 \cdot \sigma(3) , \, \ldots )
\]
for $\sigma \in A^\omega$. Again $(A^\omega, \langle \zero{(\_)}, \frac{d}{dX} \rangle)$
is a final stream automaton. 
The operator $\frac{d}{dX}$ computes the derivative of
a formal power series and has been used in~\cite{PavlovicE98} in order to establish a connection
between calculus and the theory of coalgebras.
\item In a similar fashion lots of examples could be designed: Given a set $A$ together with
some operation $o:A \times A \to A$, we define
\[
\Delta_o \sigma = \;
(o(\sigma(0),\sigma(1)) , \,
o(\sigma(1),\sigma(2)) , \,
o(\sigma(2),\sigma(3)) , \, \ldots )
\]
and we can see that $A^\omega$ together with the map $\tup{\zero{(\_)}, \, \Delta_o} : A^\omega \to A\times A^\omega$ is a final stream automaton provided that for any $a \in A$
the map $\lambda b. o(a,b)$ has an inverse.
\end{enumerate}
\end{exa}

\noindent The fact that $\tail$, $\Delta$ and $\frac{d}{dX}$ all give rise to a final stream automaton structure implies that there are unique stream isomorphisms between these three structures. These isomorphims can be viewed as transforms which leads to a fascinating coinductive approach to analytic calculus as first observed in \cite{PavlovicE98}. More recently, the Newton transform between the $\Delta$- and $\tail$-structures has been studied in \cite{BHPR:Newton-ICTAC}.

But non-standard stream representations are not limited to standard  stream  automata as the following two interesting examples show. In order to formulate them we need the notion of a $2$-stream automaton which 
generates an infinite binary tree {\em representing a stream} rather than a stream of symbols directly.
\begin{defi}
	A \emph{$2$-stream automaton} is a set $Q$ (of states) together with a function
	$\tup{o,d_0,d_1}:Q \to A \times Q \times Q$. A morphism
	between two $2$-stream automata $\tup{o,d_0,d_1}:Q \to A \times Q \times Q$ and $\tup{p,e_0,e_1}:P \to A \times P \times P$ is a function
	$f: Q \to P$ such that $p(f(q)) = o(q)$ and $e_i(f(q)) = f(d_i(q))$ for
	$i=0,1$ and for all $q \in Q$.
\end{defi}
The above definition has an obvious generalisation to $k$-stream automata.
Note that in this sense a stream automaton is just a $1$-stream automaton.

In Example \ref{ex:streams_as_trees} below,
we describe two ways of representing the set of streams
as a final $2$-stream automaton. These representations use the stream operations
$\even\colon\strms{A}\to\strms{A}$ and
$\odd\colon\strms{A}\to\strms{A}$:
\begin{eqnarray}
     \even(\sigma) & \coloneqq & (\sigma(0), \sigma(2), \sigma(4), \dots ) \label{eq:def-even}\\
     \odd (\sigma) & \coloneqq & (\sigma(1), \sigma(3), \sigma(5), \dots) \label{eq:def-odd}
\end{eqnarray}

\begin{exa}\label{ex:streams_as_trees}\hfill
Here are two examples of non-standard stream representations, based on $2$-stream automata.
  \begin{enumerate}
    \item\label{oddeven}
     The $2$-stream automaton with state set $\strms{A}$ and structure map
     $$ \sigma \mapsto \langle \zero{\sigma},\even (\sigma'),\odd (\sigma') \rangle:A^ \omega \to A \times A^\omega \times A^\omega$$
     is final among all $2$-stream automata (cf.~\cite{EGHKM,HKRW:k-regular}).
    \item\label{evenodd} The set $A^\omega$ together
    with the structure map
    $$\sigma \mapsto \tup{\zero{\sigma}, \even(\sigma),\odd(\sigma)}: A^\omega \to A \times A^\omega \times A^\omega$$
    is not final
    among all $2$-stream automata but among all {\it zero-consistent} $2$-stream automata (cf.~\cite{KR:autseq}), i.e.,
    among all $2$-stream automata
	$(Q,\tup{o,d_0,d_1})$ such that
	for all $q \in Q$ we have
	$o(d_0(q)) = o(q)$.
In Section~\ref{sec:automatic}, we will see that this slightly weaker finality property is sufficient for obtaining a syntactic stream definition format.
  \end{enumerate}
\end{exa}

\subsection{Simple non-standard specifications}
Next we discuss stream specifications that use the above non-standard stream representations.
 The first thing to note is that for the representations in Example~\ref{ex:representations} we can easily define
 non-standard variations of the simple, linear and context-free specifications discussed earlier.

 This can be done as follows: given any of the non-standard tail operations $\partial \in \{ \Delta, \frac{d}{dX} ,\Delta_o \}$
 and a simple, linear or context-free equation system over a set $X= \{x_i \mid i\in I\}$ of variables with
 \[ x_i(0) = a_i \qquad \mbox{ and } \qquad x_i' = y_i  \qquad \mbox{ for } i \in I,\]
we call the system of equations
\[ x_i(0) = a_i \qquad \mbox{ and } \qquad \partial(x_i) = y_i  \mbox{ for } i \in I,\]
obtained by replacing all derivatives $x_i'$ with the non-standard derivatives
$\partial (x_i)$,
a simple, linear or context-free $\partial$-specification, respectively.
As before, 
solutions for such systems of equations are functions $h:X \to A^\omega$ that preserve the equations.
As in the standard case, existence of unique solutions is guaranteed by the fact that each non-standard stream representation induces a final coalgebra on the set of streams.

\begin{exa}\label{ex:simple_nonstandard}
	Let $A=\mathbb{R}$ be the field of real numbers. The equations
	\[ x(0) = 1, \qquad \Delta(x) = x \]
	are an example of a simple $\Delta$-specification of the stream
	\[ (1,2,4,8, \dots).\]
	Similarly, the equations
	\[ x(0) = 1, \qquad  \frac{d }{dX} (x) = x \]
	are a simple $ \frac{d}{dX}$-specification of the stream
	\[ \left(\frac{1}{0!}, \frac{1}{1!},  \frac{1}{2!}, \frac{1}{3!}, \frac{1}{4!},\dots \right). \]
\end{exa}
The following proposition is folklore and provides a large class of examples of streams
that can be defined using simple $\Delta$-specifications.

\begin{prop}\label{prop:poly_Delta}
	Let $d \in \bbN$. For all streams $\sigma \in \bbR^\omega$ we have 	
	$\Delta^d (\sigma) = (0,0,0,0,\dots)$ 
	iff there exists a polynomial $\varphi(x)$ over $\bbR$  of degree $<d$ such that $\sigma(n) = \varphi(n)$ for
	all $n \geq 0$.
\end{prop}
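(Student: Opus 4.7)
My plan is to prove both directions using Newton's forward difference formula, which expresses any stream in terms of its iterated differences at~$0$.

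The key setup: let $S\colon \bbR^\omega \to \bbR^\omega$ be the shift operator $S\sigma = \sigma'$, so $(S\sigma)(n) = \sigma(n+1)$, and let $I$ be the identity. Then $\Delta = S - I$, and since $S$ and $I$ commute as linear operators, the binomial theorem applied in the (commutative) operator ring gives
\[
S^n \;=\; (I + \Delta)^n \;=\; \sum_{k=0}^{n} \binom{n}{k}\, \Delta^k.
\]
Evaluating at position $0$, so that $(S^n \sigma)(0) = \sigma(n)$, yields Newton's formula:
\[
\sigma(n) \;=\; \sum_{k=0}^{n} \binom{n}{k}\, (\Delta^k \sigma)(0).
\]

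For the forward direction ($\Rightarrow$), I would observe that if $\Delta^d \sigma = (0,0,0,\dots)$, then for every $k \geq d$ we have $\Delta^k \sigma = \Delta^{k-d}(\Delta^d \sigma) = (0,0,0,\dots)$, so in particular $(\Delta^k \sigma)(0) = 0$ for $k \geq d$. Newton's formula then truncates to
\[
\sigma(n) \;=\; \sum_{k=0}^{d-1} \binom{n}{k}\, (\Delta^k \sigma)(0).
\]
Since $\binom{n}{k}$ is a polynomial in $n$ of degree exactly $k \leq d-1$, the right-hand side is a polynomial $\varphi(n)$ of degree $<d$, evaluated at $n$.

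For the reverse direction ($\Leftarrow$), the key point is that $\Delta$ reduces the degree of a polynomial by one: if $\varphi(x) = c_m x^m + (\text{lower})$ with $c_m \neq 0$, then $\varphi(x+1) - \varphi(x)$ has top-degree term $c_m\bigl((x+1)^m - x^m\bigr) = c_m\, m\, x^{m-1} + (\text{lower})$, of degree $m-1$. Hence if $\sigma(n) = \varphi(n)$ with $\deg \varphi < d$, then iterating $\Delta$ exactly $d$ times produces the stream associated with the zero polynomial, i.e.~$\Delta^d \sigma = (0,0,0,\dots)$. A quick induction on $d$ makes this precise (with the base case $d=0$ handled by the convention that the zero polynomial is the only polynomial of degree~$<0$).

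I do not foresee a genuine obstacle; the only point requiring some care is the legitimacy of the binomial expansion of $(I+\Delta)^n$, but since $S = I + \Delta$ and $I$ commutes with everything, this is just associativity and distributivity of linear-operator composition on $\bbR^\omega$, applied to a finite sum. The whole argument is essentially the discrete analogue of Taylor's theorem for polynomials.
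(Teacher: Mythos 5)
Your proof is correct, but it takes a genuinely different route from the paper's. The paper argues both directions by direct computation with monomials: for the direction ``polynomial $\Rightarrow$ $\Delta^d\sigma=0$'' it proves by induction on $d$ the claim that $\Delta^d$ applied to a degree-$d$ polynomial stream is the constant stream $a_d\,d!$ (essentially your degree-reduction lemma, but tracking the leading coefficient explicitly), and for the converse it takes the \emph{minimal} $d$ with $\Delta^d\sigma=0$, reads off the leading coefficient from the constant stream $\Delta^{d-1}\sigma=(r,r,\dots)$, subtracts $\lambda n.\,\tfrac{r}{(d-1)!}n^{d-1}$, and recurses. Your argument replaces that peeling-off induction with Newton's forward difference formula
$\sigma(n)=\sum_{k=0}^{n}\binom{n}{k}(\Delta^k\sigma)(0)$,
obtained from $S=I+\Delta$ and the binomial theorem for commuting operators; this is sound (the extension of the sum to $k\le d-1$ is harmless since $\binom{n}{k}=0$ for $k>n$), and it buys you an explicit closed form for the interpolating polynomial, namely $\varphi(x)=\sum_{k=0}^{d-1}(\Delta^k\sigma)(0)\binom{x}{k}$, rather than a recursive construction. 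The trade-off is that the paper's proof is entirely self-contained and elementary (it only uses additivity of $\Delta$ and the binomial expansion of $(n+1)^{k+1}$), whereas yours imports the operator-algebra identity $S^n=(I+\Delta)^n$; both are complete, and your handling of the degenerate case $d=0$ via the convention on the zero polynomial matches the paper's ``nothing to prove'' step.
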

\proof
	In order to simplify the notation in the proof, we write $\lambda n.(a_0 + a_1 n + \dots + a_d n^d)$ to denote
	the stream $\sigma$ defined, for all $n \geq 0$, by
	\[
	\sigma(n) = \, a_0 + a_1 n + \dots + a_d n^d
	\]
	Clearly, we have $\lambda n.a = (a,a,\dots)$, i.e., if the expression in the scope
	of $\lambda n$ does not contain a reference to $n$, the stream is constant.
	Furthermore we use the easily verifiable fact that $\Delta(\sigma + \tau) = \Delta(\sigma) + \Delta(\tau)$ for
	all streams $\sigma,\tau \in \bbR^\omega$.

	Suppose first that there exists some polynomial
	\[ \varphi(x) = a_0 + a_1 x + \dots + a_d x^d , \qquad a_0,\dots,a_d \in \bbR, a_d \not= 0\]
	of degree $d$ with $\sigma(n) = \varphi(n)$ for all $n \in \bbN$, i.e.,
	$\sigma = \lambda n.\varphi(n)$. 
	The following claim suffices to obtain $\Delta^{d+1}(\sigma) = 0$ as required:
		
	{\noindent \bf Claim} $\Delta^d(\sigma) = \lambda n. a_d d!$ \\
	The proof of the claim is by induction on $d$. 
	\begin{description}
		\item[Case] $d=0$. Then $\sigma = \lambda n.a_0$ and
		$\Delta^0(\sigma) = \sigma = \lambda n. a_0 0!$ 
		as required.
		\item[Case] $d = k + 1$. Then
		\begin{eqnarray*}
			\Delta^{k+1} (\sigma) & = & \Delta^{k+1}(\lambda n.(a_0 + a_1 n +\dots + a_{k+1} n^{k+1}))  \\	
				& = & \Delta^{k+1} (\lambda n.(a_0 + a_1 n + \dots + a_k n^k)) + \Delta^{k+1}(
				\lambda n.(a_{k+1} n^{k+1})) \\
				& \stackrel{\mbox{\tiny I.H.}}{=} & 0 + \Delta^k (\lambda n. (a_{k+1} (n+1)^{k+1} - a_{k+1}  n^{k+1})) \\
				& = &  \Delta^k (\lambda n.(a_{k+1} n^{k+1} + a_{k+1} \binom{k+1}{k} n^k + r(n) - a_{k+1} n^{k+1})) \\
				& & \mbox{ where } r(n) \mbox{ is a poly of degree } < k \\
				& = & \Delta^k(\lambda n.(a_{k+1} (k+1) n^k + r(n))) \\
				& & \mbox{ where } r(n) \mbox{ is a poly of degree } < k \\
				& \stackrel{\mbox{\tiny I.H.}}{=}  & \lambda n. a_{k+1} (k+1) k! + 0 = \lambda n. a_{k+1} (k+1)!
		\end{eqnarray*}
	\end{description}
	
	Conversely, consider  a stream $\sigma$ such that $\Delta^d(\sigma) = (0,0,0,\dots)$ and suppose
	that $d \in \bbN$ is the minimal such $d$. In case $d = 0$ there is nothing to prove.
	If $d > 0$ we have that $\Delta^{d-1}(\sigma) = (r,r,\dots)$ for some $r \not= 0$.
	Define $a := \frac{r}{(d-1)!}$. Then we put $\tau \mathrel{:=} \sigma - \lambda n. (a n^{d-1})$ such that $\sigma = \tau + \lambda n. (a n^{d-1})$.
	By the claim that we proved above this implies
	\[ \Delta^{d-1}(\sigma) = \Delta^{d-1}(\tau + \lambda n.(a n^{d-1})) = \Delta^{d-1}(\tau) + \cns{a (d-1)!} = \Delta^{d-1}(\tau) + \cns{r} .\]
	This clearly implies $\Delta^{d-1}(\tau) = (0,0,0,0,\dots)$ and hence we can apply the I.H. to $\tau$ in order to
	obtain a polynomial $\varphi(x) = a_0 + a_1 x + \dots + a_{d-2} x^{d-2}$ such that
	$\tau = \lambda n.(a_0 + a_1 n + \dots + a_{d-2} n^{d-2})$. This implies
	$\sigma = \lambda n.(a_0 + a_1 n + \dots + a_{d-2} n^{d-2} + a n^{d-1})$, i.e., for all $n \in \bbN$
	we have $\sigma(n) = \psi(n)$ for some polynomial $\psi(x)$ of degree $< d$.	
\qed

We are now going to compare finite simple/linear/context-free $\Delta$- and $ \frac{d}{dX} $-specifications
to the corresponding $\tail$-specifications of real-valued streams $\sigma \in \bbR^\omega$.
It is not too difficult to see that the set of streams that have a finite simple $\Delta$-specification
and the set of streams that have a finite simple $\tail$-specification are incomparable. This is demonstrated by the following examples:

\begin{exa}\hfill
	\begin{enumerate}
		\item Recall that a stream $\sigma$ has finite simple
			$\tail$-specification iff $\sigma$ is ultimately periodic.  Therefore the stream
			$$\sigma = (0, 1, 0, 1, 0, 1,\dots) \in \bbZ^\omega$$
			has a finite simple $\tail$-specification. One can prove by induction that
			$\sigma$ has infinitely many distinct
			$\Delta$-derivatives which implies that $\sigma$ does not have a finite simple $\Delta$-specification.
                        However, when $A = \bbZ/n\bbZ$ is a finite ring, $\sigma$ is definable by a 
                        finite simple $\Delta$-specification\footnote{This observation is thanks to Michael Keane and 
                        Henning Basold.}.
		\item It follows from Proposition~\ref{prop:poly_Delta} that the stream
		        $$ \sigma = (0,1,2^2,3^2,4^2, \,\dots)$$
		       has a finite simple $\Delta$-specification, but obviously no
		        finite simple $\tail$-specification.
	\end{enumerate}
\end{exa}

\noindent Finite linear  $\Delta$-specifications define the same class of streams as their standard linear counterparts.
This follows from the fact that
\[ \Delta(\sigma) = \sigma' - \sigma \qquad \mbox{ and } \qquad \sigma' = \Delta(\sigma) + \sigma .\]
Therefore any linear specification can be replaced by the equivalent $\Delta$-specification:
\[
\left.
\begin{array}{l}
	x_i(0) =  a_i \\
	x_i' = t
\end{array} \right\}
\quad \Rightarrow  \quad \left\{
\begin{array}{l}
	x_i(0) = a_i \\
	\Delta(x_i)= t - x_i
\end{array}
\right.
\]

Vice versa, any linear $\Delta$-specification can be easily transformed into
an equivalent standard one.  Analogously, context-free $\Delta$-specifications and
standard context-free specifications define precisely the same class of streams.
We summarise our observations in the following proposition.

\begin{prop}
	The set of streams $\sigma \in \bbR^\omega$ definable with finite simple $\tail$-speci\-fications and the set of streams
	definable with finite simple $\Delta$-specifications are
	incomparable. 
	 Furthermore we have the following equi\-valences:
	\begin{itemize}
		\item  Any stream $\sigma \in \bbR^\omega$ 
		is definable with
		a finite linear $\tail$-specification iff $\sigma$ is definable with a finite
		linear $\Delta$-specification.
		\item Any stream
	$\sigma \in \bbR^\omega$ 
	is definable with
	a finite context-free $\tail$-specification  iff $\sigma$ is definable with a finite
	context-free $\Delta$-specification.
	\end{itemize}
\end{prop}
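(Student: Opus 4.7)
The plan is to treat the three claims in turn.

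For the incomparability, it suffices to verify the two witness streams displayed just before the proposition. The stream $\sigma = (0,1,0,1,\ldots)$ has a finite simple $\tail$-specification, since it is ultimately periodic (Proposition~\ref{prop:class-simple}). To rule out a finite simple $\Delta$-specification I would show by a direct induction, using $\Delta(1,-1,1,-1,\ldots) = -2\cdot(1,-1,1,-1,\ldots)$, that
\[
\Delta^n \sigma \,=\, (-2)^{n-1}\cdot (1,-1,1,-1,\ldots) \qquad \text{for all } n\geq 1,
\]
so the initial values $(-2)^{n-1}$ are pairwise distinct in $\bbR$. Hence $\sigma$ has infinitely many distinct $\Delta$-derivatives and therefore cannot generate a finite subautomaton of the final $\Delta$-stream automaton, ruling out any finite simple $\Delta$-specification. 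For the squares $\sigma = (0,1,4,9,\ldots)$, Proposition~\ref{prop:poly_Delta} gives $\Delta^3\sigma = (0,0,\ldots)$, so the $\Delta$-subautomaton generated by $\sigma$ is finite. On the other hand, $\sigma$ is unbounded, hence not ultimately periodic, so no finite simple $\tail$-specification exists.

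For the two equivalences, the central observation is the pair of identities
\[
\Delta(\sigma) \,=\, \sigma' - \sigma, \qquad \sigma' \,=\, \Delta(\sigma) + \sigma,
\]
as noted in the text just before the proposition. I would translate between the two kinds of system syntactically: a $\tail$-system $x_i(0)=a_i,\; x_i'=t_i$ is converted to the $\Delta$-system $x_i(0)=a_i,\; \Delta(x_i) = t_i - x_i$, and conversely a $\Delta$-system $x_i(0)=a_i,\; \Delta(x_i) = s_i$ is converted to $x_i(0)=a_i,\; x_i' = s_i + x_i$. Since both $\V(X)$ and $\M(X^*)$ are closed under addition and contain $X$, the expressions $t_i - x_i$ and $s_i + x_i$ lie in $\V(X)$ (respectively $\M(X^*)$) whenever $t_i$ and $s_i$ do. Thus the translation preserves the specification format in both the linear and the context-free case.

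To see that the two translated systems have the same unique solution, I would argue that if $h\colon X \to \bbR^\omega$ is a solution of the $\tail$-system then, using linearity of the algebraic extension of $h$ (denoted $h^\sharp$ in the linear case, $h^*$ in the context-free case),
\[
\Delta(h(x_i)) \,=\, h(x_i)' - h(x_i) \,=\, h^\sharp(t_i) - h(x_i) \,=\, h^\sharp(t_i - x_i),
\]
so $h$ also solves the $\Delta$-system; a symmetric computation handles the reverse direction. Uniqueness of solutions in the $\Delta$-setting is obtained exactly as in Propositions~\ref{prop:linear-solutions} and~\ref{prop:context-free-solutions}, because the map $\tup{\zero{(\_)},\Delta}\colon \bbR^\omega \to \bbR \times \bbR^\omega$ is itself a final stream automaton (Example~\ref{ex:representations}, item~\ref{ex:h-delta}), so the same coinductive construction that solved $\tail$-systems over $\V(X)$ or $\M(X^*)$ applies verbatim. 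I do not anticipate a substantive obstacle; the only point requiring genuine care is verifying that the uniqueness-of-solutions machinery transfers to the non-standard setting, which is immediate from this finality.
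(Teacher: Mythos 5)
Your proposal is correct and follows essentially the same route as the paper: the paper establishes incomparability via exactly the two witness streams $(0,1,0,1,\ldots)$ and $(0,1,2^2,3^2,\ldots)$ (using Proposition~\ref{prop:poly_Delta} for the latter), and obtains both equivalences from the identities $\Delta(\sigma)=\sigma'-\sigma$ and $\sigma'=\Delta(\sigma)+\sigma$ via the same syntactic translation. You merely fill in details the paper leaves implicit (the explicit induction $\Delta^n\sigma=(-2)^{n-1}\cdot(1,-1,1,-1,\ldots)$, closure of $\V(X)$ and $\M(X^*)$ under the translation, and the transfer of the uniqueness machinery via finality of $\tup{\zero{(\_)},\Delta}$), all of which are sound.
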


\noindent When comparing $\frac{d}{dX}$-specifications with $\tail$-specifications, the following identities for
arbitrary streams $\sigma \in \bbR^\omega$ are
useful:
\begin{eqnarray}
	\frac{d}{dX}(\sigma) & = & \sigma' \odot \nats  \label{ident_ddX_1} \\
	\sigma' & = & \frac{d}{dX}(\sigma) \odot \nats^{-1}  \label{ident_ddX_2}
\end{eqnarray} 
where
\begin{eqnarray*}
	\nats  & = & (1,2,3,4,\dots) \\
	\nats^{-1} & = & (1, \frac{1}{2},\frac{1}{3},\frac{1}{4},\dots)
\end{eqnarray*}
and where $\odot$ denotes the so-called Hadamard-product (element-wise multiplication) given by
\[ \sigma \odot \tau = (\sigma(0)\tau(0),\sigma(1)\tau(1),\sigma(2)\tau(2),\dots).\]
This means that any simple $\tail$-specification can be replaced
by a simple $\frac{d}{dX}$-specification
in which we are also allowed to employ $\odot$ and $\nats$:
\[ \left. \begin{array}{l}
	x_i(0) =  a_i \\
	x_i' = x_j
\end{array} \right\}
\quad \Rightarrow  \quad \left\{
\begin{array}{l}
	x_i(0) = a_i \\
	\frac{d}{dX} (x_i)= x_j \odot \nats
\end{array}
\right.
\]
Similarly any simple $\frac{d}{dX}$-specification can be replaced by a simple
$\tail$-specification in which we are allowed to use $\odot$ and $\nats^{-1}$:
\[ \left. \begin{array}{l}
	x_i(0) =  a_i \\
	\frac{d}{dX} (x_i) = x_j
\end{array} \right\}
\quad \Rightarrow  \quad \left\{
\begin{array}{l}
	x_i(0) = a_i \\
	x_i'= x_j \odot \nats^{-1}
\end{array}
\right.
\]
We use the description \emph{simple $\tail$-$\nats^{-1}$-specification}
for a simple $\tail$-specification that may contain $\odot \nats^{-1}$ on the right hand side
of the equation for the derivative a. Similarly, we define
\emph{simple $\frac{d}{dX}$-$\nats$-specifications}. The above identities can be used to show
that simple $\tail$-$\nats^{-1}$-specifications and simple $\frac{d}{dX}$-$\nats$-specifications
are equally expressive.

Note that without the extension by $\odot$, $\nats$ and $\nats^{-1}$ the simple $\tail$- and
$\frac{d}{dX}$-specifications are incomparable, as the following
example shows:

\begin{exa}\hfill
	\begin{enumerate}
		\item The stream $\sigma = (1,1,1,1,\dots)$ has a simple $\tail$-specification but
		   no simple $\frac{d}{dX}$-specifica\-tion. In order to see the second statement, we use
		   (\ref{ident_ddX_1}) and (\ref{ident_ddX_2})
		   to compute:
		   \[\begin{array}{lcl}
		   	\frac{d}{dX}(\sigma) & = & \nats \\[.5em]
			\frac{d}{dX}(\nats)  & = & \nats + \nats \odot \nats
		   \end{array}\]
		   and from here onwards it is easy to see that all the derivatives
		   $(\frac{d}{dX})^n(\sigma)$ for $n \in \bbN$ will be distinct and thus that $\sigma$
		   has no finite simple $\frac{d}{dX}$-specification.
		\item The stream $\sigma = (1,1, \frac{1}{2!},\frac{1}{3!}, \dots)$ has a finite simple
		   $\frac{d}{dX}$-specification (cf.~Ex.~\ref{ex:simple_nonstandard}) but obviously no
		   finite simple $\tail$-specification.
	\end{enumerate}
\end{exa}
\subsection{Stream specifications for automatic sequences}
\label{sec:automatic}

We conclude this section by discussing stream specifications that make
use of the stream representation from
Example~\ref{ex:streams_as_trees}.\ref{evenodd}.
We refer to Remark~\ref{rem:even-odd-of-tail} below for a discussion on
how these results could be obtained for
the representation from Example~\ref{ex:streams_as_trees}.\ref{oddeven}.

\begin{defi}{}\label{def:oddEven}
	A \emph{simple $\even$-$\odd$-stream specification over a set $X=\{x_i \mid i \in I \}$} of variables
	contains for every $x_i \in X$ three equations:
	\[ x_i(0) = a, \qquad \even(x_i) = y^i_1, \qquad \odd(x_i) = y^i_2 \]
	where $a \in A$ and $y^i_1, y^i_2 \in X$ and where
	the equations entail that 
	\begin{equation}\label{equ:zero}
	x_i(0) = (\even(x_i))(0).
	\end{equation} 
	The notion of entailment
	can be formalised using conditional equational logic as demonstrated in~\cite{KR:cocoop}.	
	Solutions are again functions $h:X \to A^\omega$ preserving the equations.
\end{defi}
Simple $\even$-$\odd$-stream specifications are called $\mathsf{zip}$-specifications in \cite{EGHKM}.
Note that an $\even$-$\odd$-stream specification is not a stream differential equation as the stream derivative is nowhere used.
Nevertheless, as shown in \cite{KR:cocoop,EGHKM}, an $\even$-$\odd$-stream specification is a syntactic representation of a type of stream automaton, namely, of a
zero-consistent 2-stream automaton (cf.~Example~\ref{ex:streams_as_trees}.\ref{evenodd}).

\begin{lem}
There is a 1-1 correspondence between
simple $\even$-$\odd$-stream specifications over a set $X$
and zero-consistent 2-stream automata with state space $X$.
Consequently, every simple $\even$-$\odd$-stream specifications
has a unique solution.
\end{lem}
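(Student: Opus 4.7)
The plan is to establish the bijection by reading off, in each direction, the three pieces of data that define a zero-consistent 2-stream automaton and then invoking the finality statement from Example~\ref{ex:streams_as_trees}.\ref{evenodd} to conclude existence and uniqueness of solutions.

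First, I would define the forward map. Given a simple $\even$-$\odd$-stream specification over $X$, encode it as the function $\langle o,d_0,d_1\rangle: X \to A \times X \times X$ where $o(x_i) = a_i$, $d_0(x_i) = y_1^i$, and $d_1(x_i) = y_2^i$. I then need to verify that the entailment condition~\eqref{equ:zero} is precisely the zero-consistency condition on 2-stream automata: the equation $x_i(0) = (\even(x_i))(0)$ forces $a_i$ to agree with the output assigned to $d_0(x_i) = y_1^i$, which in the automaton is $o(d_0(x_i))$. Hence $o(x_i) = o(d_0(x_i))$ for every $x_i \in X$, i.e.\ the automaton is zero-consistent.

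Next, I would define the inverse map. Given a zero-consistent 2-stream automaton $(X,\langle o,d_0,d_1\rangle)$, write down the specification $x_i(0) = o(x_i)$, $\even(x_i) = d_0(x_i)$, $\odd(x_i) = d_1(x_i)$. Zero-consistency immediately gives $(\even(x_i))(0) = o(d_0(x_i)) = o(x_i) = x_i(0)$, so~\eqref{equ:zero} is entailed. The two constructions are visibly mutually inverse, so we obtain the claimed bijection.

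For the second sentence, I would invoke finality. By Example~\ref{ex:streams_as_trees}.\ref{evenodd}, the coalgebra $\sigma \mapsto \langle \sigma(0),\even(\sigma),\odd(\sigma)\rangle : A^\omega \to A \times A^\omega \times A^\omega$ is final in the category of zero-consistent 2-stream automata. Via the correspondence above, a simple $\even$-$\odd$-stream specification on $X$ is exactly a zero-consistent 2-stream automaton on $X$, and a solution $h\colon X \to A^\omega$ preserving the equations is exactly a morphism of zero-consistent 2-stream automata from $(X,\langle o,d_0,d_1\rangle)$ into the final one. Finality then yields existence and uniqueness of $h$.

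The only real subtlety, which I would take care to spell out, is what ``entails'' in Definition~\ref{def:oddEven} means operationally: the condition must be read as a syntactic constraint on the triples $(a_i,y_1^i,y_2^i)$, namely $a_i = a_{y_1^i}$, so that the forward and inverse translations are well-defined in both directions. Once this is pinned down, everything else is a routine unpacking of definitions plus an appeal to finality.
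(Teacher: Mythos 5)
Your proof is correct and follows essentially the same route as the paper's: translate the specification into the coalgebra $\tup{\zero{(\_)},\even,\odd}\colon X \to A\times X\times X$, identify the entailment condition~\eqref{equ:zero} with zero-consistency, observe that solutions are exactly homomorphisms into $(A^\omega,\tup{\zero{(\_)},\even,\odd})$, and conclude by the finality stated in Example~\ref{ex:streams_as_trees}.\ref{evenodd}. Your explicit remark on reading the entailment as the syntactic constraint $a_i = a_{y_1^i}$ is a welcome clarification of a point the paper leaves implicit, but it does not change the argument.
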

\begin{proof}
An $\even$-$\odd$-stream specification over a set $X$
defines a $2$-stream automaton with set of states $X$ in the obvious way:
$$ \gamma \coloneqq \tup{\zero{(\_)}, \even, \odd}: X \to A \times X \times X$$
As the equations of an $\even$-$\odd$-stream specification have to entail (\ref{equ:zero})
for all $x \in X$,
we have that $(X,\gamma)$ is {\em zero-consistent}.
Conversely, the output and transitions of a zero-consistent 2-stream automaton can be written in the form of a simple $\even$-$\odd$-stream specification.
Solutions are now easily seen to correspond to (the obvious notion of) homomorphism for (zero-consistent) 2-stream automata.
By finality of $(A^\omega,\tup{\zero{(\_)}, \even,\odd})$, (cf.\ Example~\ref{ex:streams_as_trees}.\ref{evenodd}), we obtain for every zero-consistent $2$-automaton with state space $X$, a unique homomorphism $h:X \to A^\omega$ which is the unique solution to the corresponding $\even$-$\odd$-stream specification.
\end{proof}

Our interest in $\even$-$\odd$-stream specifications is rooted in their close relationship to 
$k$-automatic sequences \cite{AS03}.
For simplicity, we only treat the case where $k=2$, but all definitions and results can be straightforwardly generalised for any natural number $k$. 

Let us first state the definition of the reverse binary encoding of natural numbers and of automatic sequences.

\begin{defi}{}\label{def:bbin_encoding}
	For $n \in \bbN$ we define the $\bbin$-encoding $\bbin(n)$ as the standard binary encoding read backwards, i.e., with
	the least significant bit first. For example: $\bbin(0)=\epsilon$, $\bbin(1)=1$, $\bbin(2) = 01$, $\bbin(5) = 101$, $\bbin(6) = 011$, etc.
\end{defi}
The following is one of several equivalent definitions of 2-automatic sequences.

\begin{defi}\label{def:automatic}
	A stream $\sigma \in A^\omega$ is called \emph{2-automatic} if
	it is generated by a finite zero-consistent $2$-automaton, i.e., if
	there exists a finite zero-consistent $2$-automaton
	$\mathcal{Q}_\sigma = (Q, \langle o,d_0,d_1 \rangle:Q \to A \times Q \times Q)$
	and a state $q_\sigma \in Q$	
	such that for all $n \in \bbN$ we have
	\[  \sigma(n) = o\left(d_{\bbin(n)}(q_\sigma)\right),\]
	where for $w \in 2^*$ the function $d_w:Q \to Q$ is inductively defined
	by $d_\epsilon(q) = q$ and $d_{wi}(q) = d_i(d_w(q))$.
	In other words, the $n$-th element of $\sigma$ is obtained as output
	from $\mathcal{Q}_\sigma$ by feeding the $\bbin$-encoding of $n$
	to the $2$-automaton $\mathcal{Q}_\sigma$ starting from position
	$q_\sigma$.
\end{defi}

The following characterisation result from \cite{KR:autseq} is now immediate.
\begin{thm}
Let $\sigma \in A^\omega$ be a stream over some alphabet $A$. The following are equivalent
\begin{enumerate}
\item $\sigma$ is 2-automatic.
\item $\sigma$ is the solution to a finite simple $\even$-$\odd$-stream specification.
\item The sub-automaton of $(A^\omega, \langle  \zero{(\_)},\even,\odd \rangle)$ generated by $\sigma \in A^\omega$ is finite.
\end{enumerate}
\end{thm}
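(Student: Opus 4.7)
My plan is to reduce both equivalences to the preceding lemma, which gives a bijective correspondence between finite simple $\even$-$\odd$-stream specifications with state space $X$ and finite zero-consistent $2$-stream automata $(X, \tup{o, d_0, d_1})$, and identifies the unique solution of such a specification with the unique homomorphism $h : X \to A^\omega$ into the final zero-consistent $2$-stream automaton $(A^\omega, \tup{\zero{(\_)}, \even, \odd})$. The heart of the argument is the auxiliary identity
\[
h(q)(n) \;=\; o(d_{\bbin(n)}(q)) \qquad \text{for all } q \in X,\, n \in \bbN,
\]
which bridges the $\bbin$-based definition of 2-automaticity in Definition~\ref{def:automatic} and the coinductive behaviour map supplied by the lemma.

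I would establish this identity by induction on $n$. The base case $n = 0$ is immediate: $\bbin(0) = \emptyword$, so $d_{\bbin(0)}(q) = q$, and $h(q)(0) = o(q)$ by the characterising equations of $h$. For $n \geq 1$, write $n = 2m + b$ with $b \in \{0,1\}$; then $\bbin(n) = b \cdot \bbin(m)$ by Definition~\ref{def:bbin_encoding}, and unfolding the recursion $d_{wi}(p) = d_i(d_w(p))$ letter-by-letter along $\bbin(m)$ yields the key rewriting $d_{b \cdot \bbin(m)}(q) = d_{\bbin(m)}(d_b(q))$. Meanwhile, the defining equations $\even(h(q)) = h(d_0(q))$ and $\odd(h(q)) = h(d_1(q))$ give $h(q)(n) = h(d_b(q))(m)$, which by the induction hypothesis equals $o(d_{\bbin(m)}(d_b(q)))$; combining these yields $h(q)(n) = o(d_{\bbin(n)}(q))$.

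The equivalence $(1) \Leftrightarrow (2)$ is then immediate: a witness for $(1)$ is, by Definition~\ref{def:automatic}, a finite zero-consistent $2$-automaton with a distinguished state $q_\sigma$, which under the lemma is equally data for a finite simple $\even$-$\odd$-stream specification whose solution sends $q_\sigma$ to $\sigma$ by the auxiliary identity, and the converse reads the same diagram backwards. For $(2) \Leftrightarrow (3)$, direction $(2) \Rightarrow (3)$ follows because the finite image $h(X) \subseteq A^\omega$ is already a sub-$2$-automaton of $(A^\omega, \tup{\zero{(\_)}, \even, \odd})$ containing $\sigma$, hence the sub-automaton generated by $\sigma$ (a subset of $h(X)$) is finite; for $(3) \Rightarrow (2)$, the sub-automaton of $(A^\omega, \tup{\zero{(\_)}, \even, \odd})$ generated by $\sigma$ is automatically zero-consistent, since $(\even(\tau))(0) = \tau(0)$ for every stream $\tau$, so by the lemma it underlies a finite simple $\even$-$\odd$-stream specification of which $\sigma$ is a solution (via the variable for $\sigma$ itself). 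I expect the only delicate point to be in the inductive step above: keeping the least-significant-bit-first convention for $\bbin$ straight, so that $\bbin(n) = b \cdot \bbin(m)$ unfolds on the $d_w$ side to $d_{\bbin(m)}(d_b(q))$ and not to $d_b(d_{\bbin(m)}(q))$.
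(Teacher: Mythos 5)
Your proposal is correct and follows exactly the route the paper intends: the paper states this theorem with no written proof, declaring it ``immediate'' from the preceding lemma (the correspondence between simple $\even$-$\odd$-specifications and zero-consistent $2$-stream automata) and Definition~\ref{def:automatic}, and your argument simply fills in those details. Your key identity $h(q)(n) = o(d_{\bbin(n)}(q))$, proved by induction on $n$ with the least-significant-bit-first unfolding $d_{b\cdot\bbin(m)}(q) = d_{\bbin(m)}(d_b(q))$, is precisely the bridge the paper leaves implicit, and the handling of $(2)\Leftrightarrow(3)$ via the finite image $h(X)$ and the automatic zero-consistency of subautomata of the final coalgebra is also the intended argument.
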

The states of the sub-automaton mentioned in item 3 in the above theorem
are sometimes referred to as the $2$-kernel of $\sigma$. Hence another equivalent defintion of $2$-automaticity is to require that the $2$-kernel is finite, cf.~\cite{AS03}.

\begin{rem}\label{rem:even-odd-of-tail}
  The stream representation from Example~\ref{ex:streams_as_trees}.\ref{oddeven}
  gives rise to an automaton which is final among all 2-stream automata, and it corresponds to an \emph{$\even$-$\odd$-of-tail stream specification format} in which each equation specifies $\sigma(0), \even(\sigma')$ and $\odd(\sigma')$. Such specifications are equivalent to systems of stream differential equations of the form:
  \begin{equation}\label{eq:zip-tail-spec}
    x_i(0) = a, \qquad \sigma' = \zip(x_j,x_k)
    \end{equation}
  where the stream operation $\zip\colon \strms{A}\times\strms{A} \to \strms{A}$ is defined by:
  \[ \zip(x,y)(0) = x(0), \qquad \zip(x,y)' = \zip(y,x').\]
  It is easy to see that
  $\zip\colon \strms{A}\times\strms{A} \to \strms{A}$ and the pairing $\tup{\even,\odd} \colon\strms{A} \to \strms{A}\times\strms{A}$ are each others inverses. This is what yields the equivalence of the $\even$-$\odd$-of-tail format and \eqref{eq:zip-tail-spec}.
  One can show that every simple $\even$-$\odd$-stream specification can be transformed into one in the format given in \eqref{eq:zip-tail-spec}, and one obtains again a characterisation of 2-automatic streams, but with a different encoding of the natural numbers. For more details, we refer to \cite{HKRW:k-regular}, where also $k$-regular sequences are characterised in terms of solutions to a linear generalisation of the format in \eqref{eq:zip-tail-spec}.
\end{rem}

\begin{exa}
	As one example of a $\even$-$\odd$-specification consider the following simple
	specification of the Thue-Morse sequence from Example~\ref{exam:Thue-Morse}:
	\[
	\begin{array}{rclcrcl}
		\mathrm{TM}(0) & = & 0 & \qquad & N(0) & = & 1 \\
		\even(\mathrm{TM}) & = & \mathrm{TM} & & \even(N) & = & N \\
		\odd(\mathrm{TM})  & = & N & & \odd(N) & = & \mathrm{TM} \\
	\end{array}
	\]
	Clearly the given equations entail that $(\even(\mathrm{TM}))(0) = \mathrm{TM}(0)$
	and $(\even(N))(0) = N(0)$ as required by the definition of an $\even$-$\odd$-specification.
	The unique solution for this specification maps $\mathrm{TM}$ to the Thue-Morse sequence.
        Much more on this way of looking at automatic sequences can be found in~\cite{EGHKM,KR:autseq}. 
\end{exa}

\section{The Syntactic Method}
\label{sec:syntactic-method}

The examples of the previous sections illustrate the general approach to defining
streams and stream operations by systems of SDEs.
In this section, we discuss a general method for showing that 
many such systems of SDEs have a unique solution.
Because the method associates with each such system of SDEs
a set of terms, we call it \emph{syntactic}.
As we shall see, the method will work for all systems of SDEs
that satisfy a rather general condition on their (syntactic) shape.
Furthermore we will show that the various specific families of SDEs that we discussed in Sections \ref{sec:simple-specs},
\ref{sec:linear-specs}  and \ref{sec:context-free-specs} can be seen as instances of the syntactic method.
An earlier version of the material in this section
is found in \cite{KNR:SDE}.

The basic idea of the syntactic method is as follows.
Given a signature $\Sigma$ with operation symbols $\syn{f}$,
let $\term_\Sig(\strms{A})$ denote the set of
all $\Sigma$-terms over $\strms{A}$.
Any system of SDEs that
for each $k$-ary symbol $\syn{f}$ in $\Sig$ and any streams
$\sigma_1, \ldots, \sigma_k$
contains an SDE that defines
$\syn{f}(\sigma_1, \ldots, \sigma_k)$,
yields an inductive definition of a stream automaton
$\tup{o,d} \colon \term_\Sig(\strms{A}) \to A \times \term_\Sig(\strms{A})$
which has terms as states.
The stream solutions are obtained via coinduction:
\[\xymatrix{
\term_\Sig(\strms{A}) \ar@{.>}[r]^-{\beh{-}} \ar[d]_-{\tup{o,d}} & \strms{A} \ar[d]^-{\zeta} \\
A \times \term_\Sig(\strms{A})\ar@{.>}[r] &  A \times \strms{A}
}\]
The behaviour map $\beh{-}$ thus yields for each term $t \in\term_\Sig(\strms{A})$ a stream $\beh{t} \in \strms{A}$, in other words,
it defines an algebra (of signature $\Sigma$)
on the set of streams.
In particular, the stream defined by $\syn{f}(\sigma_1, \ldots, \sigma_k)$
is obtained as $\sem{\syn{f}(\sigma_1, \ldots, \sigma_k)}$.

\begin{exa}
\label{ex:gsos-spec}
In order to define the sequence of natural numbers as in
\eqref{eq:sde-nats}, we take
$A=\bbN$ and $\Sigma = \{\ones,\nats,+\}$
where $\ones$ and $\nats$ are 0-ary operations (constants),
and $+$ is binary. The associated (infinite) system of SDEs 
consists of all defining SDEs put together:
\[
\begin{array}{lclclcl}
\ones(0) &=& 1, & \quad & \ones' &=& \ones,\\
\nats(0) &=& 0, & & \nats' &=& \nats+\ones,\\
(\sigma+\tau)(0) &=& \sigma(0)+\tau(0), & & (\sigma+\tau)' &=& \sigma'+\tau', \qquad\qquad \text{for all } \sigma, \tau \in  \strms{A}.\\
\end{array}
\]
\end{exa}
\bigskip
The shapes of the SDEs seen so far are all instances of the
general format called \emph{stream GSOS}, cf.~\cite{Klin11}.
Informally stated, a system of SDEs is in the stream GSOS format if
for all $k$-ary operations $\syn{f}$ in $\Sigma$,
the SDE defining $\syn{f}$ has the shape:
\[
\syn{f}(\sig_1,\ldots, \sig_k)(0) = a, \quad
\syn{f}(\sig_1,\ldots, \sig_k)' = t 
\]
where $a\in A$ depends only on $\sig_1(0), \ldots, \sig_k(0)$,
and $t$
is a $\Sigma$-term over $\sig_1,\ldots, \sig_k,\sig'_1,\ldots, \sig'_k$
that depends only on $\sig_1(0), \ldots, \sig_k(0)$.

To see how things can go wrong when straying from the GSOS format,
consider the following SDE (for the signature $\Sig$ containing a single constant $\nm{c}$):
\begin{equation}\label{eq:non-GSOS-example}
 \nm{c}(0)=1, \quad \nm{c}' = \nm{c}'
\end{equation}
This SDE does not have a unique solution,
since any stream starting with a 1 is a solution, and
indeed \eqref{eq:non-GSOS-example} is not in the GSOS format.
The reason is that the derivative of $\nm{c}$ should be defined as a
term $t \in \term_\Sig(\emptyset)$, and
$\nm{c}' \notin  \term_\Sig(\emptyset) = \{\nm{c}\}$
(since the derivative operation is not part of the signature).
Moreover, note that it is not possible to extend the signature with the derivative operation.
This follows from the fact that all stream operations defined in the GSOS format are {\em causal} (as we will see in Proposition~\ref{prop:gsos-implies-causal}), a property which the derivative operation lacks.
We return to causal operations in Section~\ref{ssec:causal-ops}.

In the remainder of this section we present and prove the
correctness of the syntactic method for SDEs in the stream GSOS format.
This result follows from more general insights in the theory of
bialgebras and abstract GSOS, cf.~\cite{Bartels:PhD,Klin11,TuriPlotkin:LICS-GSOS},
and we give a brief summary of this more abstract, categorical
presentation in Section~\ref{sec:stream-gsos}. In the current section,
we wish to present a self-contained, elementary proof of this fact.


\subsection{Terms and algebras}
\label{ssec:terms-algebras}

A signature $\Sigma$ is a collection of operation symbols $\syn{f}$,
each of which has an arity $k$. Nullary  operations (with arity $0$)
are called constants, and unary operations are called functions.
We write $\Sig_k$ for the set of $k$-ary operations in $\Sig$.
The set of $\Sig$-terms over a set $X$ (of generators)
is denoted by $\term_\Sigma(X)$, and defined inductively
as the least set $T$ that contains $X$ and is closed under the following
formation rule: if $t_1, \ldots, t_k$ are in $T$ and 
$\syn{f}$ is in $\Sig_k$, $k \in \bbN$,
then $\syn{f}(t_1, \ldots, t_k)$ is in $T$.

A \emph{$\Sigma$-algebra} $\tup{X,\alpha}$ consists of a carrier set $X$
and a collection of maps
$\alpha =\{ f_\alpha \colon X^k \to X \mid \syn{f}\in\Sig_k, k \in \bbN \}$
containing for each $k$-ary operation $\syn{f} \in \Sig$,
a map $f_\alpha \colon X^k \to X$ interpreting $\syn{f}$.
A \emph{homomorphism of $\Sig$-algebras} from
$\tup{X,\alpha}$ to $\tup{Y,\beta}$ is a function
$h\colon X \to Y$ that respects the algebra structure, i.e.,
for all $\syn{f} \in \Sig_k$, $k \in \bbN$, and all $x_1, \ldots, x_k \in X$:
$h(f_\alpha(x_1,\ldots, x_k)) = f_\beta(h(x_1),\ldots, h(x_k))$.

For any $X$, the set $\term_\Sigma(X)$ of $\Sig$-terms over $X$
is a $\Sig$-algebra $\tup{\TSig(X),\gamma_\Sig}$ where $\gamma_\Sig$ is given by construction of terms.
In fact, it is the so-called \emph{free $\Sig$-algebra over $X$}
which means 
that  if $\tup{Y,\alpha}$ is a $\Sig$-algebra
and $h \colon X \to Y$ is a function mapping generators to elements in $Y$,
then there is a unique homomorphism
$\ext{h}\colon \tup{\term_\Sig(X),\gamma_\Sig} \to \tup{Y,\alpha}$ extending $h$
which is defined inductively by:
\[\begin{array}{rcll}
\ext{h}(x) &=& h(x)  & \text{ for all } x \in X,\\
\ext{h}(\syn{f}(t_1,\ldots, t_k)) &=& f_\alpha(\ext{h}(t_1), \ldots,\ext{h}(t_k)) & \text{ for all } \syn{f} \in \Sig_k, k\in \bbN.
\end{array}
\]
Note that every homomorphism $g\colon \tup{\term_\Sig(X),\gamma_\Sig} \to \tup{Y,\alpha}$
is determined by its action on the generators $X$.
In other words, there is a 1-1 correspondence
between homomorphisms $\tup{\term_\Sig(X),\gamma_\Sig} \to \tup{Y,\alpha}$
and maps $X \to Y$.
In particular, a $\Sig$-algebra
$\tup{X,\alpha}$ corresponds uniquely to a homomorphism
$\interp{\alpha}\colon\tup{\TSig(X),\gamma_\Sig} \to \tup{X,\alpha}$
(by taking $\interp{\alpha}$ to be the homomorphic extension $\ext{\id_X}$).
We call the homomorphism $\interp{\alpha}\colon\tup{\TSig(X),\gamma_\Sig} \to \tup{X,\alpha}$
the \emph{interpretation of $\Sig$-terms induced by $\alpha$}. 

Terms come equipped with the standard notion of substitution.
A \emph{substitution} is a homomorphism
$s\colon \term_\Sig(X)\to\term_\Sig(Y)$.
For a term $t \in \term_\Sig(X)$ over variables
$x_1, \ldots, x_k \in X$
and a substitution
$s$ for which $s(x_i) = s_i$ for $i=1,\ldots, k$,
we write $t[s_i/x_i]_{i \leq k}$ for
the result of applying the substitution $s$ to $t$.

\subsection{Stream GSOS definitions}
\label{ssec:stream-GSOS-defs}

In the rest of this section, let $\Sig$ be an arbitrary, but fixed signature.

\begin{defi}[Stream GSOS definition]
A {\em stream GSOS definition for $\syn{f} \in \Sig_k$, $k\in\bbN$},
is a pair $\tup{o_{\syn{f}},d_{\syn{f}}}$
(defining ``initial value'' $o_{\syn{f}}$ and ``derivative'' $d_{\syn{f}}$ 
of $\syn{f}$) where
  \begin{eqnarray*}
    && o_{\syn{f}}:  A^k \to A \\
    && d_{\syn{f}}:  A^k
  \to \term_\Sigma(\{x_1,\ldots,x_k,y_1,\ldots,y_k\})
  \end{eqnarray*}
If $d_{\syn{f}}(a_1,\ldots,a_k)$ does not contain any of the $x_i$ variables,
then we say that $\tup{o_{\syn{f}},d_{\syn{f}}}$ is a \emph{stream SOS definition}
of $\syn{f}$.

A {\em stream GSOS (respectively, SOS) definition for $\Sig$}
is a set $\sdefi$ of stream GSOS (respectively, SOS) definitions $\tup{o_{\syn{f}},d_{\syn{f}}}$,
one for each $\syn{f} \in \Sigma$.
\end{defi}

Note that in the above definition,
each pair $\tup{o_{\syn{f}},d_{\syn{f}}}$ corresponds to a stream differential equation:
\begin{equation}\label{eq:gsos-sde}
\begin{array}{lcl}
\syn{f}(\sig_1, \ldots, \sig_k)(0) &=& o_{\syn{f}}(\sig(0),\ldots, \sig_k(0))\\
\syn{f}(\sig_1, \ldots, \sig_k)'   &=& d_{\syn{f}}(\sig(0),\ldots, \sig_k(0))[\sig_i/x_i,\sig_i'/y_i]_{i\leq k}
\end{array}
\end{equation}

\begin{exa}[GSOS definition of arithmetic operations]
\label{ex:arith-gsos-def}

Let $A = \bbR$. 
The SDEs defining addition and convolution product on streams of real numbers in 
\eqref{eq:sde-plus} and \eqref{eq:sde-convprod} are equivalent to the following
stream GSOS definition.
Take as signature $\Sigma_\mathsf{ar}=\{\syn{+},\syn{\times}\} \cup \{ \syn{[a]} \mid a \in \bbR \}$, 
where $\syn{+}$ and $\syn{\times}$ are binary operation symbols 
and $\syn{[a]}$ is a constant symbol, for all $a \in \bbR$.
(We use the underline to indicate the difference between an operation symbol and its interpretation.)
Let 
$\tup{o_{\syn{[a]}},d_{\syn{[a]}}}$, $\tup{o_{\syn{+}},d_{\syn{+}}}$, $\tup{o_{\syn{\times}},d_{\syn{\times}}}$ 
be defined as follows:
\[
\begin{array}{lclclcl}
o_{\syn{[a]}} &=& a & \; & 
d_{\syn{[a]}} &=& \syn{[0]} \qquad \text{ for all } a \in \bbR,
\\[1em]
o_{\syn{+}}(a,b) &=& a+b, & \; &
d_{\syn{+}}(a,b) &=& y_1 \;{\syn{+}}\; y_2, 
\\[.8em]
o_{\syn{\times}}(a,b) &=& a \cdot b, & \; &
d_{\syn{\times}}(a,b) &=& 
(y_1 \;{\syn{\times}}\; x_2) \;{\syn{+}}\; (\syn{[a]}\;\syn{\times}\; y_2) 
\end{array}
\]
where $+$ and $\cdot$ on the right-hand sides of $o$-definitions 
denote addition and multiplication of real numbers.
Note that, in fact,
$\tup{o_{\syn{[a]}},d_{\syn{[a]}}}$ and $\tup{o_{\syn{+}},d_{\syn{+}}}$
are stream SOS definitions whereas
$\tup{o_{\syn{\times}},d_{\syn{\times}}}$ is a stream GSOS definition,
since it uses $x_2$ in $d_{\syn{\times}}(a,b)$.
\end{exa}

A \emph{solution} of a stream GSOS definition $\sdefi$ for $\Sig$
is a $\Sig$-algebra $\tup{\strms{A},\alpha}$ on the set of streams
which respects  $\sdefi$, that is, for all $\syn{f}\in\Sig$, $k\in \bbN$,
\begin{equation}\label{eq:D-solution}
\begin{array}{lcl}
f_\alpha(\sig_1, \ldots, \sig_k)(0) &=& o_{\syn{f}}(\sig_1(0),\ldots, \sig_k(0))\\
f_\alpha(\sig_1, \ldots, \sig_k)' &=& \interp{\alpha}(d_{\syn{f}}(\sig_1(0),\ldots, \sig_k(0))[\sig_i/x_i,\sig'_i/y_i]_{i\leq k})
\end{array}
\end{equation}
This definition, 
in fact, says that $\alpha$ is a solution if
the induced interpretation $\interp{\alpha}$
is a homomorphism not only of algebras,
but also of stream automata. We will make this precise below.

We will now prove that every stream GSOS definition $\sdefi$
has a unique solution.
Using the correspondence between $\Sig$-algebras on $\strms{A}$
and interpretations $\term_\Sig(\strms{A})\to \strms{A}$,
we obtain a candidate solution by coinduction
by observing that a stream GSOS definition $\sdefi$
yields a stream automaton structure on $\term_\Sig(\strms{A})$.

\begin{defi}[Syntactic stream automaton]\label{defi:sdefi-automaton}
Let $\sdefi$ be a stream GSOS definition for a signature $\Sig$.
The \emph{syntactic stream automaton for $\sdefi$}
is the map
$\tup{\osdefi, \dsdefi}\colon \term_\Sig(\strms{A}) \to  A \times\term_\Sig(\strms{A})$ defined inductively as follows:
For all $\sig \in \strms{A}$,
\[ \osdefi(\sig) =\sig(0), \qquad
\dsdefi(\sig) =\sig'
\]
and for all $k \in \bbN$, $\syn{f} \in \Sig_k$,  and $t_1,\ldots,t_k \in \term_\Sig(\strms{A})$,
\[\begin{array}{lcllcl}
\osdefi(\syn{f}(t_1, \ldots, t_k)) &=& o_{\syn{f}}(\osdefi(t_1), \ldots, \osdefi(t_k))\\[.4em]
\dsdefi(\syn{f}(t_1, \ldots, t_k)) &=& d_{\syn{f}}(\osdefi(t_1), \ldots, \osdefi(t_k))[t_i/x_i,\dsdefi(t_i)/y_i]_{i \leq k}
\end{array}
\]
The final homomorphism of stream automata from
$\tup{\term_\Sig(\strms{A}),\tup{\osdefi,\dsdefi}}$ is denoted by
$\sem{-}_{\sdefi}$, i.e.,
\begin{equation}\label{eq:syntactic-solution}
\xymatrix@C=6em{
\term_\Sig(\strms{A}) \ar[r]^-{\sem{-}_{\sdefi}} \ar[d]_-{\tup{\osdefi,\dsdefi}} & \strms{A} \ar[d]^-{\zeta}\\
A\times \term_\Sig(\strms{A}) \ar[r]^-{\id_A \times \sem{-}_{\sdefi}}  &
A \times \strms{A}
}\end{equation}
and we let $\alpha_\sdefi$ be the $\Sig$-algebra on $\strms{A}$
obtained by restricting $\sem{-}_{\sdefi}$ to terms of depth 1.
That is,
$\alpha_\sdefi = \{f_\sdefi \colon (\strms{A})^k \to \strms{A} \mid \syn{f} \in \Sig_k, k\in \bbN\}$ where
\begin{equation}\label{eq:alpha-D}
f_\sdefi(\sig_1,\ldots,\sig_k) = \sem{\syn{f}(\sig_1,\ldots,\sig_k)}_\sdefi
\end{equation}
\end{defi}

\begin{exa}
\label{ex:syn-aut}
Let $\sdefi$ be the stream GSOS definition
from Example~\ref{ex:arith-gsos-def}. We briefly describe some of the transitions in the syntactic stream automaton of $\sdefi$.
We use again the notation introduced 
in Subsection~\ref{ssec:stream-aut} by writing
$x \sgoes{a} y$ when $\osdefi(x)=a$ and $\dsdefi(x)=y$.
Let 
$\sigma = (2,0,0,\ldots)$,
$\tau = (1,1,1,\ldots)$,
$\delta = (1,0,0,\ldots)$ and
$\rho = (0,0,0,\ldots)$.
Then here are two examples of states and transitions:
\[\begin{array}{rcc}
\sig \times (\tau + \delta) & \sgoes{4} &
(\sig' \times (\tau + \delta)) + ([2] \times(\tau' + \delta') )
\\
 & & = 
\\
&& 
(\rho\times (\tau + \delta)) + ([2] \times (\tau + \rho))
\\[.5em]
[5]\times\sig & \sgoes{10} & ([0]\times\sigma)+ ([5]\times[0])\qquad\qquad\qquad
\end{array}  
\]
\end{exa}

The definition of the syntactic stream automaton ensures that the
following fundamental result holds.

\begin{lem}[Bisimilarity is a congruence]\label{lem:bisim-is-congruence}
On the syntactic stream automaton given by
$\tup{\TSig(\strms{A}),\tup{\osdefi,\dsdefi}}$,
bisimilarity is a congruence, that is,
for all terms $g \in \TSig(Z)$ over some set of variables
$Z = \{z_1,\ldots,z_n\}$,
and all terms $s_1,\ldots,s_n, u_1,\ldots,u_n \in \TSig(\strms{A})$,
\[
\forall j=1,\ldots, n: s_j \sim u_j \quad\Ra\quad
 g[s_j/z_j]_{j \leq n} \sim g[u_j/z_j]_{j \leq n}
\]
\end{lem}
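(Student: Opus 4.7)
The plan is to exhibit a stream bisimulation $R$ on $\TSig(\strms{A})$ that contains every pair of the form $(g[s_j/z_j]_{j \leq n},\, g[u_j/z_j]_{j \leq n})$ whenever $s_j \sim u_j$ for all $j$; by Lemma~\ref{lem:bisim-facts}(2) (and Theorem~\ref{thm:coind}) such a pair then lies in $\sim$, which is exactly what the lemma claims.

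First I would define $R \subseteq \TSig(\strms{A}) \times \TSig(\strms{A})$ inductively as the smallest relation satisfying (i) $\sim\, \subseteq R$, and (ii) for every $\syn{f} \in \Sig_k$ and every family $(p_i, q_i) \in R$, $i=1,\ldots,k$, the pair $(\syn{f}(p_1,\ldots,p_k),\, \syn{f}(q_1,\ldots,q_k))$ lies in $R$. A short structural induction on $g$ then shows that, whenever $s_j \sim u_j$, the pair $(g[s_j/z_j]_{j \leq n},\, g[u_j/z_j]_{j \leq n})$ lies in $R$ (the variable case is clause (i), the compound case is clause (ii)). So it suffices to verify that $R$ is a stream bisimulation for the syntactic stream automaton.

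I would prove this by induction on how a pair enters $R$. The base case (clause (i)) is immediate because $\sim$ is itself a bisimulation, giving $\osdefi(p) = \osdefi(q)$ and $\dsdefi(p) \sim \dsdefi(q)$, and the latter re-enters $R$ via clause (i). For the inductive step, take a pair $(\syn{f}(p_1,\ldots,p_k),\, \syn{f}(q_1,\ldots,q_k))$ formed by clause (ii) with $(p_i,q_i) \in R$ satisfying, by induction hypothesis, $\osdefi(p_i) = \osdefi(q_i) =: a_i$ and $(\dsdefi(p_i), \dsdefi(q_i)) \in R$. Definition~\ref{defi:sdefi-automaton} yields at once $\osdefi(\syn{f}(p_1,\ldots,p_k)) = o_{\syn{f}}(a_1,\ldots,a_k) = \osdefi(\syn{f}(q_1,\ldots,q_k))$ and
\[
\dsdefi(\syn{f}(p_1,\ldots,p_k)) \;=\; d_{\syn{f}}(a_1,\ldots,a_k)[p_i/x_i,\, \dsdefi(p_i)/y_i]_{i \leq k},
\]
with the analogous expression for the $q_i$.

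The last step, which I expect to be the main obstacle, is to show that these two substitution instances of the single template $h := d_{\syn{f}}(a_1,\ldots,a_k) \in \TSig(\{x_1,\ldots,x_k,y_1,\ldots,y_k\})$ are again related by $R$. I would isolate this as an auxiliary lemma: if $h \in \TSig(Y)$ and $(r_y, r'_y) \in R$ for each $y \in Y$, then $(h[r_y/y]_{y \in Y},\, h[r'_y/y]_{y \in Y}) \in R$. This lemma is itself proved by a separate structural induction on $h$, with the variable case given by assumption and the compound case handled precisely by clause (ii) of the definition of $R$. Applying it to $h = d_{\syn{f}}(a_1,\ldots,a_k)$ with the $R$-related pairs $(p_i,q_i)$ substituted for the $x_i$ and $(\dsdefi(p_i),\dsdefi(q_i))$ substituted for the $y_i$ then closes the derivative condition. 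The subtlety is precisely this nested induction, and the reason it works is exactly the GSOS constraint: the derivative template $d_{\syn{f}}(a_1,\ldots,a_k)$ depends only on the initial values $a_i$ and accesses the subterms solely through the variables $x_i$ (for the $p_i$) and $y_i$ (for the $\dsdefi(p_i)$), both of which the induction hypothesis already supplies as $R$-related pairs.
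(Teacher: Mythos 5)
Your proof is correct and takes essentially the same route as the paper: both form the congruence closure of bisimilarity on $\TSig(\strms{A})$ and verify it is a bisimulation by induction on its generation, with the decisive step being exactly the one you flag --- the GSOS format forces the two derivative templates $d_{\syn{f}}(a_1,\ldots,a_k)$ to coincide (since outputs agree) and the closure to be stable under substituting related pairs for the $x_i$ and $y_i$. The only difference is organizational: the paper stratifies the closure into levels $R_m$ generated by whole-term contexts and handles substitution stability inside a sub-induction on the context term, whereas you generate by single operations and isolate the substitution stability as an explicit auxiliary lemma.
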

\proof
We define relations $\{R_m\}_{m\in \bbN}$ on $\TSig(\strms{A})$
inductively by
$R_0 := \;\sim$ (the bisimilarity relation on $\TSig(\strms{A})$)
and for $m \geq 1$, $R_{m+1}$ is defined by the following
congruence rule:
\begin{equation}\label{eq:cong-relation}
\begin{array}{c}
s_1 \,R_m\, u_1 \quad\cdots\quad s_n \,R_m\, u_n
\\
\hline
 g[s_j/z_j]_{j \leq n} \,R_{m+1}\, g[u_j/z_j]_{j \leq n}
\end{array}
(g \in \TSig(Z))
\end{equation}
where $Z = \{z_1,\ldots,z_n\}$.
Note that $R_m \sse R_{m'}$ for all $m \leq m'$.
Let $R = \bigcup_{m\in\bbN}R_m$.
We show that $R$ is a bisimulation.
More precisely, we show by induction on $m$ that
\begin{equation}\label{eq:main-induction}
\forall t, v \in \TSig(\strms{A}):
t \,R_m\, v \;\Ra\;
\left[o_\sdefi(t) = o_\sdefi(v) \text{ and }
d_\sdefi(t) \,R\, d_\sdefi(v) \right]
\end{equation}
For convenience,
we use the shorthand notation
$t[s] := t[s_j/z_j]_{j \leq n}$
and
$t[u] := t[u_j/z_j]_{j \leq n}$
for any term $t \in \TSig(Z)$.

The base case ($m=0$) is immediate since $R_0 = \,\sim$ and $\sim\,\sse R$.
For the induction step ($m+1$),
suppose 
$s_1 \,R_m\, u_1 \, , \, \ldots \, , \, s_n \,R_m\, u_n$
and $g \in \TSig(Z)$.
We show by subinduction on the term structure of $g$
that
\begin{equation}\label{eq:subinduction-claim}
o_\sdefi(g[s]) = o_\sdefi(g[u]) \quad\text{ and }\quad
d_\sdefi(g[s]) \,R\, d_\sdefi(g[u])
\end{equation}
For $g=z_j \in Z$,
it follows that $\tup{g[s],g[u]} = \tup{s_j,u_j} \in R_m$
and hence  \eqref{eq:subinduction-claim}
holds by the main induction hypothesis (for $m$).

For $g = \syn{f}(t_1, \ldots, t_k)$,
by subinduction hypothesis, we have for all $i=1,\ldots,k$:
\[ \begin{array}{c}
o_\sdefi(t_i[s]) = o_\sdefi(t_i[u])
\quad \text{ and } \quad
d_\sdefi(t_i[s]) \;R\; d_\sdefi(t_i[u]).
\end{array}
\]
We now check the subinduction claim \eqref{eq:subinduction-claim} for $g$. \\
\textit{Outputs are equal:}
\[\begin{array}{rcll}
 & & o_\sdefi(\syn{f}(t_1,\ldots,t_k)[s])
\\
 &=& o_{\syn{f}}(o_\sdefi(t_1[s]), \ldots ,o_\sdefi(t_k[s]))
 & (\text{def.\ } o_\sdefi)
\\
 &=&
 o_{\syn{f}}(o_\sdefi(t_1[u]), \ldots ,o_\sdefi(t_k[u]))
 & (\text{sub-I.H.})
\\
 &=&
 o_\sdefi(\syn{f}(t_1,\ldots,t_k)[u])
 & (\text{def.\ } o_\sdefi)
\end{array}\]
\textit{Next states are related:}
First, for notational convenience, let
$w$ denote the term that specifies the next state for $\syn{f}$, i.e.,
\[\begin{array}{rcl}
 w &:=&
  d_{\syn{f}}(o_\sdefi(t_1[s]), \ldots ,o_\sdefi(t_k[s]))
\\
 &\stackrel{\text{sub-I.H.}}{=}& d_{\syn{f}}(o_\sdefi(t_1[u]), \ldots ,o_\sdefi(t_k[u]))
\end{array}\]
From the definition of $R$  it follows that
\[t_i[s] \,R_{m+1}\,t_i[u] \qquad
\text{ for all $i = 1,\ldots,k$}
\]
and from the sub-induction hypothesis, it follows that
\[d_\sdefi(t_i[s]) \,R\, d_\sdefi(t_i[u]) \qquad
\text{ for all $i = 1,\ldots,k$}.
\]
hence there is some $M \in \bbN$ such that
\[
t_i[s] \,R_{M}\,t_i[u], \quad
d_\sdefi(t_i[s]) \,R_{M}\, d_\sdefi(t_i[u]) \qquad
\text{ for all $i = 1,\ldots,k$}.
\]
By the definition of $R$, we then have
\begin{equation}\label{eq:IH-help}
\begin{array}{c}
w[t_i[s]/x_i,d_\sdefi(t_i[s])/y_i]_{i \leq k}
\;R_{M+1}\;
w[t_i[u]/x_i, d_\sdefi(t_i[u])/y_i]_{i \leq k}
\end{array}
\end{equation}
and hence
\[\begin{array}{rcll}
 & & d_\sdefi(\syn{f}(t_1,\ldots,t_k)[s])
\\
 &=& w[t_i[s]/x_i,d_\sdefi(t_i[s])/y_i]_{i \leq k}
 & (\text{def.\ } d_\sdefi)
\\
 & R &
 w[t_i[u]]/x_i, d_\sdefi(t_i[u]])/y_i]_{i \leq k}
 & \text{(by \eqref{eq:IH-help})}
\\
 &=&
 d_\sdefi(\syn{f}(t_1,\ldots,t_k)[u])
 & (\text{def.\ } d_\sdefi).
\end{array}\]
This concludes the subinduction on $g$, and hence also the main induction
for $m$.
\qed

The map $\sem{-}_{\sdefi}$ is by definition a stream homomorphism.
We now show that it is also an algebra homomorphism.

\begin{lem}[$\sem{-}_\sdefi$ is algebra homomorphism]
\label{lem:final-map-is-algebra-hom}
Let $\sdefi$ be a stream GSOS definition for a signature $\Sig$,
and
$\alpha_\sdefi$
be the $\Sig$-algebra on $\strms{A}$ defined in \eqref{eq:alpha-D} of Definition~\ref{defi:sdefi-automaton}.
The term interpretation $\interp{\alpha_\sdefi}$ induced
by $\alpha_\sdefi$ is precisely $\sem{-}_\sdefi$. Consequently, $\sem{-}_\sdefi$ is a morphism of
$\Sig$-algebras.
\end{lem}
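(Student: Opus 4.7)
The plan is to verify that $\sem{-}_\sdefi$ is both a $\Sig$-algebra homomorphism from $\tup{\TSig(\strms{A}),\gamma_\Sig}$ to $\tup{\strms{A},\alpha_\sdefi}$ and restricts to the identity on the generators $\strms{A}$; the freeness of $\TSig(\strms{A})$ then forces $\sem{-}_\sdefi = \interp{\alpha_\sdefi}$, since $\interp{\alpha_\sdefi}$ is by definition the unique such extension of $\id_{\strms{A}}$.

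First I would show that $\sem{\sig}_\sdefi = \sig$ for every $\sig \in \strms{A}$. Inspecting Definition~\ref{defi:sdefi-automaton}, we see that the inclusion $\iota\colon \strms{A} \hookrightarrow \TSig(\strms{A})$ sending each stream to itself as a generator satisfies $\osdefi(\iota(\sig)) = \sig(0)$ and $\dsdefi(\iota(\sig)) = \iota(\sig')$, so $\iota$ is a stream automaton homomorphism from $\tup{\strms{A},\zeta}$ into the syntactic automaton. Composing $\iota$ with $\sem{-}_\sdefi$ yields a stream homomorphism $\strms{A} \to \strms{A}$, which by finality of $\tup{\strms{A},\zeta}$ and \eqref{eq:sem-is-identity-in-final} must be $\id_{\strms{A}}$.

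Next I would check the algebra homomorphism property: for all $\syn{f}\in\Sig_k$ and $t_1,\ldots,t_k \in \TSig(\strms{A})$,
\[
\sem{\syn{f}(t_1,\ldots,t_k)}_\sdefi \;=\; f_\sdefi(\sem{t_1}_\sdefi,\ldots,\sem{t_k}_\sdefi).
\]
By definition \eqref{eq:alpha-D}, the right-hand side equals $\sem{\syn{f}(\sem{t_1}_\sdefi,\ldots,\sem{t_k}_\sdefi)}_\sdefi$. Thanks to Theorem~\ref{thm:coind} applied to the syntactic automaton (with $\sem{-}_\sdefi$ as behaviour map), it suffices to establish the bisimilarity
\[
\syn{f}(t_1,\ldots,t_k) \;\sim\; \syn{f}(\sem{t_1}_\sdefi,\ldots,\sem{t_k}_\sdefi).
\]
Here the congruence result, Lemma~\ref{lem:bisim-is-congruence}, applied to the context $g = \syn{f}(z_1,\ldots,z_k)$, reduces the problem to showing that $t_i \sim \sem{t_i}_\sdefi$ for each $i$. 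Since $\sem{t_i}_\sdefi \in \strms{A}$, the first step gives $\sem{\sem{t_i}_\sdefi}_\sdefi = \sem{t_i}_\sdefi$, so $t_i$ and $\sem{t_i}_\sdefi$ have equal behaviour under $\sem{-}_\sdefi$, and Theorem~\ref{thm:coind} yields the required bisimilarity.

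Combining the two steps, $\sem{-}_\sdefi$ is a $\Sig$-algebra morphism to $\tup{\strms{A},\alpha_\sdefi}$ whose restriction to $\strms{A}$ is $\id_{\strms{A}}$, so by the universal property of the free $\Sig$-algebra $\TSig(\strms{A})$, $\sem{-}_\sdefi = \ext{\id_{\strms{A}}} = \interp{\alpha_\sdefi}$. The main step is the congruence argument in the second paragraph; it is where the work done in Lemma~\ref{lem:bisim-is-congruence} is essential, because passing from pointwise bisimilarity of the arguments $t_i$ to bisimilarity of the compound $\syn{f}$-terms would otherwise demand a separate induction on term structure that essentially redoes the proof of that lemma.
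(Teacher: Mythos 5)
Your proof is correct and follows essentially the same route as the paper's: the paper establishes $\interp{\alpha_\sdefi}(t)=\sem{t}_\sdefi$ by explicit induction on terms, and its inductive step is exactly your congruence argument --- it uses Lemma~\ref{lem:bisim-is-congruence} together with $t_i\sim\sem{t_i}_\sdefi$ (justified there via the graph of the stream homomorphism $\sem{-}_\sdefi$ being a bisimulation, rather than via your fixed-point observation $\sem{\sem{t_i}_\sdefi}_\sdefi=\sem{t_i}_\sdefi$) to pass from $\sem{\syn{f}(\sem{t_1}_\sdefi,\ldots,\sem{t_k}_\sdefi)}_\sdefi$ to $\sem{\syn{f}(t_1,\ldots,t_k)}_\sdefi$. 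Your only packaging difference is to replace the explicit induction by an appeal to the universal property of the free $\Sig$-algebra, which amounts to the same thing.
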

\proof
Let $\alpha_\sdefi = \{f_\sdefi \colon (\strms{A})^k \to \strms{A} \mid \syn{f} \in \Sig_k, k\in \bbN\}$ 
be defined as in \eqref{eq:alpha-D}.
We show by induction on the term structure that
for all  $t \in \TSig(\strms{A})$:
\begin{equation}\label{eq:final-map-is-algebra-hom}
\interp{\alpha_\sdefi}(t) =  \sem{t}_\sdefi
\end{equation}
For $t = \sig \in \strms{A}$, we clearly have that
$\interp{\alpha_\sdefi}(\sig) = \sig = \sem{\sigma}_\sdefi$.
For $k \in \bbN$, $\syn{f} \in \Sig_k$,  and
$t_1, \ldots ,t_k \in \TSig(\strms{A})$,
we have
\[\begin{array}{rcll}
\interp{\alpha_\sdefi}(\syn{f}(t_1,\ldots,t_k)) &=&
  f_\sdefi(\interp{\alpha_\sdefi}(t_1), \ldots,\interp{\alpha_\sdefi}(t_k))
 & (\text{def. }\interp{\alpha_\sdefi})
\\
 &=& f_\sdefi(\sem{t_1}_\sdefi, \ldots, \sem{t_k}_\sdefi)
  & (\text{I.H.})
\\
 &=& \sem{\syn{f}(\sem{t_1}_\sdefi, \ldots, \sem{t_k}_\sdefi)}_\sdefi
 &   (\text{def. }{f}_\sdefi)
\\
 &=& \sem{\syn{f}(t_1, \ldots, t_k)}_\sdefi
\end{array}\]
where the last equality holds because $\sem{-}_\sdefi$
identifies bisimilar states, and bisimilarity
of $\syn{f}(\sem{t_1}_\sdefi, \ldots, \sem{t_k}_\sdefi)$
and $\syn{f}(t_1, \ldots, t_k)$ follows from 
Lemma~\ref{lem:bisim-is-congruence} (bisimilarity is a congruence),
and the fact that
for all $t \in \TSig(\strms{A})$,
\begin{equation}\label{eq:sem-is-bis}
 t \sim \sem{t}_\sdefi
\end{equation}
since $\sem{-}_\sdefi$ is a stream homomorphism.
\qed
We now characterise the solutions to $\sdefi$ as being those
maps $\alpha$ whose induced interpretation is a stream homomorphism.

\begin{prop}{}\label{prop:D-solutions}
Let $\sdefi$ be a stream GSOS definition for a signature $\Sig$.
For all $\Sig$-algebras $\tup{\strms{A},\alpha}$, 
$\alpha$ is a solution of $\sdefi$ if and only if
$\interp{\alpha}\colon \term_\Sig(\strms{A}) \to \strms{A}$
is a stream automaton homomorphism
from $\tup{\term_\Sig(\strms{A}),\tup{\osdefi,\dsdefi}}$
to the final stream automaton
$\tup{\strms{A},\zeta}$.
\end{prop}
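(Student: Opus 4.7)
\proof[Proof plan]
The plan is to show both implications by unwinding the definitions, with the forward direction proceeding by structural induction on terms and the backward direction by specialising the homomorphism property to terms of depth one. Recall that being a stream automaton homomorphism from $\tup{\term_\Sig(\strms{A}),\tup{\osdefi,\dsdefi}}$ to $\tup{\strms{A},\zeta}$ means that, for every $t \in \term_\Sig(\strms{A})$,
\[
\osdefi(t) = \interp{\alpha}(t)(0)
\qquad \text{and} \qquad
\interp{\alpha}(\dsdefi(t)) = \interp{\alpha}(t)'.
\]

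For the direction ($\Leftarrow$), I would instantiate the homomorphism equations at a term of the form $t = \syn{f}(\sig_1,\ldots,\sig_k)$ with $\sig_i \in \strms{A}$. Since $\interp{\alpha}$ is the unique $\Sig$-algebra homomorphism extending $\id_{\strms{A}}$, one has $\interp{\alpha}(\syn{f}(\sig_1,\ldots,\sig_k)) = f_\alpha(\sig_1,\ldots,\sig_k)$, and since $\osdefi(\sig_i) = \sig_i(0)$ and $\dsdefi(\sig_i) = \sig_i'$, the inductive clauses of Definition~\ref{defi:sdefi-automaton} give exactly the right-hand sides of the two equations in \eqref{eq:D-solution}. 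So the homomorphism condition at such a term coincides with the solution equations for $\syn{f}$, for arbitrary $\sig_1,\ldots,\sig_k$.

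For the direction ($\Rightarrow$), I would prove both homomorphism equations simultaneously by induction on the term structure of $t$. The base case $t = \sig \in \strms{A}$ is immediate since $\interp{\alpha}(\sig) = \sig$, $\osdefi(\sig) = \sig(0)$ and $\dsdefi(\sig) = \sig'$. In the inductive step $t = \syn{f}(t_1,\ldots,t_k)$, I would first compute $\interp{\alpha}(t) = f_\alpha(\interp{\alpha}(t_1),\ldots,\interp{\alpha}(t_k))$ using that $\interp{\alpha}$ is an algebra homomorphism, then apply the solution property \eqref{eq:D-solution} of $\alpha$ at the streams $\interp{\alpha}(t_1),\ldots,\interp{\alpha}(t_k)$, and finally use the induction hypothesis $\osdefi(t_i) = \interp{\alpha}(t_i)(0)$ and $\interp{\alpha}(\dsdefi(t_i)) = \interp{\alpha}(t_i)'$ to rewrite everything in terms of $\osdefi(t_i)$ and $\dsdefi(t_i)$.

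The main technical obstacle will be the derivative case, because the right-hand side of the solution clause for $\syn{f}$ substitutes the streams $\interp{\alpha}(t_i)$ and $\interp{\alpha}(t_i)'$ into the term $d_{\syn{f}}(\cdot)$, whereas $\dsdefi$ substitutes the terms $t_i$ and $\dsdefi(t_i)$ into that same term. I therefore need the standard substitution lemma that, for any term $w \in \term_\Sig(\{x_1,\ldots,x_k,y_1,\ldots,y_k\})$ and any $u_1,\ldots,u_k,v_1,\ldots,v_k \in \term_\Sig(\strms{A})$,
\[
\interp{\alpha}\!\left(w[u_i/x_i,v_i/y_i]_{i\leq k}\right)
\;=\;
\interp{\alpha}\!\left(w[\interp{\alpha}(u_i)/x_i,\interp{\alpha}(v_i)/y_i]_{i\leq k}\right),
\]
which I would establish by a short induction on $w$ using that $\interp{\alpha}$ is an algebra homomorphism and that $\interp{\alpha}\circ\interp{\alpha} = \interp{\alpha}$ on streams. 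Applying this lemma with $u_i = t_i$ and $v_i = \dsdefi(t_i)$, together with the induction hypothesis, identifies $\interp{\alpha}(\dsdefi(t))$ with $\interp{\alpha}(t)'$ and closes the induction.
\qed
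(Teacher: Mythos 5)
Your proposal is correct and follows essentially the same route as the paper's proof: the backward direction by instantiating the homomorphism equations at depth-one terms over streams, and the forward direction by structural induction on terms. The substitution lemma you isolate is exactly the step the paper dispatches with the remark that ``nested applications of $\interp{\alpha}$ are flattened into one outermost application,'' so making it explicit (and proving it by induction on $w$) is a sound and slightly more careful rendering of the same argument.
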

\proof
Let $\alpha$ be a solution of $\sdefi$.
We show that $\interp{\alpha}$ is a homomorphism of stream automata
by induction on the term structure.
The base case is immediate, since by definition
$\interp{\alpha}(\sig)(0) = \sig(0) =\osdefi(\sig)$ and
$\interp{\alpha}(\sig)' = \sig' = \dsdefi(\sig)$.
For the inductive step,
let  $k\in\bbN$, $\syn{f}\in \Sig_k$. We have
\[\begin{array}{rcll}
\interp{\alpha}(\syn{f}(t_1,\ldots, t_k))(0) &=&
  f_\alpha(\interp{\alpha}(t_1), \ldots,\interp{\alpha}(t_k))(0)
 & \text{(def. $\interp{\alpha}$)}
\\[.2em]
 &=& o_{\syn{f}}(\interp{\alpha}(t_1)(0), \ldots,\interp{\alpha}(t_k)(0))
  & \text{($\alpha$ is solution)}
\\[.2em]
 &=& o_{\syn{f}}(\osdefi(t_1), \ldots, \osdefi(t_k))
  & \text{(by I.H.)}
\\[.2em]
 &=& \osdefi(\syn{f}(t_1,\ldots, t_k))
 & \text{(def. $\osdefi$)}
\end{array}
\]
and
\[\begin{array}{rcll}
&& \interp{\alpha}(\syn{f}(t_1,\ldots, t_k))'
\\[.2em]
  &=&  f_\alpha(\interp{\alpha}(t_1), \ldots,\interp{\alpha}(t_k))'
 & \text{(def. $\interp{\alpha}$)}
\\[.2em]
 &=& \interp{\alpha}(d_{\syn{f}}(\interp{\alpha}(t_1)(0), \ldots,\interp{\alpha}(t_k)(0))[\interp{\alpha}(t_i)/x_i,\interp{\alpha}(t_i)'/y_i]_{i\leq k})
  & \text{($\alpha$ is solution)}
\\[.2em]
 &=& \interp{\alpha}(d_{\syn{f}}(\osdefi(t_1), \ldots, \osdefi(t_k))[\interp{\alpha}(t_i)/x_i,\interp{\alpha}(\dsdefi(t_i))/y_i]_{i\leq k})
  & \text{(I.H.)}
\\[.2em]
 &=& \interp{\alpha}(d_{\syn{f}}(\osdefi(t_1), \ldots, \osdefi(t_k))[t_i/x_i,\dsdefi(t_i)/y_i]_{i\leq k})
  & (*)
\\[.2em]
 &=& \interp{\alpha}(\dsdefi(\syn{f}(t_1,\ldots, t_k)))
 & \text{(def. $\dsdefi$)}
\end{array}\]
where $(*)$ holds since nested applications of $\interp{\alpha}$
are ``flattened'' into one outermost application
which interprets the entire term.

For the converse, assume that $\interp{\alpha}$ is a homomorphism of
stream automata. Then in particular, for all $k\in\bbN$, $\syn{f}\in\Sig_k$, 
and all $\sig_1, \ldots,\sig_k \in \strms{A}$,
\begin{eqnarray}
\interp{\alpha}(\syn{f}(\sig_1,\ldots, \sig_k))(0) &=& \osdefi(\syn{f}(\sig_1,\ldots, \sig_k)) \label{eq:alpha-o}
\\
\interp{\alpha}(\syn{f}(\sig_1,\ldots, \sig_k))' &=& \interp{\alpha}(\dsdefi(\syn{f}(\sig_1,\ldots, \sig_k))) \label{eq:alpha-d}
\end{eqnarray}
It follows that
\[\begin{array}{rcll}
f_\alpha(\sig_1,\ldots,\sig_k)(0) &=&
  \interp{\alpha}(\syn{f}(\sig_1,\ldots, \sig_k))(0)
 & \text{(def. $\interp{\alpha}$)}
\\[.2em]
 &=&  \osdefi(\syn{f}(\sig_1,\ldots, \sig_k))
  & \text{(by \eqref{eq:alpha-o})}
\\[.2em]
 &=& o_{\syn{f}}(\sig_1(0),\ldots,\sig_k(0))
 & \text{(def. $\osdefi$)}
\end{array}
\]
and
\[\begin{array}{rcll}
&& f_\alpha(\sig_1,\ldots,\sig_k)'
\\[.2em]
&=&
  \interp{\alpha}(\syn{f}(\sig_1,\ldots, \sig_k))'
 & \text{(def. $\interp{\alpha}$)}
\\[.2em]
 &=& \interp{\alpha}(\dsdefi(\syn{f}(\sig_1,\ldots, \sig_k)))
 & \text{(by \eqref{eq:alpha-d})}
\\[.2em]
 &=&  \interp{\alpha}(d_{\syn{f}}(\osdefi(\sig_1), \ldots, \osdefi(\sig_k))[\sig_i/x_i,\dsdefi(\sig_i)/y_i]_{i\leq k})
 & \text{(def. $\dsdefi$)}
\\[.2em]
 &=&  \interp{\alpha}(d_{\syn{f}}(\sig_1(0), \ldots, \sig_k(0))[\sig_i/x_i,\sig_i'/y_i]_{i\leq k})
 & \text{(def. $\osdefi$ and $\dsdefi$)}
\end{array}
\]
which proves that  $\alpha$ is indeed a solution of $\sdefi$.
\qed
Finally, we can put everything together.

\begin{thm}{}\label{thm:syntactic-solution}
Let $\sdefi$ be a stream GSOS definition for a signature $\Sig$.
The unique solution of $\sdefi$ is the
$\Sig$-algebra $\tup{\strms{A},\alpha_\sdefi}$
that corresponds to the term interpretation
given by the final stream homomorphism 
$\sem{-}_\sdefi \colon \term_\Sig(\strms{A}) \to \strms{A}$
of the syntactic stream automaton.
\end{thm}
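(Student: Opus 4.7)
The plan is to combine the three results established just above, namely Lemma~\ref{lem:bisim-is-congruence}, Lemma~\ref{lem:final-map-is-algebra-hom}, and Proposition~\ref{prop:D-solutions}, together with finality of the stream automaton $\tup{\strms{A},\zeta}$ and the universal property of the free $\Sig$-algebra. The argument splits into an \emph{existence} part and a \emph{uniqueness} part.

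For existence, I would argue as follows. By Definition~\ref{defi:sdefi-automaton}, $\sem{-}_\sdefi$ is a homomorphism from the syntactic stream automaton $\tup{\TSig(\strms{A}),\tup{\osdefi,\dsdefi}}$ to the final stream automaton $\tup{\strms{A},\zeta}$. By Lemma~\ref{lem:final-map-is-algebra-hom}, we have $\interp{\alpha_\sdefi} = \sem{-}_\sdefi$, so $\interp{\alpha_\sdefi}$ is in particular a homomorphism of stream automata. Applying the ``if'' direction of Proposition~\ref{prop:D-solutions} then immediately yields that $\alpha_\sdefi$ is a solution of $\sdefi$.

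For uniqueness, suppose $\tup{\strms{A},\alpha}$ is any $\Sig$-algebra on $\strms{A}$ which is a solution of $\sdefi$. Then by the ``only if'' direction of Proposition~\ref{prop:D-solutions}, the induced interpretation $\interp{\alpha}\colon \TSig(\strms{A}) \to \strms{A}$ is a stream automaton homomorphism from $\tup{\TSig(\strms{A}),\tup{\osdefi,\dsdefi}}$ into $\tup{\strms{A},\zeta}$. By finality of $\tup{\strms{A},\zeta}$ there is only one such homomorphism, so $\interp{\alpha} = \sem{-}_\sdefi = \interp{\alpha_\sdefi}$. The final step is to observe that the correspondence between $\Sig$-algebras on $\strms{A}$ and their induced interpretations (discussed in Section~\ref{ssec:terms-algebras}) is bijective, since an interpretation $\interp{\alpha}$ uniquely determines the operations $f_\alpha$ by evaluation on terms of depth one: $f_\alpha(\sig_1,\ldots,\sig_k) = \interp{\alpha}(\syn{f}(\sig_1,\ldots,\sig_k))$. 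Hence $\interp{\alpha} = \interp{\alpha_\sdefi}$ forces $\alpha = \alpha_\sdefi$.

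I do not anticipate any serious obstacle: the heavy lifting has already been done in the preceding three results, especially the congruence lemma (which is what guarantees that $\sem{-}_\sdefi$ agrees with the homomorphic extension of its restriction to depth-one terms). The present theorem is essentially a bookkeeping result that packages ``existence via coinduction'' with ``uniqueness via finality and the free-algebra property.'' The only place where one must be a bit careful is in the last step of uniqueness, where one has to remember that two $\Sig$-algebras on the same carrier with the same induced interpretation are equal; this is immediate from the fact that $\gamma_\Sig$ includes $\syn{f}(\sig_1,\ldots,\sig_k)$ as a term for every choice of operation and arguments.
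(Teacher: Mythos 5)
Your proposal is correct and follows essentially the same route as the paper: existence from Lemma~\ref{lem:final-map-is-algebra-hom} plus Proposition~\ref{prop:D-solutions}, and uniqueness from finality of $\tup{\strms{A},\zeta}$ together with the 1-1 correspondence between $\Sig$-algebras on $\strms{A}$ and term interpretations. You merely spell out the uniqueness step (and the bijectivity of the algebra/interpretation correspondence) in more detail than the paper does, which is fine.
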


\proof
By Lemma~\ref{lem:final-map-is-algebra-hom},
$\interp{\alpha_\sdefi} = \sem{-}_\sdefi$,
hence by Proposition~\ref{prop:D-solutions},
$\alpha_\sdefi$ is a solution to $\sdefi$.
The uniqueness of $\alpha_\sdefi$ follows from the uniqueness
of $\sem{-}_\sdefi$ and the 1-1 correspondence between
$\Sig$-algebras $\tup{X,\alpha}$ and 
term interpretations $\interp{\alpha}\colon \TSig(\strms{A})\to\strms{A}$.
\qed

\begin{exa}
\label{ex:solution-arith}
Consider the final map $\sem{-} = \sem{-}_\sdefi$ for the 
GSOS definition $\sdefi$ of the arithmetic operations
from Example~\ref{ex:arith-gsos-def} 
(taking again $A = \bbR$, and
$\sigma = (2,0,0,\ldots)$,
$\tau = (1,1,1,\ldots)$,
$\delta = (1,0,0,\ldots)$,
$\rho = (0,0,0,\ldots)$).
We find that 
\[\begin{array}{rcl}
\sem{\sig \times (\tau + \delta)} &=& 
\sem{\sig} \times (\sem{\tau} + \sem{\delta})
\\[.4em]
 &=& (4,2,2,2,\ldots)
\\[.8em]
\sem{(\rho\times (\tau + \delta)) + ([2] \times (\tau + \rho))} &=&
(\sem{\rho}\times (\sem{\tau} + \sem{\delta})) + (\sem{[2]} \times (\sem{\tau} + \sem{\rho}))
\\[.4em]
 & = & (2,2,2,\ldots)
\end{array}  
\]
which confirms that $\sem{-}$ respects the transition from $\sig \times (\tau + \delta)$.
Similarly, we find that the following transition in the syntactic automaton
\[\begin{array}{rcl}
[5]\times\sig & \sgoes{10} & ([0]\times\sigma)+ ([5]\times[0])
\end{array}  
\]
is mapped by $\sem{-}$ to the following transition in $\tup{\bbR^\omega,\zeta}$
\[ (10,0,0,0,\ldots) \sgoes{10} (0,0,0,\ldots).
\]
\end{exa}


\subsection{Causal stream operations}
\label{ssec:causal-ops}

Next we will show
that stream GSOS definitions
exactly define the so-called \emph{causal} stream operations, that is,
operations such that for all $n \in \bbN$, the $n$-th value of the result stream
depends only on the first $n$ values of the argument stream(s).
For a formal definition,
we use the following notation.
For $\sig, \tau \in \strms{A}$ and $n \in \bbN$,
we write $\sig \equiv_n \tau$ if for all $j \leq n$, $\sig(j) = \tau(j)$.
A $k$-ary stream operation
$f \colon (\strms{A})^k \to \strms{A}$ is \emph{causal}
if for all $\sig_i, \tau_i \in \strms{A}, i=1,\ldots,k$,
\[
\forall i \leq k: \sig_i \equiv_n \tau_i
\qquad \Ra  \qquad
f(\sig_1,\ldots,\sig_k) \equiv_n f(\tau_1,\ldots,\tau_k)
\]
Let $\Gamma_k$ denote the set of all causal $k$-ary stream operations
$f \colon (\strms{A})^k \to \strms{A}$.
The elements of $\Gamma_k$
are exactly the behaviours of \emph{($k$-ary) Mealy machines}
which are maps of type $m \colon X \to (A \times X)^{A^k}$.
Mealy machines and causal stream functions are treated in detail in
\cite{HR:mealy-SACS,Rut05:FACS-short}. We give a brief recap here.
For all $f \in \Gamma_k$ and
all $(a_1, \ldots, a_k) \in A^k$,
we define the notion of \emph{Mealy output} $f[(a_1, \ldots, a_k)]$ and
\emph{Mealy derivative} $f_{(a_1, \ldots, a_k)}$ of $f$
as follows. For all $\sig_1,\ldots, \sig_k \in \strms{A}$,

\begin{equation}\label{eq:Mealy-derivatives}
\begin{array}{rcl}
f[(a_1, \ldots, a_k)] &=&
  f(a_1\!:\!\sig_1, \ldots, a_k\!:\!\sig_k)(0)
\\
f_{(a_1, \ldots, a_k)}(\sig_1,\ldots,\sig_k) &=&
  f(a_1\!:\!\sig_1, \ldots, a_k\!:\!\sig_k)'
\end{array}
\end{equation}
Note that since $f$ is causal, it follows that $f_{(a_1, \ldots, a_k)} \in \Gamma_k$
and that $f[(a_1, \ldots, a_k)]$ is well-defined, as it does not depend on
$\sig_1,\ldots, \sig_k$.
We define a Mealy machine structure $\gamma\colon \Gamma_k \to (A\times\Gamma^k)^{A^k}$ by
\begin{equation}\label{eq:final-Mealy-structure}
\gamma(f)({a_1,\ldots,a_k}) = \tup{f[(a_1, \ldots, a_k)],f_{(a_1, \ldots, a_k)}},
\end{equation}
In fact, $\tup{\Gamma_k,\gamma}$ is a final Mealy machine,
cf.~\cite{HR:mealy-SACS,Rut05:FACS-short}.

\begin{prop}\label{prop:gsos-implies-causal}
If 
$f \colon (\strms{A})^k \to \strms{A}$
is stream GSOS definable, then $f$ is causal.
\end{prop}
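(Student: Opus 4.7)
The plan is to prove a slightly stronger statement by induction on $n$: for every $k \in \bbN$ and every $\syn{f} \in \Sig_k$, if $\sig_i \equiv_n \tau_i$ for $i = 1, \ldots, k$, then $f_\sdefi(\sig_1,\ldots,\sig_k) \equiv_n f_\sdefi(\tau_1,\ldots,\tau_k)$. Causality of $f = f_\sdefi$ then follows immediately.

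For the base case ($n = 0$), use Theorem~\ref{thm:syntactic-solution}, which tells us that $\alpha_\sdefi$ is a solution of $\sdefi$. By \eqref{eq:D-solution} we have $f_\sdefi(\sig_1,\ldots,\sig_k)(0) = o_{\syn{f}}(\sig_1(0),\ldots,\sig_k(0))$, which depends only on the initial values, so $\sig_i(0) = \tau_i(0)$ for all $i$ suffices.

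For the inductive step, suppose the statement holds for $n$ and assume $\sig_i \equiv_{n+1} \tau_i$ for $i = 1,\ldots,k$. Then $\sig_i(0) = \tau_i(0)$ and $\sig_i' \equiv_n \tau_i'$. The base case handles index $0$, so it suffices to show $f_\sdefi(\sig_1,\ldots,\sig_k)' \equiv_n f_\sdefi(\tau_1,\ldots,\tau_k)'$. By \eqref{eq:D-solution}, the left-hand side equals $\interp{\alpha_\sdefi}(t_\sig)$ and the right-hand side equals $\interp{\alpha_\sdefi}(t_\tau)$, where, writing $w = d_{\syn{f}}(\sig_1(0),\ldots,\sig_k(0)) = d_{\syn{f}}(\tau_1(0),\ldots,\tau_k(0))$, we have $t_\sig = w[\sig_i/x_i, \sig_i'/y_i]_{i \leq k}$ and $t_\tau = w[\tau_i/x_i, \tau_i'/y_i]_{i \leq k}$. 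Crucially, $w$ is the \emph{same} term on both sides because the initial values agree.

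The remaining work is a sub-lemma proved by structural induction on terms $u \in \term_\Sig(\{x_1,\ldots,x_k,y_1,\ldots,y_k\})$: if $\sig_i \equiv_{n+1} \tau_i$ for $i = 1,\ldots,k$, then $\interp{\alpha_\sdefi}(u[\sig_i/x_i, \sig_i'/y_i]) \equiv_n \interp{\alpha_\sdefi}(u[\tau_i/x_i, \tau_i'/y_i])$. For a variable $u = x_i$ both sides are $\sig_i$ and $\tau_i$, which agree up to index $n+1 \geq n$; for $u = y_i$ they are $\sig_i'$ and $\tau_i'$, which agree up to $n$. For $u = \syn{g}(u_1,\ldots,u_m)$ with $\syn{g} \in \Sig_m$, Lemma~\ref{lem:final-map-is-algebra-hom} (applied via $\interp{\alpha_\sdefi} = \sem{-}_\sdefi$) lets us commute the interpretation with the outer operation symbol, yielding $g_\sdefi$ applied to the interpretations of the $u_j$-substitutions. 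By the structural induction hypothesis those arguments agree up to $n$, and by the outer induction hypothesis on $n$ applied to $\syn{g}$, so do the outputs of $g_\sdefi$.

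I expect the main obstacle to be bookkeeping the two nested inductions correctly: the induction on $n$ must be done uniformly in all operation symbols of $\Sig$ simultaneously, so that when the sub-lemma's structural induction on terms reaches an inner occurrence of some $\syn{g}$, the inductive hypothesis at $n$ is available for $g_\sdefi$ as well. No further subtlety is expected, since the GSOS format's key restriction --- namely, that $o_{\syn{f}}$ and the skeleton of $d_{\syn{f}}$ depend only on initial values of the arguments --- is precisely what makes the derivative term $w$ above coincide for $\sig$ and $\tau$, which in turn lets the sub-lemma go through.
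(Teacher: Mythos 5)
Your proof is correct and follows essentially the same route as the paper's: the paper merely states that $\equiv_n$ is a congruence (equation \eqref{eq:n-equiv-congruence}), to be shown ``by double induction on $n$ and the structure of $t$'' with details deferred to the literature, and your nested induction --- outer on $n$ uniformly over all operation symbols, inner structural induction on the derivative term, using that the GSOS format makes the skeleton $w = d_{\syn{f}}(\sig_1(0),\ldots,\sig_k(0))$ identical for both argument tuples --- is precisely that double induction carried out in full. No gaps.
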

\proof
Suppose that $f$ is stream GSOS definable, that is, $f$ is one of the
operations in the solution $\alpha_\sdefi$ for some stream GSOS definition
$\sdefi$.
The proof follows essentially from the fact that for all $n \in \bbN$,
$\equiv_n$ is a congruence,
that is, for all $t \in \TSig(Z)$, $Z = \{z_1, \ldots, z_l\}$,
and all $\sig_i,\tau_i \in \strms{A}$, $i=1,\ldots,l$:
\begin{equation}\label{eq:n-equiv-congruence}
\text{for all } i \leq l: \sig_i \equiv_n \tau_i
\qquad\Ra\qquad
\sem{t[\sig_i/z_i]_{i\leq l}}_\sdefi \equiv_n \sem{t[\tau_i/z_i]_{i \leq l}}_\sdefi
\end{equation}
which can be shown by double induction on $n$ and the structure of $t$.
We refer to \cite{KNR:SDE} for details.
\qed

Conversely, any causal stream operation can be defined by a
(potentially very large) stream definition.

\begin{prop}
If 
$f \colon (\strms{A})^k \to \strms{A}$ is causal,
then $f$ is stream GSOS definable.
\end{prop}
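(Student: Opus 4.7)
The plan is to construct a stream GSOS definition $\sdefi$ in which the operations appearing are precisely $f$ together with all of its iterated Mealy derivatives. Concretely, let $\Gamma_{\!f} \subseteq \Gamma_k$ denote the smallest set containing $f$ and closed under Mealy derivatives, that is, closed under the operation $g \mapsto g_{(a_1,\ldots,a_k)}$ for every tuple $(a_1,\ldots,a_k) \in A^k$. I take as signature $\Sig$ a $k$-ary operation symbol $\syn{g}$ for every $g \in \Gamma_{\!f}$, so that the derivative term in the GSOS definition below always lives inside this signature.

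Next, I define $\sdefi$ by putting, for each $\syn{g} \in \Sig$ and each $(a_1,\ldots,a_k) \in A^k$,
\[
o_{\syn{g}}(a_1,\ldots,a_k) \,=\, g[(a_1,\ldots,a_k)],
\qquad
d_{\syn{g}}(a_1,\ldots,a_k) \,=\, \syn{g_{(a_1,\ldots,a_k)}}(y_1,\ldots,y_k),
\]
using the Mealy output and derivative notation from \eqref{eq:Mealy-derivatives}. Note that $d_{\syn{g}}$ uses only the variables $y_1,\ldots,y_k$, so this is actually a stream SOS definition.

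The key step is to show that the $\Sig$-algebra $\alpha$ on $\strms{A}$ that interprets each symbol $\syn{g}$ as the function $g$ itself is a solution of $\sdefi$ in the sense of \eqref{eq:D-solution}. For the output equation, one computes $o_{\syn{g}}(\sig_1(0),\ldots,\sig_k(0)) = g[(\sig_1(0),\ldots,\sig_k(0))] = g(\sig_1(0)\!:\!\sig_1',\ldots,\sig_k(0)\!:\!\sig_k')(0) = g(\sig_1,\ldots,\sig_k)(0)$, using that $a_i\!:\!\sig_i' = \sig_i$ when $a_i = \sig_i(0)$. For the derivative equation, applying $\interp{\alpha}$ to the substituted term yields $g_{(\sig_1(0),\ldots,\sig_k(0))}(\sig_1',\ldots,\sig_k') = g(\sig_1,\ldots,\sig_k)'$, again by \eqref{eq:Mealy-derivatives}. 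Both verifications reduce to directly unfolding the definitions of Mealy output and Mealy derivative, and use essentially nothing beyond the fact that $\sig_i = \sig_i(0)\!:\!\sig_i'$.

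Finally, by Theorem~\ref{thm:syntactic-solution}, the solution of $\sdefi$ is unique, so $\alpha = \alpha_\sdefi$. In particular, $f_\sdefi = f$, which exhibits $f$ as one of the operations in the solution of a stream GSOS definition, so $f$ is stream GSOS definable. I do not anticipate any genuine obstacle: the construction is essentially the well-known representation of a causal function by its Mealy machine of derivatives, and the only thing to check is the mechanical unfolding of \eqref{eq:D-solution}. The one bookkeeping point to watch is that $\Gamma_{\!f}$ must indeed be closed under Mealy derivatives so that every $\syn{g_{(a_1,\ldots,a_k)}}$ appearing in $d_{\syn{g}}$ is a symbol of our signature — which is built into the definition of $\Gamma_{\!f}$.
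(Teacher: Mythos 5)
Your proof is correct and follows essentially the same route as the paper: both define the GSOS definition via Mealy outputs and Mealy derivatives and verify that interpreting each symbol as the corresponding causal function gives a solution. The only (cosmetic) difference is that the paper takes the signature to contain \emph{all} causal operations at once, whereas you restrict to the Mealy-derivative closure of $\{f\}$ — a harmless refinement, since causality is preserved under Mealy derivatives.
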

\proof
We define a stream definition $\sdefi_\causal$ for the signature
$\Sig_\causal = \{ \syn{f} \mid f \colon (\strms{A})^k \to \strms{A}
                                                 \text{ causal},$ $ k \in \bbN \}$,
by including,
for each $k$-ary function symbol $\syn{f} \in \Sig_\causal$, the equation
\begin{equation}\label{eq:sdefi-causal}
\begin{array}{rcll}
o_{\syn{f}}(a_1, \ldots, a_k) &=& f[(a_1, \ldots, a_k)] & \in A\\
d_{\syn{f}}(a_1, \ldots, a_k) &=& \syn{f_{(a_1, \ldots, a_k)}}(y_1, \ldots, y_k)
  & \in \TSig(\{y_1, \ldots,y_k\})\\
\end{array}
\end{equation}
Let $\alpha$ be the $\Sig_\causal$-algebra on $\strms{A}$ in which each
symbol $\syn{f}$ is interpreted as $f$.
We show that $\alpha$ is a solution to $\sdefi_\causal$.
For $f \in \Gamma_k$, we have
\[\begin{array}{lclcl}
f(\sig_1,\ldots,\sig_k)(0) &=& f[(\sig_1(0),\ldots, \sig_k(0))]
\\
&=&
  o_{\syn{f}}(\sig_1(0),\ldots, \sig_k(0))
\\[.5em]
f(\sig_1,\ldots,\sig_k)' &=&
  {f_{(\sig_1(0), \ldots, \sig_k(0))}}(\sig_1', \ldots, \sig_k')
\\ &=&
  \interp{\alpha}(\syn{f_{(\sig_1(0), \ldots, \sig_k(0))}}(y_1, \ldots, y_k)[\sig'_i/y_i]_{i\leq k})
\end{array}\]
which shows that $\alpha$ is a solution to $\sdefi_\causal$.
\qed

\begin{rema}
Note that in \eqref{eq:sdefi-causal}
the derivative term $d_{\syn{f}}(a_1,\ldots,a_k)$
only uses the $y_i$-variables,
i.e.\ the derivatives of the arguments,
and not the arguments themselves (i.e.\ the $x_i$-variables).
This means that all causal stream operations are definable
by a (possibly infinite) SOS specification.
\end{rema}

\begin{thm}
Let $f \colon (\strms{A})^k \to \strms{A}$ be a stream operation.
We have: $f$ is causal if and only if $f$ is stream GSOS definable.
\end{thm}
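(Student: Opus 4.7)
The plan is straightforward: this theorem is the conjunction of the two propositions immediately preceding it, so the proof consists of simply putting them together rather than introducing any new ideas.

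First, I would observe that the direction ``$f$ stream GSOS definable $\Rightarrow$ $f$ causal'' is exactly Proposition~\ref{prop:gsos-implies-causal}, whose proof rested on the key congruence property \eqref{eq:n-equiv-congruence} stating that $\equiv_n$ is preserved by interpretation of GSOS-defined terms. Nothing further needs to be done here — I would just cite the proposition.

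Next, for the converse direction ``$f$ causal $\Rightarrow$ $f$ stream GSOS definable,'' I would appeal to the preceding (unnumbered) proposition, whose construction takes the signature $\Sig_\causal$ of all causal operations and forms the stream SOS definition $\sdefi_\causal$ in \eqref{eq:sdefi-causal} using the Mealy output $f[(a_1,\ldots,a_k)]$ and Mealy derivative $f_{(a_1,\ldots,a_k)}$ of $f$ from \eqref{eq:Mealy-derivatives}. Since $\sdefi_\causal$ contains, in particular, a defining equation for the symbol $\syn{f}$, the algebra $\alpha_\sdefi$ from Theorem~\ref{thm:syntactic-solution} interprets $\syn{f}$ as $f$ itself, witnessing that $f$ is stream GSOS definable.

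The only conceivable obstacle is the (mostly notational) worry that Definition of ``stream GSOS definition'' was stated for a \emph{single} signature $\Sig$ with a pair $\tup{o_{\syn{f}},d_{\syn{f}}}$ for each symbol, whereas here the signature $\Sig_\causal$ is very large (proper-class-sized if one is careful about foundations). I would note that this is harmless: for the single operation $f$ under consideration one may restrict attention to the sub-signature consisting of $\syn{f}$ together with all symbols appearing as Mealy derivatives reachable from $f$, which is again a bona fide signature, and the restriction of $\sdefi_\causal$ to this sub-signature is a stream GSOS (in fact SOS) definition whose solution interprets $\syn{f}$ as $f$. With this remark the two directions combine into the stated biconditional.

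\begin{proof}
The implication from right to left is Proposition~\ref{prop:gsos-implies-causal}, and the implication from left to right is the preceding proposition. \qed
\end{proof}
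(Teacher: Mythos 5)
Your proof is correct and matches the paper exactly: the theorem is stated as the conjunction of Proposition~\ref{prop:gsos-implies-causal} and the unnumbered proposition preceding it, and the paper itself offers no further argument. Your additional remark about restricting $\Sig_\causal$ to a genuine (set-sized) sub-signature is a sensible footnote but not something the paper bothers with.
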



\subsection{Causality and productivity}
\label{ssec:causality-productivity}

Every stream GSOS defined operation is \emph{productive}, meaning
that by successively computing output and derivative using the
SDEs we can construct the entire stream in the limit.

A well known example of a stream operation that is not causal
is the operation
\[\begin{array}{lcl}
\even(\sig) &=&  (\sig(0), \sig(2),\sig(4),\ldots)\\
\end{array}\]
which we encountered
already in Section~\ref{sec:non-standard} (cf.~equation \eqref{eq:def-even}).
The operation $\even\colon\strms{A}\to\strms{A}$
can be defined by the following SDE:
\[\begin{array}{lcllcl}
\even(\sig)(0) &=& \sig(0), \qquad \even(\sig)' &=&  \even(\sig'')\\ 
\end{array}\]
If $\sig$ is given by a productive definition, then
also $\even(\sig)$ 
is productive.
However, it is easy to give a SDE using $\even$ 
which is not productive:
\begin{equation}\label{eq:sde-unproductive}
\begin{array}{c}
\sig(0) = 0, \qquad \sig' = \even(\sig)
\end{array}
\end{equation}
One sees the problem when we try to compute initial value and derivatives.
The first two steps are fine:
\[\begin{array}{lll}
1. & \sig(0) = 0, \qquad & \sig' = \even(\sig)\\
2. & \even(\sig)(0) = \sig(0) = 0, \qquad & \even(\sig)' = \even(\sig'') 
\end{array}
\]
But when we try to compute the initial value of $\even(\sig'')$, we get:
\[
\even(\sig'')(0) = \sig''(0) = \; ?
\]
which does not yield a value.
The SDE \eqref{eq:sde-unproductive} has several solutions,
e.g. $\sig=[0] = (0,0,0,\ldots)$ or
$\sig=0:0:\ones = (0,0,1,1,1,\ldots)$, but it does not have a unique one.

Productivity of stream definitions in a term rewriting context
have been closely studied in
\cite{EH:lazy-productivity,EGHIK:productivity-TCS}.

\subsection{Simple/linear/context-free stream specifications revisited}

In conclusion of this section, we will demonstrate how the syntactic method can be applied to prove the existence of unique solutions to the simple, linear and context-free specifications
from Sections \ref{sec:simple-specs}-\ref{sec:context-free-specs}.

\subsubsection*{Simple Specifications}\enlargethispage{\baselineskip}

A simple equation system (i.e.\ a stream automaton) $\tup{X,\es}$
can be seen as a stream definition over the signature $\Sig$ which contains
a constant for each $x \in X$, and no further operation symbols.
Note that since a stream definition consists of one equation for each operation symbol, we must treat the elements of $X$ as constants (rather than variables)) in order to view $\tup{X,\es}$ as a stream definition.
Hence $\TSig(\strms{A}) = X$, and it follows that the syntactic solution
from Theorem~\ref{thm:syntactic-solution} coincides with the direct solution
by coinduction.\enlargethispage{\baselineskip}

\subsubsection*{Linear Specifications}

A linear equation system over $X$ can be viewed as a stream definition
for a signature which contains a constant for each $x \in X$,
and operation symbols for scalar multiplication and sum, as we explain now.
Consider the \emph{linear signature $\Sig$} which contains
a unary scalar multiplication operation for each $a \in A$
and a binary sum operation.
The set of $\Sig$-terms over a set $X$ is generated by the
following grammar:
\begin{equation}\label{eq:linear-grammar}
t ::= x \in X \mid a \cdot t \mid t + t, \qquad a \in A.
\end{equation}
A linear equation system over a set $X$ can now be seen as a map
$\tup{o,d}\colon X \to A \times \TSig(X)$.
In order to get a stream definition,
we can again view elements of $X$ as constants and consider the
larger signature $\bar{\Sig} = \Sig \cup X$.
So in particular, $X \sse \term_{\bar\Sig}(Y)$ for any set $Y$.
By putting together the equations from $\tup{o,d}$ and the SDEs that define
scalar multiplication and sum, we obtain a big
stream definition $\sdefi$ for $\bar{\Sig}$.
From the syntactic method (Theorem~\ref{thm:syntactic-solution}),
we then obtain a map $X \to \strms{A}$ via inclusion and the term interpretation
$X \injR \term_{\bar{\Sig}}(\strms{A}) \sgoes{\interp{\alpha}} \strms{A}$.
We repeat here the relevant diagram for convenience:
\begin{equation}\label{eq:synt-method}
\xymatrix@C=6em{
  \term_\Sig(\strms{A}) \ar[r]^-{\interp{\alpha}} \ar[d]_-{\tup{\osdefi,\dsdefi}}
  & \strms{A} \ar[d]^-{\zeta}\\
A\times \term_\Sig(\strms{A}) \ar[r]^-{\id_A \times \interp{\alpha}}  &
A \times \strms{A}
}\end{equation}
This map $X \to \strms{A}$ preserves the equations in $\tup{o,d}$,
since $\interp{\alpha}$ is a homomorphism
of both $\bar{\Sig}$-algebras and stream automata,
hence it is a solution to $\tup{o,d}$,
and by uniqueness of solutions it must coincide with the solution obtained
as the composition $X \sgoes{\eta_X} \V(X) \sgoes{g} \strms{A}$ in
\eqref{eq:linear-solutions} on page~\pageref{eq:linear-solutions} in Section~\ref{sec:linear-specs}.

A more detailed argument of why the syntactic method yields a solution
in the sense of Section~\ref{sec:linear-specs} goes as follows.
We prove that the two methods lead to the same solution map
$X \to \strms{A}$ by showing that the following relation on streams
\begin{equation}\label{eq:bis-up-to-rel-linear}
 R = \{\tup{\interp{\alpha}(x),g(\eta_X(x))} \mid x \in X\} \;\sse\; \strms{A}\times\strms{A}
\end{equation}
is a bisimulation-up-to scalar multiplication and sum,
cf.~Theorem~\ref{thm:up-to}.
To this end, let $x \in X$ be arbitrary, and suppose that
$d(x) = a_1x_1 + \ldots + a_kx_k$.\\
\noindent\emph{Initial value:}
\[
 \interp\alpha(x)(0) = \osdefi(x) = o(x) = o^\sharp(\eta_X(x)) = g(\eta_X(x))(0).
\]
\noindent\emph{Derivative:} We have
\[\begin{array}{r}
\interp\alpha(x)' \stackrel{\eqref{eq:synt-method}}{=}
\interp\alpha(d(x)) =
\interp\alpha(a_1x_1 + \ldots a_kx_k) =
a_1\interp\alpha(x_1) + \ldots + a_k\interp\alpha(x_k)
\\[.7em]
g(\eta_X(x))'  \stackrel{\eqref{eq:linear-solutions}}{=}
g(d(x)) = g(a_1x_1 + \ldots a_kx_k) =
a_1g(x_1) + \ldots + a_kg(x_k)
\end{array}\]
where the last equalities in each line follow from $\interp\alpha$
being a $\bar{\Sig}$-algebra homomorphism, and $g$ being linear, respectively.
We have now shown that $R$ is a
bisimulation-up-to scalar multiplication and sum.
It follows that for all $x \in X$, $\interp{\alpha}(x)$ and $g(\eta_X(x))$
are bisimilar, and hence by coinduction they are equal.

The equivalence between the two solution methods also follows from a
more general result in \cite{BHKR:pres-DL} which relates specifications
that use pure syntax (such as $\TSig(X)$) and specifica\-tions that use
an algebraic structure viewed as syntax modulo axioms
(such as $\V(X)$ viewed as $\TSig(X)$ modulo vector space axioms).
We describe this is more detail in section~\ref{ssec:non-free-monads}.

\subsubsection*{Context-free Specifications}

As in the linear case 
we obtain unique solutions to context-free equation systems
by combining the equations with the SDEs that define the operations
used on the right-hand side of the equations.
In this case, we consider the \emph{polynomial signature} $\Sig$,
which contains a stream constant for each $a\in A$,
and binary symbols $+$ and $\times$.
The set $\TSig(X)$ of all $\Sig$-terms over a set $X$ is generated by
the following grammar:
\begin{equation}\label{eq:polynomial-grammar}
t ::= x \in X \mid a \in A \mid t+t \mid t \times t
\end{equation}
A context-free equation system over $X$ can now be seen as a map
$\tup{o,d} \colon X \to A \times \TSig(X)$.
Putting the equations from $\tup{o,d}$
together with the SDEs defining the polynomial $\Sig$-operations
$a \in A, +, \times$ we obtain
one big stream definition $\sdefi$ for the extended signature
$\bar{\Sig} = \Sig \cup X$ where elements from $X$ are viewed as constants.
From the syntactic method (Theorem~\ref{thm:syntactic-solution}),
we obtain a solution map
$X \injR T_{\bar{\Sig}}(\strms{A}) \sgoes{\interp\alpha} \strms{A}$.

As in the linear case,
one can show that this solution coincides with the solutions obtained
via stream automata (cf.~Section~\ref{sec:context-free-solutions})
using bisimulation-up-to polynomial operations.

\begin{rema}
	Unique solutions of simple, linear and context-free equation systems for the
	non-standard tail operations $\partial \in \{ \Delta, \frac{d}{dX} ,\Delta_o \}$
	can be obtained via the syntactic method in essentially the same way as
	the method only relies on the finality of $\zeta:\strms{A} \to A \times \strms{A}$.	
\end{rema}




\section{A General Perspective}\enlargethispage{\baselineskip}
\label{sec:stream-gsos}

In this section, we describe how the stream GSOS definitions of the
prevous section relate to the categorical framework
known as abstract GSOS.
\emph{Abstract GSOS} was developed in \cite{TuriPlotkin:LICS-GSOS} as a
general framework in structural operational semantics~\cite{Aceto:SOS-HB}
for studying rule formats that guarantee a compositional semantics.
The more recent survey paper \cite{Klin11} gives an excellent introduction
to abstract GSOS, and includes many examples on streams.
We present here a brief account
of the categorical underpinnings of stream GSOS, and relate
the general constructions to the concrete ones we have seen in
earlier sections.
The material in this section is based mainly on
\cite{Bartels:PhD,Klin11,LPW2004:cat-sos}.

For this section, we assume some familiarity with
basic categorical notions such as functor and natural transformation,
cf.~e.g.~\cite{MacLane}.
Throughout, let $\Set$ be the category of sets and functions.

The generality of abstract GSOS is obtained by generalising stream automata
to $F$-coalgebras, and
observing that a GSOS definition (for $F$-coalgebras)
corresponds to a so-called distributive law which links algebraic structure
with coalgebraic behaviour.

\subsection{Coalgebras for a functor}

In previous sections, we focused on stream automata
which are maps of the type $X \to A\times X$.
We will now look at them from a more abstract point of view,
namely as coalgebras \cite{Rut:univ-coalg,JR:tut-2011}.
Coalgebra is a framework for studying
state-based systems in a uniform setting.
This is achieved by describing the system type by
a functor $F$ which defines the kind of transitions and observations
the system can make.
By varying $F$ we obtain many known structures
such as $A$-labelled binary trees ($FX = X \times A\times X$),
deterministic automata ($FX = 2\times X^A$), and
labelled transition systems ($FX = \Pow(X)^A$), to mention just a few.
The advantage of viewing systems as $F$-coalgebras is that
we obtain generic definitions of morphisms and bisimulation,
and we can often prove results uniformly for many system types.

The general definition is as follows.
Given a functor $F\colon \Set \to \Set$, 
an \emph{$F$-coalgebra} is a pair $\tup{X,c}$ where $X$ is a set and
$c \colon X \to FX$ is a function. An \emph{$F$-coalgebra morphism}
from $\tup{X,c}$ to $\tup{Y,d}$ is a map $f\colon X \to Y$
such that $d\circ f = Tf \circ c$.
An $F$-coalgebra $\tup{Z,\zeta}$ is \emph{final} if
for any $F$-coalgebra $\tup{X,c}$ there is a unique
$F$-coalgebra morphism $h\colon \tup{X,c} \to \tup{Z,\zeta}$.
An $F$-coalgebra bisimulation between $\tup{X,c}$ and $\tup{Y,d}$ 
is a relation $R \sse X \times Y$ which carries itself an $F$-coalgebra
structure $r\colon R \to FR$ such that the projections $R \to X$ and $R \to Y$
are $F$-coalgebra morphisms.
It is straightforward to check that stream automata are
coalgebras for the functor $F = A \times (-)$
which maps a set $X$ to $A \times X$
and a function $f\colon X \to Y$ to $\id_A \times f$.
In particular, 
$A \times (-)$-coalgebra morphisms and
$A \times (-)$-coalgebra bisimulations are 
stream homomorphisms and stream bisimulations, respectively, and
the final $A\times(-)$-coalgebra is indeed the final stream automaton
described in Section~\ref{ssec:stream-basics}.

\subsection{Algebras for a monad}\label{ssec:algebras-for-monad}

Where coalgebra gives us an abstract view on systems and behaviour,
algebras for a monad give us an abstract view on algebraic theories,
and compositionality.

We start by explaining how the usual notion of an algebra for a signature
(described in Section~\ref{ssec:terms-algebras}) can be
understood categorically.
An algebra for a signature $\Sig$ of operations $\syn{f}_i, i \in I,$
with arities $k_i$, $i\in I$, is a map
$\coprod_{i\in I}X^{k_i} \to X$ where $X$ is the carrier and $\coprod$
denotes coproduct (or disjoint union).
For example, if $\Sig$ contains a constant $\syn{c}$, a unary $\syn{f}$ and a
binary $\syn{g}$, then an algebra for $\Sig$ with carrier $X$ is a map
$[c,f,g]\colon 1 + X + (X \times X) \to X$
given by case distinction with components $c \colon 1 \to X$,
$f\colon X \to X$ and $g \colon X \times X \to X$.
A signature $\Sig$ corresponds in this way to a $\Set$-functor (which we also denote by $\Sigma$), defined by
$\Sig X = \coprod_{i\in I}X^{k_i}$, and an algebra for the signature $\Sig$
with carrier $X$ is thus a pair $\tup{X, \alpha\colon\Sig X \to X}$.
More generally, for any functor $G\colon \Set\to\Set$,
a \emph{$G$-algebra} is a pair
$\tup{X,\alpha\colon GX \to X}$,
and a \emph{$G$-algebra homomorphism} from $\tup{X,\alpha}$ to $\tup{Y,\beta}$
is a map $h\colon X \to Y$ such that $h \circ \alpha = \beta\circ Gh$.
A $G$-algebra $\tup{X,\alpha}$ is \emph{initial} if for any $G$-algebra
$\tup{Y,\beta}$ there is a unique $G$-algebra homomorphism
$h\colon \tup{X,\alpha} \to \tup{Y,\beta}$.
Note that a $\Sig$-algebra (where $\Sig$ is viewed as a functor) is the same as an algebra for $\Sig$ (where $\Sig$ is viewed as a signature).

Monads are functors with extra ``monoid'' structure.
Formally, a \emph{monad} is a triple $\T=\tup{T, \eta, \mu}$ consisting of
a $\Set$-functor $T$, together with natural transformations
$\eta \colon \Id \Rightarrow T$ (the \emph{unit}),
and $\mu \colon TT \Rightarrow T$ (the \emph{multiplication}) such
that $\mu \circ T \eta = \id = \mu \circ \eta_T$ and
$\mu \circ \mu_T = \mu \circ T \mu$.

An \emph{Eilenberg-Moore algebra for the monad}
$\T=\tup{T,\eta,\mu}$ (or just $\T$-algebra for short) is a $T$-algebra
$\tup{X, \alpha}$ that respects the monad structure meaning that
$\alpha \circ \eta_X = \id$ and
$\alpha \circ \mu_X = \alpha \circ T\alpha$.
Note that the latter condition says that $\alpha$ is itself a homomorphism.
A homomorphism of $\T$-algebras is just a homomorphism of $T$-algebras.
An important role is played by $\tup{TX, \mu_X}$ which is the
\emph{free $\T$-algebra}.
Given any $\T$-algebra $\tup{Y, \alpha}$
and any function $f \colon X \rightarrow Y$,
there is a unique $T$-algebra homomorphism
$\ext{f} \colon TX \rightarrow A$  such that
$\ext{f}(\eta(x)) = f(x)$ for all $x \in X$, given by $\alpha \circ Tf$.

We have already encountered several examples of monads.
For a signature $\Sig$,
the mapping $\TSig$ that assigns to a set $X$ the set $\TSig(X)$ of $\Sig$-terms
over $X$ is the (functor part of the) \emph{free monad generated by the functor $\Sig$}.
The unit $\eta_X\colon X \to \TSig(X)$ is inclusion
of variables as terms, and the multiplication
$\mu_X\colon \TSig\TSig(X) \to \TSig(X)$ is the flattening of nested terms
into terms.

Another example of a monad is the construction
$\V$ from Section~\ref{ssec:linear-aut} where $A$ is assumed to be a field.
Recall that $\V(X)$
is the set of all formal linear combinations over $X$,
i.e.,
\[
\V(X) =
\{a_1x_1 + \ldots + a_nx_n \mid a_i \in A, x_i\in X, \;\forall i: 1\leq i\leq n\}\]
First, $\V$ is a functor by defining $\V(f)\colon \V{X} \to \V{Y}$
by $\V(f)(\sum{a_i x_i}) = \sum{b_j y_j}$
where $b_j = \sum_{f(x_i)=y_j}{a_i}$.
The unit $\eta_X\colon X \to \V{X}$
includes variables as the linear combinations: $x \mapsto 1 x$,
and the multiplication $\mu_X \colon\V^2{X} \to \V{X}$ flattens
by distributing scalars over sums as illustrated here for $a,b,c,d,e,f \in A$
and $x,y,z \in X$:
\[ \mu_X(a(cx + dy) + b(ex + fz)) = (ac +be)x + ady + bfz.
\]
The free $\V$-algebra $\tup{\V{X},\mu_X}$ is the vector space
with basis $X$.

Finally, also the construction $\M(X^*)$ of polynomials over $X$
with coefficients in a commutative semiring $A$
from Section~\ref{sec:context-free-specs}
is a monad with unit and multiplication defined in the expected way.
For $A = \bbN$, this was shown in \cite[sec.~3.4]{Jacobs:bialg-dfa-regex}, 
and the proof generalises in a straightforward manner.
As noted already in Section~\ref{sec:context-free-specs},
the free algebra $\M(X^*)$ is again a semiring.

The vector space monad $\V$ and the polynomials monad $\M((-)^*)$
are examples of monads that capture equational theories.
Namely, a variety of algebras defined by a signature $\Sig$ and
equations $E$ is (isomorphic to) the class of Eilenberg-Moore algebras
for the quotient monad $\TSig/\!\equiv_E$ that maps a set $X$ to
$\TSig(X)/\!\equiv_E$ where $\equiv_E$ is the congruence generated by $E$ on $\Sig$-terms.
For example, $\V(X)$ can be viewed as the set of ``linear terms'' defined in
\eqref{eq:linear-grammar} quotiented with the axioms of vector spaces.
Similarly, $\M(X^*)$ is the set of ``polynomial terms'' defined in
\eqref{eq:polynomial-grammar} quotiented with the axioms of unital associative algebras
over a semiring.

\subsection{Bialgebras for a distributive law}
\label{ssec:bialgebras}

The notion of a bialgebra combines coalgebraic and algebraic structure.
The interaction between the two structures should be specified
by a so-called \emph{distributive law}. This definition is rather abstract
at first sight, but we will later see that for a free monad $\monTSig=\tup{\TSig,\eta,\mu}$,
distributive laws involving $\monTSig$ are essentially systems of SDEs.

In the rest of this subsection, we let $\T=\tup{T,\eta,\mu}$ be a monad
and $F$ be a functor, both on $\Set$.
A \emph{distributive law of $\T$ over $F$}
is a natural transformation $\lambda\colon TF \To FT$
that is compatible with the monad structure,
i.e., for all $X$ the following diagrams commute:
\[
\xymatrix@R=1.5em@C=1.2em{
FX \ar[rr]^-{\eta_{FX}} \ar[ddrr]_-{F\eta_{X}}
&& TFX \ar[dd]^-{\lambda_X}
\\
&  \ar@{}[r]^(.4){(\text{unit-}\lambda)} & &&&
\\
&& FTX &&
}
\xymatrix@C=3em@R=3em{
T^2FX \ar[d]_-{\mu_{FX}} \ar[r]^-{T\lambda_X} \ar@{}[drr]|{(\text{mult-}\lambda)}
& TFTX \ar[r]^-{\lambda_{TX}}
& FT^2 X \ar[d]^-{F\mu_X}
\\
TFX \ar[rr]^-{\lambda_X} && FTX
}
\]

A \emph{$\lambda$-bialgebra} is a triple $\tup{X,\alpha,\beta}$
where $\alpha\colon TX\to X$ is a $\T$-algebra and $\beta\colon X \to FX$
is an $F$-coalgebra, and the two are compatible via $\lambda$, i.e.,
the following diagram commutes:
\begin{equation}\label{eq:pentagon}
\xymatrix{
TX \ar[d]_-{T\beta} \ar[r]^-{\alpha} & X \ar[r]^-{\beta} & FX \\
TFX \ar[rr]^-{\lambda_X} && FTX \ar[u]_-{F\alpha}
}\end{equation}
A \emph{morphism of $\lambda$-bialgebras} from $\tup{X_1,\alpha_1,\beta_1}$
to $\tup{X_2,\alpha_2,\beta_2}$ is a function $f\colon X_1 \to X_2$
which is both a $T$-algebra morphism and an $F$-coalgebra morphism.

At present we are mainly interested in the case where $F = A \times (-)$
is the functor of stream automata, and 
we find that a distributive law $\lambda$ of $\T$ over $A \times (-)$
is a natural transformation whose $X$-component has the type
$\lambda_X \colon T(A\times X) \to A \times TX$, and a $\lambda$-bialgebra
has the type $TX \sgoes{\alpha} X \sgoes{\beta} A \times X$.

An important reason why distributive laws yield solutions to systems of
SDEs is that they induce
$\T$-algebraic structure on the final $F$-coalgebra, as we explain now.

Given a distributive law $\lambda$ of $\T$ over $F$,
the functor $F$ lifts to a functor $F_\lambda$ on the category of $\T$-algebras;
and dually the monad $\T$ lifts to a monad
${\T}_\lambda$ on the category of $F$-coalgebras
(cf.~\cite[Lem.~3.4.21]{Bartels:PhD}).
In particular, the functor ${\T}_\lambda$ maps an $F$-coalgebra $\xi\colon X \to FX$ to
the $F$-coalgebra $\lambda_X\circ T\xi\colon TX \to FTX$.
Applying ${\T}_\lambda$ to the final $F$-coalgebra $\tup{Z,\zeta}$,
we obtain an $F$-coalgebra on $TZ$,
and hence by the finality of $\tup{Z,\zeta}$
there is a unique $F$-coalgebra morphism
$\alpha\colon TZ \to Z$.
For the case of stream automata, this is shown in the following diagram:
\begin{equation}\label{eq:monadic-sos}
\xymatrix@C=+15mm{
T(\strms{A}) \ar[r]^-{T\zeta} \ar@{.>}[d]_-{\alpha} &
T(A \times \strms{A}) \ar[r]^-{\lambda_{\strms{A}}} &
A \times T(\strms{A}) \ar@{.>}[d]^-{\id_A\times\alpha}
\\
\strms{A} \ar[rr]^-{\zeta} &&
A \times \strms{A}
}\end{equation}
Furthermore, it can be shown that $\alpha$ is a
$\T$-algebra on $Z$,
and that $\tup{Z,\alpha,\zeta}$
is a final $\lambda$-bialgebra,
see e.g.,~\cite{Bartels:PhD,Klin11} for details.
In short, a distributive law $\lambda$ of $\T$ over $F$
induces a canonical $\T$-algebra on $\tup{Z,\zeta}$.

This leads us to yet another reason why distributive laws and bialgebras are useful. Namely, since $\alpha$ is also a $\T$-algebra homomorphism, the coalgebraic semantics is compositional with respect to $\T$-algebraic structure.
In particular, $F$-bisimilarity is a $\T$-algebra congruence
(cf.~\cite[Thm.~3.2.6]{Bartels:PhD}),
and Lemmas \ref{lem:bisim-is-congruence} and \ref{lem:final-map-is-algebra-hom}, Proposition~\ref{prop:D-solutions} and Theorem~\ref{thm:syntactic-solution}
are thus special instances of
more general results on bialgebras.
Moreover, the presence of a distributive law ensures the
soundness of the enhanced coinduction principle \emph{coinduction-up-to context} (cf.~\cite{Bartels:PhD,RBR:up-to-SOFSEM}) of which Theorem~\ref{thm:up-to} is an instance.


\subsection{The Syntactic Method via Abstract GSOS}
\label{ssec:stream-GSOS}

We now show how SDEs and the syntactic method can be understood 
in terms of abstract GSOS.
The relationship between SDEs and operational rules is described very well in
\cite{Klin11}, and we focus here on a more direct translation fom SDEs to the
abstract GSOS framework in which formats correspond to certain types of
natural transformations.

\subsubsection{Stream differential equations as natural transformations}
\label{ssec:sde-to-nt}

To illustrate, we use the SDEs from Section~\ref{ssec:stream-ops}
that define the constant streams $\cns{a}$, addition $+$ and
convolution product $\times$.
We repeat them here together for convenience:
\begin{equation}\label{eq:sde-again}
  \begin{array}{lclclcl}
    \cns{a}(0) &=& a, & \qquad& \cns{a}' &=& \cns{0} \qquad \text{ for all } a \in \bbR,\\[.5em]
    (\sig+\tau)(0) &=& \sig(0)+\tau(0), &\quad &
    (\sig+\tau)' &=& \sig'+\tau',\\[.5em]
    (\sig\times\tau)(0) &=& \sig(0)\cdot\tau(0), &\quad &
    (\sig\times\tau)' &=& (\sig'\times\tau) + (\cns{\sig(0)}\times\tau')
  \end{array}
\end{equation}
They correspond to stream GSOS definitions for the
signature $\Sigma_\mathsf{ar}=\{\syn{+},\syn{\times}\} \cup \{ \syn{[a]} \mid a \in \bbR \}$ as shown in Example~\ref{ex:arith-gsos-def}, and we repeat them here:
\begin{equation}\label{eq:gsos-arith-again}
\begin{array}{lclclcl}
o_{\syn{[a]}} &=& a, & \; & 
d_{\syn{[a]}} &=& \syn{[0]} \qquad \text{ for all } a \in \bbR,
\\[.8em]
o_{\syn{+}}(a,b) &=& a+b, & \; &
d_{\syn{+}}(a,b) &=& y_1 \;{\syn{+}}\; y_2, 
\\[.8em]
o_{\syn{\times}}(a,b) &=& a \cdot b, & \; &
d_{\syn{\times}}(a,b) &=& 
(y_1 \;{\syn{\times}}\; x_2) \;{\syn{+}}\; (\syn{[a]}\;\syn{\times}\; y_2) 
\end{array}
\end{equation}
where $a,b \in \bbR$ correspond to $\sig(0), \tau(0)$ and
$x_1,y_1,x_2,y_2$ are stream variables
that correspond to $\sig,\sig',\tau,\tau'$.

The connection with abstract GSOS is made by observing that the
definitions in \eqref{eq:gsos-arith-again} correspond to families of
functions:
\begin{equation}\label{eq:rho-arith}
  \begin{array}{rcl}
    (X \times \bbR \times X) & \sgoes{\rho^{\cns{a}}_X} & \bbR \times \TSig(X)\\
    \tup{x_1,a,y_1} & \mapsto & \tup{a, \syn{\cns{0}}}
  \\[1em]
  (X \times \bbR \times X) \times (\bbR \times X) & \sgoes{\rho^+_X} & \bbR \times \TSig(X)  \\
  \tup{\tup{x_1,a,y_1},\tup{x_2,b,y_2}} & \mapsto & \tup{a+b, y_1 \;{\syn{+}}\; y_2}
  \\[1em]
  (X \times \bbR \times X) \times (\bbR \times X) & \sgoes{\rho^\times_X} & \bbR \times \TSig(X)  \\
  \tup{\tup{x_1,a,y_1},\tup{x_2,b,y_2}} & \mapsto & \tup{a \cdot b,(y_1 \;{\syn{\times}}\; x_2) \;{\syn{+}}\; (\syn{[a]}\;\syn{\times}\; y_2)}
\end{array}
\end{equation}
The functor corresponding to the arithmetic signature is
$\Sig_\mathsf{ar}(X) = {X}^A + (X \times X)+ (X \times X)$, and we can combine the above three maps into one $\rho_X = [(\rho^{\cns{a}}_X)^A,\rho^+_X, \rho^\times_X]$ (which applies the relevant component by case distinction on its argument):
\[
\rho_X \colon \Sig_\mathsf{ar}( X \times \bbR \times X)  \;\sgoes{} \; \bbR \times T_{\Sig_\mathsf{ar}}(X)
\]
In general, a stream GSOS definition for a
signature $\Sig$ corresponds to a family of maps $\rho_X$:
\[ \rho_X \colon \Sig( X \times A \times X) \; \sgoes{} \; A \times \TSig(X)\\
\]
which has a component for each $k$-ary $\syn{f} \in \Sig$:
\begin{equation}\label{eq:gsos-stream-def}
\rho_X^{\syn{f}} \colon  \tup{x_1,a_1,y_1}, \ldots, \tup{x_k,a_k,y_k} \; \mapsto \;
\tup{  o_{\syn{f}}(a_1,\ldots,a_k),d_{\syn{f}}(a_1,\ldots,a_k)}
\end{equation}
Notably, $\rho_X$ is defined uniformly in $X$, and in fact,
$\rho$ is a natural transformation of type
\begin{equation}\label{eq:stream-gsos-rule}
\rho \colon \Sig( - \times A \times -) \To A \times \TSig(-)
\end{equation}
This is an instance (with $F = A \times (-)$) of the more general
type of natural transformation $\rho\colon \Sig(\Id\times F) \To F\TSig$.

The reader may have noticed that in the above, $\rho^{\syn{\cns{a}}}_X$ and $\rho^+_X$ do not use the $x$-components of their arguments. In fact, any collection of stream definitions $\sdefi$ for a signature $\Sig$ that do not use the $x$-variable on the right-hand side (i.e., $\sdefi$ is in the SOS-format)
can be expressed by a natural transformation of the simpler type
\begin{equation}\label{eq:stream-sos-rule}
\rho \colon \Sig(A \times -)  \;\To \; A \times T_{\Sig}(-)
\end{equation}
For an arbitrary functor $F$, this would be
a natural transformation $\rho\colon \Sig F \To F\TSig$.

We have thus seen how stream definitions correspond to
natural transformations, and that the types of these natural
transformations correspond to various definition formats such as stream SOS and
stream GSOS.

\subsubsection{From natural transformations to distributive laws}

The following central results in abstract GSOS show that
natural transformations $\rho$ involving a signature $\Sig$
as in the previous subsection,
in fact, determine distributive laws for the free monad $\T_\Sig$.
We start with the relatively simple result for natural transformations
in the SOS-format.

\begin{lem}\label{lem:SOS-format}
Let $\Sig$ be a signature functor, and ${\T}_\Sig = \tup{\TSig,\eta,\mu}$ the free monad over $\Sig$. For any functor $F$, there is a 1-1 correspondence:
\[\begin{array}{ll}
\lambda\colon \TSig F \To F\TSig \quad &\text{distributive law of $\T_\Sig$ over $F$}\\
\hline\hline
\rho\colon \;\;\Sig F \To F\TSig & \text{plain natural transformation}
\end{array}\]
\end{lem}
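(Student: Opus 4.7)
\proof
The plan is to exploit the universal property of the free monad $\monTSig$. Recall that $\TSig X$ admits the defining isomorphism $\TSig X \cong X + \Sig(\TSig X)$, and that $\Sig$ embeds into $\TSig$ via a natural transformation $\iota \colon \Sig \To \TSig$ sending a ``shallow'' operation $\syn{f}(x_1,\ldots,x_k)$ to the corresponding depth-one term.

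For the \emph{easy direction}, given a distributive law $\lambda\colon \TSig F \To F\TSig$, I define
\[
\rho \;=\; \lambda \circ \iota F \;\colon\; \Sig F \To F\TSig.
\]
This is a natural transformation since it is the composition of two.

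For the \emph{other direction}, given $\rho\colon \Sig F \To F\TSig$, I construct $\lambda_X \colon \TSig(FX) \to F\TSig(X)$ by induction on terms, using the decomposition $\TSig(FX) \cong FX + \Sig(\TSig(FX))$:
\begin{itemize}
\item On generators $u \in FX$, set $\lambda_X(\eta_{FX}(u)) = F\eta_X(u)$ (this value is forced by the axiom (unit-$\lambda$)).
\item On terms of the form $\syn{f}(t_1,\ldots,t_k)$ with $\syn{f}\in\Sig_k$ and $t_i \in \TSig(FX)$, assume inductively that each $\lambda_X(t_i) \in F\TSig(X)$ is already defined, then put
\[
\lambda_X(\syn{f}(t_1,\ldots,t_k)) \;=\; F\mu_X \Big( \rho_{\TSig X}\big(\syn{f}(\lambda_X(t_1),\ldots,\lambda_X(t_k))\big) \Big).
\]
\end{itemize}
I then need to verify four things: (i) naturality of $\lambda$ in $X$; (ii) the unit axiom (unit-$\lambda$); (iii) the multiplication axiom (mult-$\lambda$); (iv) that the two passages $\lambda \mapsto \rho \mapsto \lambda$ and $\rho \mapsto \lambda \mapsto \rho$ are mutually inverse.

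Claims (i) and (ii) are immediate from the construction: naturality follows by a straightforward induction on term structure, exploiting the naturality of $\rho$, of $\mu$, and of $\eta$; and (unit-$\lambda$) is built into the base case. Claim (iv) is also routine: going $\lambda \mapsto \rho \mapsto \lambda'$, one shows $\lambda' = \lambda$ by induction on terms, using that (unit-$\lambda$) and (mult-$\lambda$) for $\lambda$ force exactly the inductive clauses defining $\lambda'$; and going $\rho \mapsto \lambda \mapsto \rho'$, one uses that on shallow terms $\syn{f}(u_1,\ldots,u_k)$ the recursive clause yields $F\mu_X \circ \rho_{\TSig X} \circ \Sig(F\eta_X) = \rho_X$ by naturality of $\rho$ together with the monad law $\mu \circ T\eta = \id$.

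The main obstacle is claim (iii), the multiplication axiom. Unwinding, one must show for all $s \in \TSig\TSig(FX)$ that
\[
\lambda_X(\mu_{FX}(s)) \;=\; F\mu_X \circ \lambda_{\TSig X} \circ \TSig(\lambda_X)(s).
\]
This is proved by structural induction on $s$. The base case $s = \eta_{\TSig(FX)}(t)$ reduces to the unit laws of the monad and (unit-$\lambda$); the inductive step, $s = \syn{f}(s_1,\ldots,s_k)$, requires one to juggle three applications of $\mu$ (one from $\mu_{FX}$, one from the definition of $\lambda$ on terms, and one arising inside $\TSig(\lambda_X)$), collapsing them via the associativity $\mu \circ T\mu = \mu \circ \mu_T$ together with the naturality of $\rho$ at $\mu_X$. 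Once this bookkeeping is completed, both sides reduce to the common expression $F\mu_X \circ \rho_{\TSig X} \circ \Sig(\lambda_X \circ \mu_{FX})(s_1,\ldots,s_k)$, and the inductive hypothesis finishes the argument.
\qed
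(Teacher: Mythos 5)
Your proof is correct and takes essentially the same route as the source it rests on: the paper itself gives no argument here but simply cites Lemma 3.4.24(i) of Bartels' thesis, where $\lambda_X$ is obtained exactly as you obtain it, by initiality of $\TSig(FX)$ as an algebra for $Y \mapsto FX + \Sig Y$ (your structural induction on terms), with base case $F\eta_X$ and inductive step $F\mu_X \circ \rho_{\TSig X} \circ \Sig\lambda_X$. Your verification of the multiplication axiom via $\mu \circ T\mu = \mu \circ \mu_T$ and naturality of $\rho$ at $\mu_X$, and of the two round-trips, correctly supplies the bookkeeping that the paper delegates to the reference.
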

\proof
This is Lemma 3.4.24(i) of \cite{Bartels:PhD}.
\qed

This correspondence extends to $\rho$ of the type
in \eqref{eq:stream-gsos-rule}
with one small modification,
namely that a natural transformation $\rho\colon \Sig(\Id\times F) \To F\TSig$
induces a distributive law $\lambda$
of $\T_\Sig$ over the
functor $\Id\times F$ such that $\pi_1\circ\lambda=\TSig\pi_1$,
where 
$\pi_1\colon \Id\times F \To \Id$ is the left projection.
We call such a distributive law $\lambda$ a
\emph{GSOS law for $\T_\Sig$ and $F$}.
A GSOS law is also known as a distributive law of the monad $\T_\Sig$
over the \emph{cofree copointed functor over $F$ (given by $\tup{\Id\times F,\pi_1}$)}.
We refer to \cite{LPW2004:cat-sos} or \cite[sec.~3.5.2]{Rot:PhD} for 
further details.

\begin{lem}\label{lem:GSOS-format}
Let $\Sig$ be a signature functor, and $\T_\Sig = \tup{\TSig,\eta,\mu}$ the free monad over $\Sig$. For any functor $F$, there is a 1-1 correspondence,
\[\begin{array}{ll}
\lambda\colon \TSig (\Id\times F) \To (\Id\times F)\TSig \quad &\text{distributive law of $\T_\Sig$ over $\tup{\Id\times F,\pi_1}$}\\
\hline\hline
\rho\colon \;\;\Sig(\Id\times F) \To F\TSig & \text{plain natural transformation}
\end{array}\]
\end{lem}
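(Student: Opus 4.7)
The plan is to adapt the inductive construction from Lemma~\ref{lem:SOS-format}. A distributive law $\lambda$ of $\T_\Sig$ over the copointed functor $\tup{\Id\times F, \pi_1}$ is a natural transformation $\lambda \colon \TSig(\Id\times F)\To(\Id\times F)\TSig$ that additionally satisfies $\pi_1 \circ \lambda = \TSig\pi_1$. Writing $\lambda = \tup{\lambda_1, \lambda_2}$ via the product structure of $\Id \times F$, this projection condition forces $\lambda_1 = \TSig\pi_1$, so the free datum in $\lambda$ is precisely its second component $\lambda_2 \colon \TSig(\Id \times F) \To F\TSig$. The task then reduces to showing that any such $\lambda_2$ is uniquely determined by its restriction to depth-one terms, which is a natural transformation $\rho \colon \Sig(\Id \times F) \To F\TSig$.

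For the construction from $\rho$ to $\lambda$, I would exploit the fact that $\TSig(X \times FX)$ is the initial $Y \mapsto (X \times FX) + \Sig Y$-algebra. I would equip $\TSig X \times F\TSig X$ with an algebra structure whose $(X \times FX)$-part sends $\tup{x,u}$ to $\tup{\eta_X(x),\, F\eta_X(u)}$ and whose $\Sig$-part, applied to $\sigma(\tup{s_1,v_1}, \ldots, \tup{s_k,v_k}) \in \Sig(\TSig X \times F\TSig X)$, returns
$\tup{\mathrm{str}_X(\sigma(s_1, \ldots, s_k)),\; F\mu_X(\rho_{\TSig X}(\sigma(\tup{s_1,v_1}, \ldots, \tup{s_k,v_k})))}$,
where $\mathrm{str}_X \colon \Sig\TSig X \to \TSig X$ is the free $\Sig$-algebra structure. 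The first component guarantees $\pi_1 \circ \lambda = \TSig\pi_1$ by construction, and the second component follows the Lemma~\ref{lem:SOS-format} recipe with $\rho$ as its generator. In the reverse direction, given $\lambda$, I would recover $\rho$ by restricting to $\Sig(X \times FX) \hookrightarrow \TSig(X \times FX)$ (via the $\Sig$-algebra embedding of depth-one terms) and projecting to $F\TSig X$.

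The remaining work consists of verifying naturality of $\lambda$, the unit axiom, and the multiplication axiom; naturality and the unit axiom follow routinely from the inductive construction together with naturality of $\rho$, $\eta$ and $\mu$. The main obstacle is the multiplication axiom $\lambda \circ \mu_{\Id \times F} = (\Id \times F)\mu \circ \lambda_{\TSig} \circ \TSig\lambda$: this requires a nested induction on terms in $\TSig\TSig(X \times FX)$, with the base case handling substitution of variables and the inductive step using the $\Sig$-algebra structure together with naturality of $\rho$ at $\TSig X$. The calculation is essentially the same as in Lemma~\ref{lem:SOS-format} but with the extra bookkeeping that the first component of $\lambda$ is $\TSig\pi_1$, which slots in automatically without altering the argument. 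Bijectivity of the overall correspondence then follows from the fact that the construction and restriction are mutually inverse at the level of depth-one terms.
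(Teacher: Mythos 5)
The paper does not actually prove this lemma: its ``proof'' is a pointer to Lemma~3.5.3 of Rot's thesis and Lemma~3.5.2(i) of Bartels' thesis. Your sketch reconstructs the standard argument that lives in those references, and it is correct: the copoint condition $\pi_1\circ\lambda = \TSig\pi_1$ reduces the data of $\lambda$ to its second component, and the unit and multiplication axioms force that component to be the unique morphism out of the initial $\bigl((\Id\times F)X + \Sig(-)\bigr)$-algebra $\TSig((\Id\times F)X)$ into the algebra you describe on $(\Id\times F)\TSig X$ --- this is also exactly the construction the paper hints at for the simpler SOS case (Lemma~\ref{lem:SOS-format}). Two small points worth making explicit when writing this up: for the round trip $\rho\mapsto\lambda\mapsto\rho$ you need naturality of $\rho$ at $\eta_X$ together with the monad law $\mu_X\circ\TSig\eta_X=\id$ to cancel the $F\mu_X$ you inserted; and for $\lambda\mapsto\rho\mapsto\lambda$ you must argue that the unit and multiplication axioms make any such $\lambda_X$ an algebra morphism for your constructed structure, so that initiality pins it down. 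Neither is an obstacle, and your identification of the multiplication axiom as the only genuinely laborious verification matches where the work sits in the cited proofs.
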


\proof
See Lemma~3.5.3 of \cite{Rot:PhD} or Lemma 3.5.2(i) of \cite{Bartels:PhD}.
\qed

If $\sdefi$ is a stream GSOS definition with corresponding $\rho$,
we obtain by Lemma~\ref{lem:GSOS-format} a stream GSOS law
$\lambda$,
and for any stream automaton $\beta\colon X \to A\times X$
this $\lambda$ yields a stream automaton structure on
$\TSig(X)$ by
\[\xymatrix@C=+15mm{
\TSig(X) \ar[r]^-{\TSig\tup{\id_X,\beta}} &
\TSig(X \times (A \times X)) \ar[r]^-{\pi_{2}\circ \lambda_X} &
A \times \TSig(X)
}
\]
If we apply this construction to the final stream automaton
$\zeta = \tup{\hd,\tl}\colon \strms{A} \to A \times \strms{A}$, we obtain
precisely the syntactic stream automaton $\tup{\osdefi,\dsdefi}$
from Definition~\ref{defi:sdefi-automaton},
and hence the unique stream automaton homomorphism
$\sem{-}_\sdefi =\ol{\alpha} \colon \TSig(\strms{A}) \to \strms{A}$
by coinduction, as shown in the following diagram:
\begin{equation}\label{eq:monadic-gsos}
\xymatrix@C=+15mm{
\TSig(\strms{A}) \ar[r]^-{\TSig\tup{\id,\zeta}} \ar@{.>}[d]_-{\ol{\alpha}} &
\TSig(\strms{A} \times (A \times \strms{A})) \ar[r]^-{\pi_{2}\circ\lambda_{\strms{A}}} &
A \times \TSig(\strms{A}) \ar@{.>}[d]^-{\id_A\times\ol{\alpha}}
\\
\strms{A} \ar[rr]^-{\zeta} &&
A \times \strms{A}
}
\end{equation}
As in \eqref{eq:monadic-sos}, it can be shown that
$\ol{\alpha}$ is a $\TSig$-algebra homomorphism, and
hence essentially a solution to $\sdefi$.

To summarise, a collection of SDEs that together form a stream GSOS definition
$\sdefi$ 
corresponds to a stream GSOS law $\lambda$,
which yields a stream automaton structure on $\TSig(\strms{A})$, and
hence, by coinduction, a unique interpretation of stream operations in $\Sig$.


\subsection{Solving systems of equations}

We have now seen how the syntactic method is essentially an instance
of the abstract GSOS framework.
We now show that also the solution methods based on coinduction
for stream automata in
Sections~\ref{ssec:linear-aut} and \ref{sec:context-free-solutions}
can be placed in the bialgebraic framework. 
They are, in fact, instances of \emph{$\lambda$-coinduction} as defined
in \cite{Bartels:PhD}.

Recall that linear equation systems are maps of the form 
$\es \colon X \to A \times\V{X}$,
and context-free equation systems are maps of the form
$\es \colon X \to A \times\M({X}^*)$.
More generally, a system of equations for a monad $\T=\tup{T,\eta,\mu}$
and a functor $F$ is a map
$\es \colon X \to FTX$.
If we have a distributive law $\lambda$ of $\T$ over $F$, then
for every equation system $\es \colon X \to FTX$, we can construct
a $\lambda$-bialgebra $\tup{TX,\mu_X,\es_\lambda}$ with
free $\T$-algebra component by taking
$\es_\lambda = F\mu_{X} \circ \lambda_{TX}\circ T\es$,
cf.~\cite[Lemma~4.3.3]{Bartels:PhD}.
We now obtain a unique $\lambda$-bialgebra morphism
$g$ into the final $\lambda$-bialgebra,
as shown here for the stream functor $F = A \times (-)$:
\begin{equation}\label{eq:solutions-bialgebra}
\xymatrix@R=+10mm@C=1em{
 & T^2(X) \ar[d]_-{\mu_X} \ar[rrr]^-{Tg} &&& T(\strms{A}) \ar[d]^-{\alpha}
\\
X \ar[d]_-{\es} \ar[r]^-{\eta_X}  
  & T(X) \ar[dl]^<<<<<{\es_\lambda} \ar[rrr]^-{g} 
  &&& A^\omega \ar[d]^-{\zeta}\\
A \times T(X) \ar[rrrr]^-{\id_A\times g}
  &
  &&& A \times A^\omega
}
\end{equation}
Note that diagrams 
\eqref{eq:linear-solutions} for linear solutions
and \eqref{eq:context-free-solutions} for context-free solutions
are both instances of \eqref{eq:solutions-bialgebra}; except that
the algebra part was left implicit.

In the terminology of \cite{Bartels:PhD}, $e\colon X \to FTX$ is a guarded recursive specification, and the map $g\circ\eta_X\colon X \to \strms{A}$
is a $\lambda$-coiterative arrow, which implies that $g\circ\eta_X$
is the unique solution to $e$, cf.~\cite[Lemma~4.3.4]{Bartels:PhD}.

The above generalises to distributive laws
of monads over cofree copointed functors,
i.e., in particular to GSOS laws,
but the argument is a bit more involved.
Detailed arguments are found in Corollary~4.3.6 and Lemma~4.3.9 from \cite{Bartels:PhD}; 
see also~\cite{Jacobs:bialg-dfa-regex,LPW2004:cat-sos}.


\subsubsection{Distributive laws for non-free monads}
\label{ssec:non-free-monads}

If $\T = \T_\Sig$ is a free monad for a signature $\Sig$,
then $\lambda$ is essentially given by a collection of SDEs
that define $\Sig$-operations.
However, the two monads $\V$ and $\M({-}^*)$ relevant for linear
and context-free systems are not free.
If $\T$ is not free, then we cannot immediately claim the
existence of a $\lambda$ (and hence unique solutions)
by giving a system of SDEs.
However, when $\T$ encodes a variety
of algebras in terms of operations $\Sig$ and equations $E$,
(such as, for example, $\V$ or $\M((-)^*)$), then
$\lambda$ can often be described as a quotient of a law $\lambda_\Sig$ 
that does correspond to a system of SDEs.
In this case, the solution obtained from $\lambda$ coincides
with the solution obtained from $\lambda_\Sig$.
The existence of such a $\lambda$
can be proved by showing that the SDEs defining the operations in $\Sig$
respect the equations in $E$ in a certain sense. 
These results are described in detail in \cite{BHKR:pres-DL}.


\subsubsection{Linear equation systems, revisited}
\label{ssec:linear-revisited}

Let $A$ be a field.
The behaviour functor is the stream automaton functor $F=A\times (-)$,
$\T$ is the vector space monad $\V$ described in
Section~\ref{ssec:algebras-for-monad}.
Let $\lambda\colon \V (A \times (-)) \To A \times \V(-)$
be given by
(cf.~\cite[Thm.~10]{Jacobs06}) :
\begin{equation}\label{eq:linear-pointwise-dl}
\xymatrix@C=+50pt{
\lambda_X : \V(A \times X) \ar[r]^-{\tup{\V\pi_1,\V\pi_2}} &
 \V{A} \times \V {X} \ar[r]^-{\beta\times\id_{\V {X}}} &
 A \times \V {X}
}\end{equation}
where $\beta\colon \V {A} \to A$ is the vector space structure
on the field $A$, and $\pi_1, \pi_2$ denote left and right projection,
respectively. It is straightforward to verify that $\lambda$ is indeed
a distributive law.
Moreover, by working out the details one sees that
the $\V$-algebra (i.e.~vector space structure) induced on $\strms{A}$
by $\lambda$
coincides with the element-wise operations of scalar multiplication
and addition that are also defined by the SDEs. This way of obtaining
a distributive law easily generalises to any stream operation that is defined
element-wise from an operation on $A$.

A linear equation system is a map $e\colon X \to A \times \V{X}$,
and by \eqref{eq:solutions-bialgebra}, a unique solution always exists.
This solution method is essentially the same as linear coinduction.
Namely, for the $\lambda$ in \eqref{eq:linear-pointwise-dl},
$\lambda$-bialgebras are the same as linear automata.

\subsubsection{Context-free equation systems, revisited}

Now we assume that $A$ is a commutative semiring.
The behaviour functor is again the stream functor $F=A\times (-)$
and $\T$ is the polynomial monad $\M((-)^*)$
described at the end of Section~\ref{ssec:algebras-for-monad}.

In order to solve context-free systems using \eqref{eq:solutions-bialgebra},
we need a distributive law for $\M((-)^*)$.
Note, however, that we cannot simply replace $\V$ by $\M((-)^*)$ in
\eqref{eq:linear-pointwise-dl} above, since the desired algebraic structure
on $\strms{A}$ is not an element-wise extension, as in the linear case.
In particular, the convolution product of streams $\strms{A}$ is \emph{not} the
element-wise extension of the semiring product on $A$.
We therefore need a distributive law $\lambda$ of $\M((-)^*)$ over the
cofree copointed functor over $F$.
The existence of such a $\lambda$ is shown in \cite[Example~4.11]{BHKR:pres-DL}
by showing that the SDEs in \eqref{eq:sde-again} respect the semiring axioms,
as explained in Section~\ref{ssec:non-free-monads}.
It follows that
every context-free equation system $e\colon X \to A \times \M(X^*)$
has a unique stream solution.


\section{Discussion and Related Work}
\label{sec:rel-work}

\subsection{Other Specification Methods}

There exist many ways of representing streams, other than by 
stream differential equations. Among the classical methods in mathematics
are recurrence relations, generating functions and continued fractions.
In computer science, weighted automata are also often used 
(cf.~Section~\ref{ssec:linear-aut}).
As a basic and instructive example, we use the stream of  Fibonacci numbers
\[
\phi = (0,1,1,2,3,5,8,13, \ldots)
\]
to quickly illustrate a number of different stream representations.

We already saw a definition of $\phi$ by means of a stream differential equation  (cf. (\ref{eq:fib})):
\begin{equation}
\label{Fibonacci BDE}
\phi(0) = 0 \;\;\;\; \phi(1) = 1 \;\;\;\;\;\; \phi '' = \phi + \phi'
\end{equation}
A definition of $\phi$ by means of a \emph{recurrence relation} is the following:
\begin{equation}
\label{Fibonacci recurrence}
\phi(0) = 0 \;\;\;\; \phi(1) = 1 \;\;\;\;\;\; \phi(n+2) = \phi(n) + \phi(n+1)
\end{equation}
The following representation is called in mathematics a closed form \emph{generating function}:
\begin{equation}
\label{Fibonacci GF}
f(x) = \frac{x}{1-x-x^2}
\end{equation}
It corresponds to the rational expression $\frac{\X}{1-\X-\X^2}$,
which we already saw in (\ref{rational expression for Fibonacci}).
The expansion of $f(x)$  into $f(x) = x+ x^2 + 2x^3 + 3x^4 + 5x^5 + \cdots$ gives us the
Fibonacci numbers.
Finally, the value of the $n$th Fibonacci number can be read from this
weighted automaton (where a state is underlined if its output is 1, otherwise the output is 0)
\begin{equation}
\label{Fibonacci automaton}
\xymatrix
@C+1.5ex
@R-3ex
{
&&\\
\underline{s} \ar@/^1.5ex/[r]^-{1} 
\ar@(l,u)^-{1}
&
t \ar @/^1.5ex/[l]^-{1} 
}
\end{equation}
by counting the number of finite paths of length $n$ leading from the state $s$
back to the state $s$ again.

All is well with this basic example.
All four representations above (and still others)  are well-understood,
including the way to obtain one from the other (as we have seen in 
Section~\ref{sec:linear-specs}). 
But things
get much less clear very quickly.
Consider for instance the stream of factorial numbers $\psi = (0!,1!,2!,3!, \ldots)$.
A recurrence relation is again easily given:
\begin{equation}
\label{factorial recurrence}
\psi(0) =  1 \;\;\;\;\;\; \psi(n+1) = (n+1) \cdot \psi(n)
\end{equation}
but now look at the following stream differential equation, also defining $\psi$:
\begin{equation}
\label{factorial BDE}
\psi(0) =  1 \;\;\;\;\;\; \psi ' = \psi \otimes \psi
\end{equation}
where the righthand side uses the shuffle product
(defined in (\ref{SDE for shuffle})).
It is unclear how (\ref{factorial recurrence})
and (\ref{factorial BDE}) are related.
Furthermore, we know of no closed form generating function
for $\psi$ but then again, there is the following continued fraction:
\begin{equation}
\label{factorial continued fraction}
\psi = \,
\cfrac{1}{1-x - \cfrac{1^2x^2}{1 - 3x - \cfrac{2^2x^2}{1 - 5x - \cfrac{3^2x^2}{\ddots}}}}
\end{equation}
as well as the following representation of $\psi$ by means of an (infinite) weighted automaton
\begin{equation}
\label{factorial automaton}
\xymatrix
@C+1.5ex
@R-2.0ex
{
\underline{s}_0 \ar @/^1.5ex/[r]^-{1}
\ar@(ul,ur)^-1
&
s_1
\ar @/^1.5ex/[r]^-{2}
\ar @/^1.5ex/[l]^-{1}
\ar@(ul,ur)^-3
&
s_2
\ar @/^1.5ex/[r]^-{3}
\ar @/^1.5ex/[l]^-{2}
\ar@(ul,ur)^-5
&
\ar @/^1.5ex/[l]^-{3}
\cdots
}
\end{equation}
For this example, the relation between (\ref{factorial continued fraction})
and (\ref{factorial automaton}) is fairly direct but, more generally,
the relation between all  four different representations
(\ref{factorial recurrence})-(\ref{factorial automaton}) of the factorial numbers
is by no means well-understood, and serves as an illustration of an interesting class of
problems that need further study.


\subsection{Related Work}
We have given an overview of recent results on stream differential equations obtained via
a coalgebraic perspective.
In this subsection we will give pointers to the surveyed literature,
and a brief overview of some related work, which is bound to be incomplete.

\paragraph{\it Formal power series and automata theory}
Streams are formal power series in only one variable and as a consequence,
many of the properties of streams and  stream differential equations
presented here are ultimately special instances of more general facts about
formal power series. We mention \cite{BR11} as a fundamental reference on
formal power series in multiple noncommutative variables,
and refer to \cite{Winter:PhD} for an extensive discussion of the relationship between
the coalgebraic and the classical approaches to streams and formal power series.\enlargethispage{2\baselineskip}

\paragraph{\it Streams and coalgebra}
The coalgebraic treatment of streams, stream differential equations 
and stream calculus started with \cite{Rut01:MFPS-stream-calc,Rut03:TCS-bde}.
Section~\ref{sec:linear-specs} on linear specifications is based on work found 
in the just mentioned papers, as well as further investigations into 
rational streams and linear systems in \cite{Rut:rational,BBBRS:lwa}.
Section~\ref{sec:context-free-specs} on 
context-free specifications is based on
\cite{BRW:CF-pow,BRW:CF-coalg,WBR:CF-LMCS}.
Previously, context-free languages were studied coalgebraically 
in \cite{HJ:CFL}, but using a different approach, 
see \cite[sec.~1.1]{WBR:CF-LMCS} for a discussion.
Section~\ref{sec:non-standard} on non-standard specifications is based on work in \cite{KR:cocoop} and for automatic sequences on \cite{EGHKM,KR:autseq}.
Further work in this direction includes \cite{HKRW:k-regular} on $k$-regular sequences.

Other coalgebraic investigations into streams and stream functions 
include the following.
Specification formats and coalgebraic semantics (as Mealy machines) for
stream functions in 2-adic arithmetic have been 
studied in \cite{Rut05:FACS-short,HR:mealy-SACS}.
Causal stream functions generalise to continuous stream functions,
which have been characterised categorically in \cite{GHP:eating}.

\paragraph{\it Stream circuits}
Linear circuits (or signal flow graphs) are another representation of streams
(which we did not include in our survey). In \cite{Rut05:signal-flow}
it was shown that rational streams are exactly the streams that can be defined by closed linear circuits. An axiomatisation of rational streams in a fixed point calculus was given in \cite{Milius:-lin-streams-LICS}. Recently, the semantics of open linear circuits was given a coalgebraic and algebraic characterisation in \cite{BBHR:sfg}, which leads also to a complete axiomatisation in a calculus of commutative rings and modules.

\paragraph{\it Morphic and automatic sequences}
Yet another way of specifying streams which comes from the field of 
combinatorics on words is as a limit of a (monoid) morphism, see e.g.~\cite[Ch.~10]{Lothaire:AppCoW} and \cite[Ch.~7]{AS03}. A translation between morphic definitions and coinductive definitions was given in \cite[Sec.~2]{endr:hend:bodi:2013}. Coalgebraic characterisations of automatic and regular streams were given in \cite{EGHKM,KR:autseq,HKRW:k-regular}

\paragraph{\it Abstract GSOS}
Abstract GSOS originated as a categorical approach to
structural operational semantics \cite{Aceto:SOS-HB}.
The seminal paper on the topic is 
\cite{TuriPlotkin:LICS-GSOS}, and \cite{Klin11} provides 
an introductory overview, which also contains many examples for streams.
Other rich sources of general results on bialgebras and distributive laws
are 
\cite{Bartels:2003,Bartels:PhD,Klin09,LPW00:distr,LPW2004:cat-sos,Watanabe:cmcs2002}.
See also \cite{Jacobs:bialg-dfa-regex,Jacobs06} for
a bialgebraic treatment of formal languages and regular expressions,
and several other examples.
In \cite{HK:GSOS-buf}, it is shown that a stream GSOS definition $D$
can be transformed into a GSOS defintion $C$ for causal stream functions
that defines the pointwise extensions of the stream operations defined by $D$.

\paragraph{\it Functional programming}
Lazy functional programming languages, such as Haskell, allow programming 
on streams, and leads to many interesting examples and applications 
\cite{DoetsEijck:HR,Hinze11}.
Here it is also of interest to find methods of ensuring that a 
program operating on streams (or, more generally, on codata) is well-defined. 
Specification formats for codata in functional languages have been studied in,
e.g., \cite{Abel:ICFP13,Atcon13}.
Functional programming on non-wellfounded structures such as stream automata 
was studied in \cite{JKS:cocaml}.

\paragraph{\it Term rewriting}
Closely related to functional programming is the work on streams in term rewriting. In particular, the productivity of stream specifications given as term rewrite systems is studied in \cite{EGHIK:productivity-TCS,EH:lazy-productivity,Zantema10}.

\paragraph{\it Tools}
Several tools exist for specifying and reasoning about streams using 
stream differential equations. We mention just a few.
The rewriting-based tool \emph{CIRC~\cite{LucanuR07, LucanuGCR09, CIRC:wiki-webpage}} can check equivalence
of stream specifications (i.e., whether they define the same stream)
using circular coinduction.
The tool \emph{Streambox~\cite{ZantemaE11}} uses more general equational reasoning combined with circular coinduction to prove equivalence of stream specifications.
The Haskell-based tool
\emph{QStream~\cite{Winter13}} provides facilities for entering stream differential equations, and exploring streams together with interfacing with the OEIS~\cite{OEIS}.


\bibliographystyle{plain}
\bibliography{sde}


\end{document}